\theoremstyle{plain}
\newtheorem{theorem}{Theorem}[section]
\newtheorem{lemma}[theorem]{Lemma}
\newtheorem{corollary}[theorem]{Corollary}
\theoremstyle{definition}
\newtheorem{assumption}[theorem]{Assumption}
\theoremstyle{remark}
\newtheorem{remark}[theorem]{Remark}
\begin{document}

\title{PE-MA: Parameter-Efficient Co-Evolution of Multi-Agent Systems}

\author{Yingfan Deng,~\IEEEmembership{Student Member,~IEEE,} Anhao Zhou, ~\IEEEmembership{Student Member,~IEEE,} Yuan Yuan,~\IEEEmembership{Member,~IEEE,}  Xiao Zhang,~\IEEEmembership{Member,~IEEE,}  Yifei Zou,~\IEEEmembership{ Member,~IEEE,} Dongxiao Yu,~\IEEEmembership{Senior Member,~IEEE}
\thanks{Y.Deng, A.Zhou, X.Zhang, Y.Zou and D.Yu are with the School of Computer Science and Technology, Shandong University, Qingdao, 266237, China.\protect ~E-mail: \{dengyf,zhouah\}@mail.sdu.edu.cn; \{xiaozhang, yfzou, dxyu\}@sdu.edu.cn}\\
\thanks{Y.Yuan is with the School of Software $\&$ Joint SDU-NTU Centre for Artificial Intelligence Research (C-FAIR), Shandong University, Jinan, 250000, China. \protect ~E-mail: yyuan@sdu.edu.cn}
}



\maketitle
\begin{abstract}
Multi-Agent Systems have recently emerged as a promising paradigm for collaborative reasoning and solving complex tasks. However, the design of collaborative learning algorithms in multi-agent systems faces several challenges, including high communication overhead and insufficient agent-level personalization. In this paper, we propose \emph{PE-MA} (Parameter-Efficient Multi-Agent Co-Evolution), a novel collaboration framework that supports efficient, scalable, and personalized co-evolution in multi-agent systems. In \emph{PE-MA}, each agent maintains a lightweight personalized adapter to support agent-specific behavior, while a shared adapter is collaboratively optimized across neighboring agents. This design balances global coordination with local adaptation under heterogeneous environments. We achieve an asymptotically optimal convergence rate of $\mathcal{O}(\frac{1}{\sqrt{NK}})$, where $N$ is the number of agents and $K$ the local update steps. Experiments show that \emph{PE-MA} improves accuracy by 2\%–5\%, while reducing training and communication costs by 77\% and 87\%, respectively.
\end{abstract}

\begin{IEEEkeywords}
Multi-Agent System, Co-Evolution, dual-Adapters
\end{IEEEkeywords}

\section{Introduction}
\IEEEPARstart{I}{n} recent years, multi-agent systems (MAS) have gradually become an important research topic in the field of artificial
intelligence due to breakthroughs in natural language understanding and generation. These systems organize multiple agents with communication capabilities into collaborative frameworks, enhancing their abilities in task decomposition, role allocation, and collective reasoning. This drives the paradigm of solving complex tasks toward greater intelligence and autonomy.  
It has been widely applied in intelligent dialogue systems \cite{park2023generative,hong2023metagpt}, automated decision-making \cite{wang2023voyager}, robotic collaboration \cite{burgard2005coordinated,huang2022inner}, and virtual assistants \cite{yin2023lamm,shen2023hugginggpt},  demonstrating significant application potential.

Despite the powerful capabilities of multi-agent system, their large scale, high training costs, and opaque reasoning paths present a series of new challenges \cite{han2024llm} in multi-agent collaborative learning, as show in Fig.~\ref{fig:challenge}. 1) \textbf{High computation and communication costs:} Due to the vast number of parameters, the cost of knowledge sharing and model updating between individual agents is extremely high, making efficient federated or distributed collaborative optimization difficult. 2) \textbf{Lack of efficient co-evolutionary learning mechanisms:} Conventional multi-agent training paradigms predominantly rely on either fully independent optimization or uniform parameter sharing across agents, both of which fall short in enabling fine-grained and adaptive collaboration. 3) \textbf{Decision consistency:} The personalization of local tasks leads to inherent differences in the information and knowledge possessed by individual agents, creating information asymmetry that hinders coordinated decision-making and degrades overall collaborative efficiency \cite{gao2024large}.  \emph{Therefore, it is an urgent problem to design efficient multi-agent co-evolutionary learning algorithms that are tailored to local personalized tasks while enabling agents to evolve through dynamic and continuous knowledge sharing with related agents.}

Although recent studies have explored collaborative learning in multi-agent systems, existing approaches suffer from critical limitations in both communication efficiency and adaptability. Early communication-based methods (e.g., CommNet~\cite{sukhbaatar2016learning}, DIAL~\cite{foerster2016learning}) rely on dense feature sharing, which is impractical due to token-level overhead and lack of semantic abstraction. Later works (e.g., TarMAC \cite{das2019tarmac}, OPV2V~\cite{xu2022opv2v}) improve the overall performance of the system through effective communication mechanism design, but the limited and biased data sets may lead to model overfitting.
To reduce communication costs, knowledge distillation frameworks such as DiscoNet~\cite{li2021learning} and MKD-Cooper~\cite{li2023mkd} have been proposed. However, these models are typically static and do not support dynamic, context-aware integration of peer knowledge. Moreover, they often compress information excessively, leading to loss of semantic richness and adaptability in evolving environments.
While parameter-efficient fine-tuning (PEFT) methods such as LoRA and Adapter Tuning~\cite{hu2021lora,houlsby2019parameter} significantly reduce adaptation cost, they are mainly designed for single-agent or centralized settings and lack support for inter-agent coordination. In summary, a unified framework is still lacking for efficient co-evolution in multi-agent systems — particularly under heterogeneous and decentralized conditions.

\begin{figure}[h]
	\vskip 0.2in
	\begin{center}
		\includegraphics[width=0.8\columnwidth]{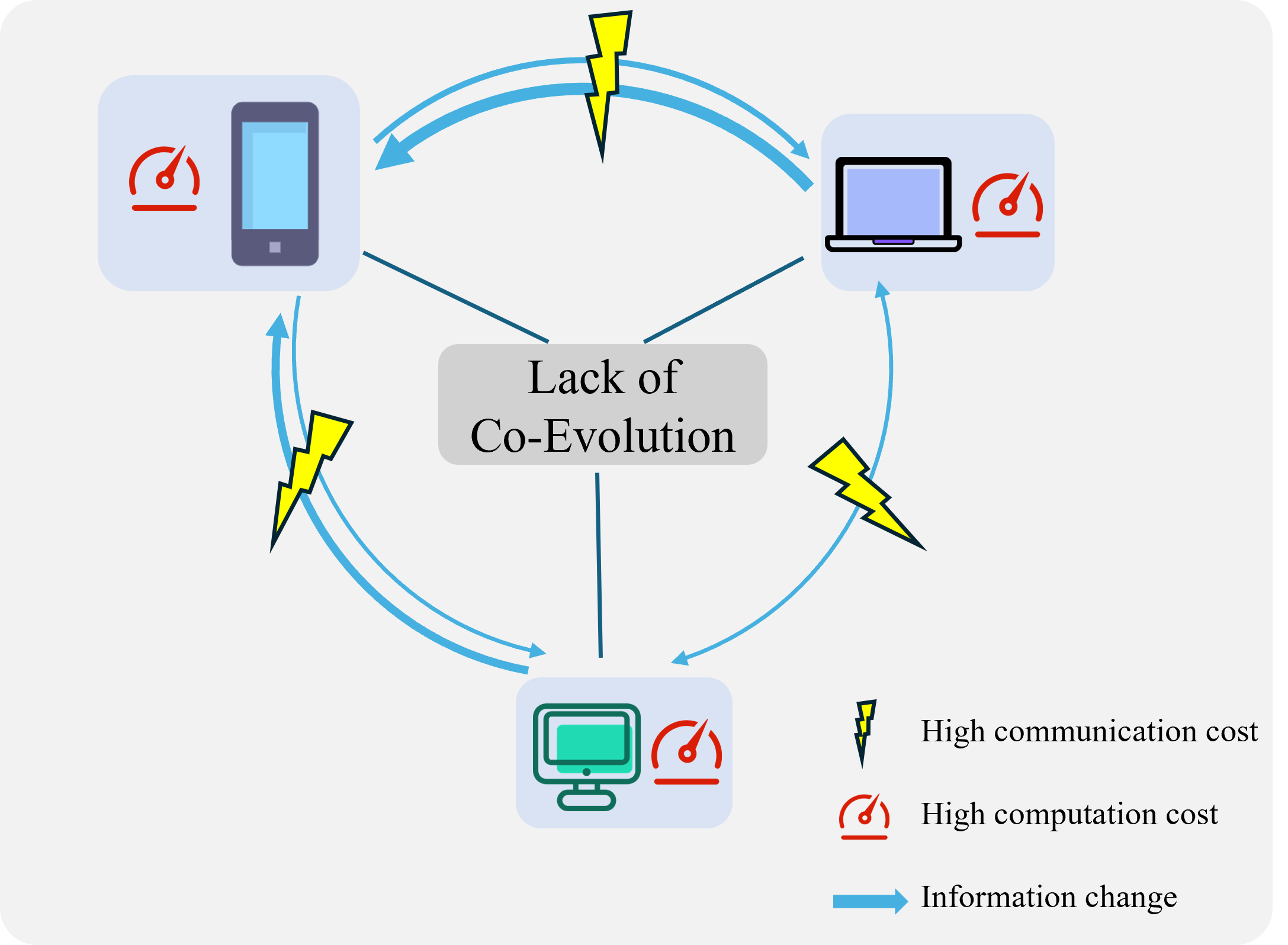}
		\caption{Challenges of existing multi-agent frameworks} 
		\label{fig:challenge}
	\end{center}
	\vskip -0.2in
\end{figure}

To address these challenges, we propose \emph{PE-MA}, a novel framework that enables parameter-efficient and personalized co-evolution in multi-agent systems. \emph{PE-MA} is specifically designed to reduce both communication and computation overhead while facilitating effective knowledge sharing and continual adaptation across heterogeneous agents. At the core of \emph{PE-MA} is a dual-adapter architecture, in which each agent maintains a lightweight personalized adapter to specialize the whole model to its private data and a shared  adapter to capture and exchange common knowledge with its agents. This design allows agents to evolve independently while simultaneously benefiting from semantically meaningful insights distilled from neighboring agents.
\emph{To ensure parameter efficiency}, each agent updates and transmits only lightweight adapters rather than full model parameters. This design significantly lowers the computational burden of fine-tuning and dramatically reduces bandwidth consumption during inter-agent communication. 
\emph{To support co-evolutionary learning}, the shared adapter is continuously updated via communication with other agents. Instead of static coordination, agents dynamically incorporate semantically rich knowledge from their peers during training. 
\emph{To design effective collaboration and knowledge sharing mechanisms}, we designed a doubly stochastic matrix $\mathbf{P}$. This ensures the system can stably converge to a consistent global state. Specifically, the doubly stochastic property of the weight matrix can ensure that agents achieve an average consensus model after sufficient communication rounds. Therefore, we summarize our contributions as follows:

\begin{itemize}
    \item We design a parameter-efficient dual-adapter framework in which each agent independently learns a lightweight personalized adapter while collaboratively optimizing a shared adapter through partial communication. 
    \item We demonstrate the effectiveness of the proposed algorithm through convergence analysis. Specifically, we prove that \emph{PE-MA} achieves an asymptotically optimal convergence rate of $\mathcal{O}(\frac{1}{\sqrt{NK}})$, where $N$ denotes the number of clients and $K$ is the global maximum number of iterations.
    \item To verify the effectiveness and communication efficiency of \emph{PE-MA}, we conducted classification task experiments on three datasets. Experimental results show that \emph{PE-MA} outperforms other multi-agent algorithms in a decentralized framework. In terms of accuracy, our best result is $2-5\%$ higher than other algorithms, and our training and communication cost are reduced by $82\pm5\%$ under the same conditions. 
\end{itemize}

\section{Related work}
\textbf{Decentralized multi-agent collaborative learning} 
In decentralized multi-agent learning, how to achieve effective cooperation between agents without relying on a central controller is a fundamental challenge. One of the key differences is whether to share raw observations early \cite{zhang2021emp,luo2023edgecooper} or to communicate perception results late \cite{miller2020cooperative}. These methods cannot trade off between communication efficiency and performance \cite{li2021prefix}. So early on, there were some works that focused on learning shared message representations from hidden features CommNet\cite{sukhbaatar2016learning} and DIAL \cite{foerster2016learning}), while later methods, such as TarMAC\cite{das2019tarmac}, introduced structured attention mechanisms and selective communication to convey more abstract and decision-related signals. Attfuse \cite{xu2022opv2v} introduced self-attention to aggregate intermediate features from different agents and release high-quality OPV2V datasets to mitigate the adverse effects of adaptive errors. Meanwhile, in order to preserve the advantages of early collaboration while reducing bandwidth, Disconet\cite{li2021learning} and MKD-Cooper \cite{li2023mkd} introduced knowledge distillation to guide the learning of intermediate collaborative models. These methods aim to strike a balance between rich information sharing and parameter efficiency.

\textbf{Parameter-Efficient Fine-Tuning} 
In recent years, Parameter-Efficient Fine-Tuning (PEFT) has emerged as a key paradigm to reduce the computational and storage cost of adapting large-scale models to downstream tasks. Instead of updating all model parameters, PEFT methods aim to modify only a small subset or add compact modules to the backbone, thereby preserving generalization while achieving high task-specific performance \cite{hu2021lora}. A notable branch of PEFT techniques is prompt-based tuning, where learnable tokens are prepended to the input sequence, allowing the model to steer its attention and predictions based on these prompts \cite{lester2021power,zhou2022learning}. Building on the idea of soft prompts, recent efforts have expanded this approach to vision-language settings by injecting prompt tokens into visual transformers \cite{jia2022visual,wang2023voyager}. Beyond input manipulation, other approaches insert lightweight modules, such as adapters, into intermediate Transformer layers to locally modulate feature representations without interfering with the backbone parameters \cite{houlsby2019parameter,zhang2022adaptive}. Further innovations target other architectural components like normalization layers attention heads \cite{mahabadi2021compacter} to enhance parameter efficiency. Despite these advancements, most PEFT strategies are developed under single-agent settings, limiting their applicability to multi-agent or distributed perception tasks. 

Based on the above research, we propose a dual-adapter architecture approach, where each agent maintains a private adapter to encode local task-specific knowledge during training, while sharing a shared adapter to capture shareable patterns. By transmitting only the shared adapter, we achieve a balance between personalization and generalization while minimizing communication overhead, facilitating efficient decentralized multi-agent learning.

\begin{figure}[t]
	\begin{center}
		\includegraphics[width=1.0\columnwidth]{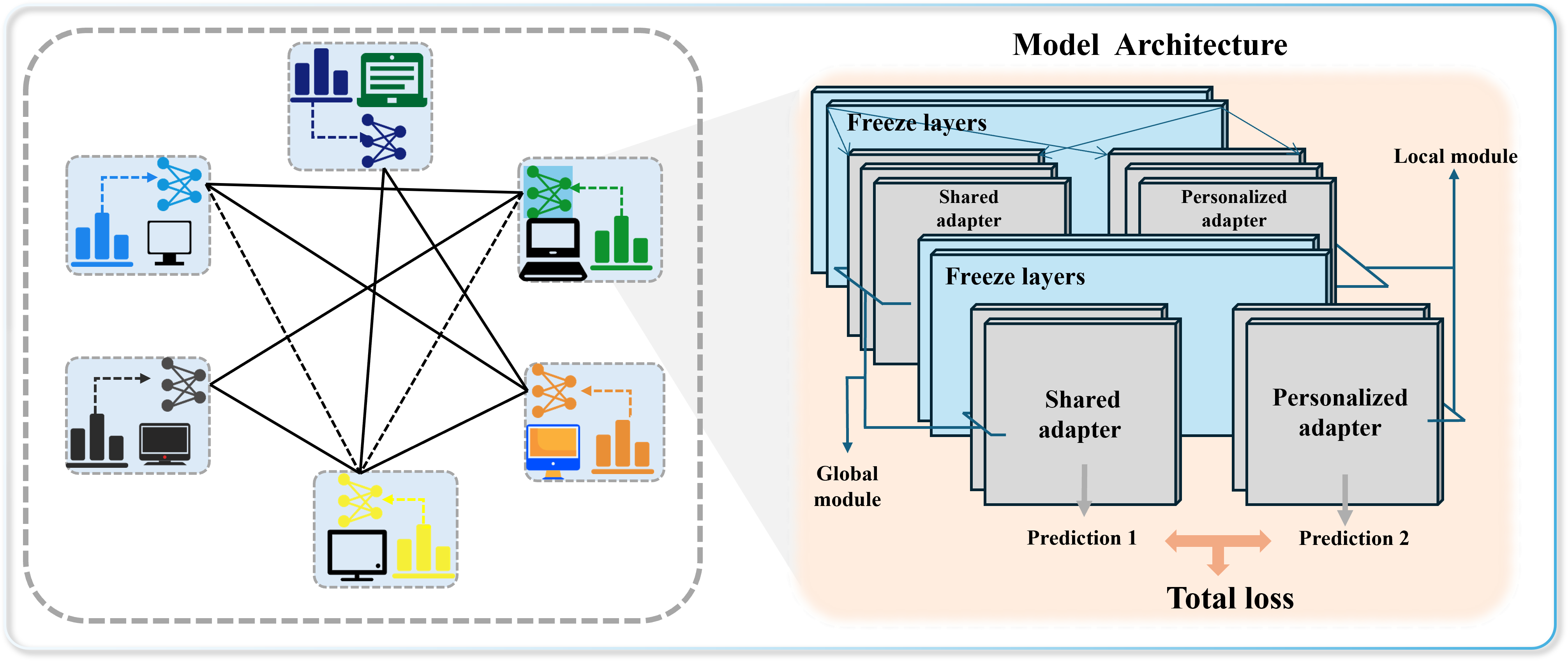}
	\caption{Multi-Agent Co-Evolution Architecture with Dual-adapter.} 
		\label{fig:overview}
	\end{center}
	\vskip -0.2in
\end{figure}

\section{Model and Problem Definition}
 In this section, we first give the model formulation, then we show the optimization objective.
\subsection{Model Formulation}
To facilitate modeling, we abstract the above system structure into a fundamental and general model framework.

In a Multi-Agent system, we consider multi-agent network composed of \( N \) agents, denoted as the set \( \mathcal{A} = \{a_1, a_2, \ldots, a_N\} \). Each agent \( a_i \) is equipped with a personalized agent model with a pre-trained module as its backbone, possessing perception, comprehension, and reasoning capabilities to handle local tasks. The system aims to enhance overall performance through efficient collaborative learning mechanisms while preserving individual adaptability.

\subsubsection{Agent Model Architecture}
Each agent \( a_i \) consists of the following key components, as show in Fig.~\ref{fig:overview}:
\begin{itemize}
    \item {Frozen Backbone $\pmb{\theta}_\text{freeze}$}: A fixed parameter general module for feature extraction of data. 
    \item {Local Personalized Adapter \( \mathbf{v}_i \)}: A small trainable module adapting to local personalized tasks and data distributions, trained locally without communication.
    \item {Shared Adapter  \( \mathbf{w}_i \)}: A multi-head module for cross-agent knowledge sharing, enabling parameter exchange and aggregation with neighboring agents.
\end{itemize}

The complete model for agent \( a_i \) is expressed as:
\[
f_i(x; \pmb{\theta}_\text{freeze}, \mathbf{w}_i, \mathbf{v}_i) 
\]
where $\pmb{\theta}_\text{freeze}$ represents the freezing parameters of the pre-trained module, $\mathbf{w}_i$ represents parameters of shared adapter, $\mathbf{v}_i$ is the parameters of personalized adapter class.

\subsubsection{Collaboration and Communication Mechanism}
Agent interactions are modeled via a graph structure \( G = (\mathcal{A}, \mathcal{E}) \), where the edge set \( \mathcal{E} \subseteq \mathcal{A} \times \mathcal{A} \) defines communication relationships. During each communication round, agent \( a_i \) transmits its shared adapter \( \mathbf{w}_i\) to neighbors and updates through weighted averaging:
\begin{align}
\mathbf{w}_i \leftarrow \sum_{j \in \mathcal{N}_i} P_{ij} \mathbf{w}_j
\end{align}
where \( \mathcal{N}_i \) is the neighbor set of agent \( i \), and \( P_{ij} \) corresponds to the doubly stochastic communication weight matrix \( \mathbf{P} \). For the weight matrix $\mathbf{P}$, $P_{ij} > 0$ if $j \in \mathcal{N}_i$ or $P_{ij}=0$. And $\mathbf{P}^\top=\mathbf{P},\mathbf{P1}=\mathbf{1}$. 

\subsection{Optimization Objective}
Each agent possesses a local dataset \( D_i \sim P_i \), where the distribution \( P_i \) may exhibit significant heterogeneity across tasks or environments. Let \( (x, y) \in D_i \), the system optimizes the balance between local personalization and global consistency via:
\begin{align}
   \min_{\mathbf{w}, \mathbf{v}_i} L(\mathbf{w},\{\mathbf{v}_i\}_{i=1}^N)=\frac{1}{N}\sum_{i=1}^N L_i(\pmb{\theta}_\text{freeze}, \mathbf{w}_i, \mathbf{v}_i) \\
   L_i(\pmb{\theta}_\text{freeze}, \mathbf{w}_i, \mathbf{v}_i)=\mathbb{E}_{(x,y) \sim D_i} \left[ \ell _i\left(f_i(x; \pmb{\theta}_\text{freeze}, \mathbf{w}_i, \mathbf{v}_i), y\right) \right]
\end{align}
where \( \ell_i \) is the task-specific loss function of agent $a_i$ (e.g., cross-entropy or mean squared error), $L_i$ is the local optimization objective of agent $a_i$ and ${L}$ is the global optimization objective.

\section{Algorithm}

This section briefly summarizes the entire algorithm. In the local training phase, the parameters of the freeze layer are fixed, and the two adapters are trained and updated at the same time to ensure that the information is retained locally and used for communication in subsequent operations. In the communication phase, we take into account both communication efficiency and co-evolution, and only the \emph{shared adapter} is communicated between agents. Through mutual promotion between agents, a decentralized collaborative learning framework \emph{PE-MA} based on a dual adapter design is implemented.

\subsection{Dual-adapter Module}
In multi-agent learning, there are two main challenges. First, each agent may have unique personalized requirements that necessitate custom modeling for local tasks. Second, it is essential to share and transfer general global knowledge among agents to enhance the model’s ability to generalize. To address these challenges, the dual-adapter module aims to strike a balance between local personalization and global knowledge sharing. This module comprises:
a \emph{self-improved module} addressing agent-specific learning needs, and a \emph{co-evolution module} facilitating cross-agent knowledge transfer, as shown in Fig.~\ref{fig:dualada}. 
This design allows each agent to retain its own personalized knowledge while learning from the cognitive knowledge of other agents, thereby facilitating co-evolution among agents and leading to enhanced overall learning performance.
\begin{figure}[h]
	\begin{center}
		\includegraphics[width=1.0\columnwidth]{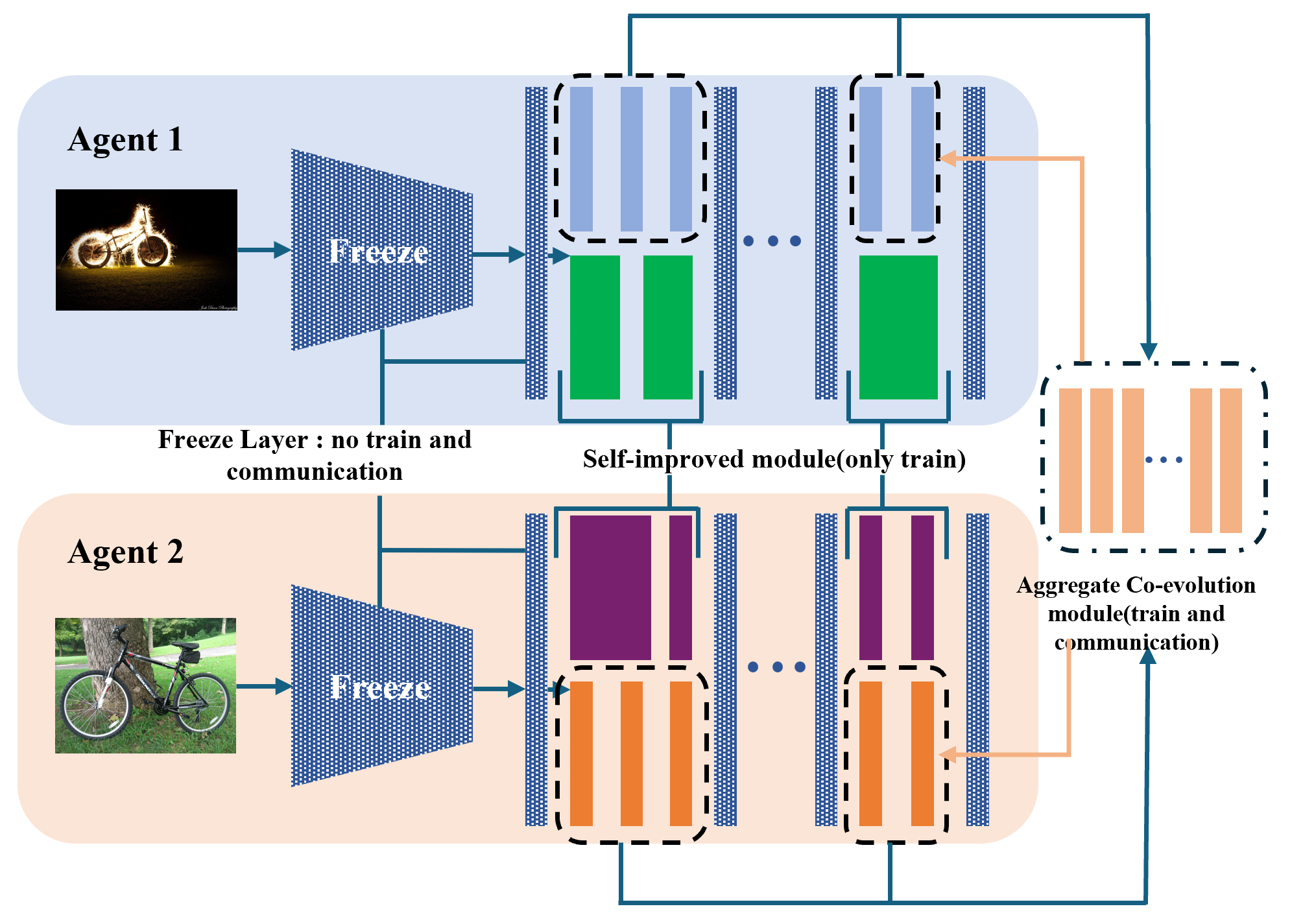}
		\caption{A brief overview of the dual-adapter module}
		\label{fig:dualada}
	\end{center}
	\vskip -0.2in
\end{figure}

\textbf{Self-improved Module:} The personalized adapter $\mathbf{v}_i $ is designed to capture agent-specific features by inserting multiple adapter modules into the backbone. These modules help each agent extract and retain distinctive patterns in its representations. By placing adapters at different layers (e.g., within residual blocks), the architecture enables multi-level control over the information flow, enhancing adaptability to diverse tasks and environments.
Its modular design allows flexible adjustment based on each agent’s needs, enabling effective personalized training. Importantly, the personalized adapter is only used locally during training and inference and is not shared across agents, ensuring strong personalization while avoiding information leakage or interference.

\textbf{Co-Evolution Module:} 
The shared adapter enables global knowledge transfer among agents. It functions as a lightweight, exchangeable module embedded in key model layers and operates in parallel with the personalized adapter. By sharing only a low-rank parameters $\mathbf{w}_i$, agents can efficiently communicate and integrate global patterns without altering their backbone networks. This design supports scalable and collaborative learning with minimal communication overhead.

\subsection{Collaborative Training with Dual-adapter}
In decentralized multi-agent systems with heterogeneous data, it is crucial to enable both local personalization and cross-agent generalization. Each agent has access only to its own private data and operates without centralized coordination, making the design of effective collaboration mechanisms essential.
To this end, we propose a collaborative training strategy based on \textbf{dual-adapter}, where each agent simultaneously optimizes a \emph{personalized adapter} for local adaptation and a \emph{shared adapter} for inter-agent knowledge sharing. The training process alternates between local model updates and communication-based aggregation, aiming to improve overall learning performance while preserving local specialization.

\paragraph{Computation of the Loss Function.}
During each communication round, agents first perform local training based on their private datasets. Instead of optimizing the local and global modules separately, we adopt a joint training strategy that coordinates their behavior via a mixing coefficient \( \mu \). Given an input sample \( (x, y) \), agent \( i \) computes two predictions using its personalized adapter \( v_i \) and shared adapter \( w_i \), respectively:
\begin{align}
    \hat{y}_1 &= f_i(x; \pmb{\theta}_{\text{freeze}}, \mathbf{w}_i), \\
    \hat{y}_2 &= f_i(x; \pmb{\theta}_{\text{freeze}}, \mathbf{v}_i),
\end{align}
where \( \pmb{\theta}_{\text{freeze}} \) denotes the frozen backbone shared across all agents. These outputs are then linearly combined:
\begin{align}
    \hat{y} = \mu \cdot \hat{y}_2 + (1 - \mu) \cdot \hat{y}_1,
\end{align}
and the overall loss is computed as:
\begin{align}
    {L}_i = \ell_i(\hat{y}, y),
\end{align}
where \( \ell_i(\cdot) \) is the standard prediction loss of agent $a_i$ (e.g., cross-entropy). The hyperparameter \( \mu \in [0,1] \) controls the relative contribution of the personalized and shared adapters. This coordinated prediction strategy enhances the expressiveness of local preferences while incorporating generalized knowledge from other agents.

\paragraph{Adapter Parameter Update.}
After loss computation, both adapter modules are updated via gradient descent:
\begin{align}
    \mathbf{w}_{i}(t+\frac{1}{2}) &= \mathbf{w}_{i}(t) - \eta_w \nabla_{\mathbf{w}} L_i, \\
    \mathbf{v}_{i}(t,q+1) &= \mathbf{v}_{i}(t,q) - \eta_v \nabla_{\mathbf{v}} L_i,
\end{align}
where \( \eta_\mathbf{w} \) and \( \eta_\mathbf{v} \) denote learning rates for the global and personalized adapters, respectively. During algorithm training, each agent trains on its private dataset for $\tau$ epochs within each communication round \( t \). We take the shared adapter as an example, where $g_{\mathbf{w}}(\mathbf{w}_{i,q},\mathbf{v}_{i,q})$ is the stochastic gradient estimate of the shared adapter:
\begin{align}
    \mathbf{w}_{i}(t+\frac{1}{2}) \leftarrow \mathbf{w}_{i}(t) - \eta_w g_{\mathbf{w}}(\mathbf{w}_{i}(t),\mathbf{v}_{i}(t,\tau))
\end{align}
\begin{align}\label{eq:local-subgradient1}
&g_{\mathbf{w}}(\mathbf{w}_i(t), \mathbf{v}_i(t,\tau))\nonumber\\
&\!\!\!\!:=\!\frac{1}{|C_i(k,s)|} \!\sum\limits_{(x_m, y_m)\in C_i(k,s)} \!\!\!\!\!\!\!\!\!\!\!\nabla_{\mathbf{w}} L_i(\mathbf{w}_i(t), \mathbf{v}_i(t,\tau), x_m, y_m),
\end{align}
where $\mathcal{C}_i(k,s)$ is a random subset of $\mathcal{D}_i$.  We allow each worker to perform $\tau$ steps local updates to find the optimal self-improved module based on its local data. This allows the model to leverage both personalized features and shared features. 

\paragraph{Adapter Parameter sharing.} Upon completion of local training, agents enter the communication phase. Each agent broadcasts its updated shared adapter \( w_i \) to neighboring agents \( \mathcal{N}(i) \), and performs weighted averaging to obtain the new shared adapter:
\begin{align}
    \mathbf{w}_i(t+1) \leftarrow \frac{1}{|\mathcal{N}(i)|} \sum_{j \in \mathcal{N}(i)} P_{ij} \cdot \mathbf{w}_j(t+\frac{1}{2}),
\end{align}
where \( P_{ij} \) denotes the weight assigned to neighbor \( j \), which can be set based on data similarity, communication topology, or uniformly. Importantly, the specific adapters \( v_i \) are not shared, preserving the personalized behavior of each agent while enabling efficient global collaboration through the shared adapter.

This alternating local update and communication process ensures that each agent benefits from shared knowledge while retaining the capacity to adapt to its unique data distribution.


\begin{algorithm}[t] 
\caption{PE-MA:Nested update version}
\label{alg:PE-MA1}
\begin{algorithmic}[1]
\STATE \textbf{Initialize:} $N$ agents; frozen pretrained model parameters $\pmb{\theta}_{\text{freeze}}$; initial shared adapter $w^0$; specific adapters $\{v_i^0\}_{i=1}^N$; local training epochs $\tau$; local datasets $\{\mathbb{D}_i\}_{i=1}^N$; mixing coefficient $\mu$; learning rates $\eta_w$, $\eta_v$
\STATE \textbf{Output:} $\{v_i^T\}_{i=1}^{N}, \{w_i^T\}_{i=1}^{N}$
\FOR{each round $t = 0, 1, \dots, K-1$}
    \STATE \textbf{// Local Training Phase}
    \FOR{each agent $i = 1, 2, \dots, N$}
        \FOR{each local epoch $q = 0, 1, \dots, \tau-1$}
            \FOR{each sample $(x, y) \in \mathbb{D}_i$}
            \STATE $\hat{y}_1 \leftarrow f_i(x; \pmb{\theta}_{\text{freeze}}, w_i(t))$ \hfill \textit{// via shared-adapter}
            \STATE $\hat{y}_2 \leftarrow f_i(x; \pmb{\theta}_{\text{freeze}}, v_i(t,q))$ \hfill \textit{// via personalized-adapter}
            \STATE $L_i(x,y) \leftarrow \ell_i\left(\mu \cdot \hat{y}_2 + (1-\mu) \cdot \hat{y}_1, y\right)$
        \ENDFOR
            \STATE Update personalized adapter: $\mathbf{v}_{i}(t,q+1) \leftarrow \mathbf{v}_{i}(t, q) - \eta_v g_{\mathbf{v}}(\mathbf{w}_{i}(t), \mathbf{v}_{i}(t,q))$
        \ENDFOR
            \STATE Update shared adapter: $\mathbf{w}_{i}(t+\frac{1}{2}) \leftarrow \mathbf{w}_{i}(t) - \eta_w g_{\mathbf{w}}(\mathbf{w}_{i}(t), \mathbf{v}_{i}(t,\tau))$
            \STATE Set $v_i(t+1,0) = v_i(t, \tau).$ 
    \ENDFOR
    \STATE \textbf{// Communication Phase}
    \FOR{each agent $i = 1, 2, \dots, N$}
        \STATE Broadcast $\mathbf{w}_{i}(t + \frac{1}{2})$ to neighbors $\mathcal{N}_i$
        \STATE Aggregate shared adapter: $\mathbf{w}_{i}(t+1) \leftarrow  \sum_{j \in \mathcal{N}_i} \mathbf{w}_{j}(t+\frac{1}{2})P_{ij}$
    \ENDFOR
\ENDFOR
\end{algorithmic}
\end{algorithm}

\begin{table*}[h]
\caption{Performance of different methods on three datasets}
\centering
\renewcommand{\arraystretch}{1.4}
\scalebox{1}{
\begin{tabular}{ccccccccccc}
\hline
\multicolumn{2}{c}{\textbf{Dataset}}&\multicolumn{3}{c}{Office-Home(150 epochs)}&\multicolumn{3}{c}{Office-Caltech10(30 epochs)}&\multicolumn{3}{c}{DomainNet(80 epochs)} \\
\multicolumn{2}{c}{Frame / Method / Topology}&FC$_{(\%)}$&ER$_{(\%)}$&Ring$_{(\%)}$&FC$_{(\%)}$&ER$_{(\%)}$&Ring$_{(\%)}$&FC$_{(\%)}$&ER$_{(\%)}$&Ring$_{(\%)}$\\
\hline
\multirow{1}{*}{Indep} &Full model & 80.53&80.53&80.53&97.09
&97.09&97.09&60.13&60.13&60.13\\
\hline
\multirow{3}{*}{DSGD\_SIM}&Input layer&81.54&81.33&77.18&97.29&97.18&97.18&60.02&60.18&60.07\\
&Out layer&81.88&81.68&81.76&96.24&96.12&96.11&62.97&61.95&62.64\\
&Adapter&79.32&79.37&79.04&97.26&97.15&97.26&63.38&63.51&63.57\\
\hline
\multirow{3}{*}{DSGD\_ALT}&Input layer&81.42&81.17&81.39&97.18&97.29&97.09&60.15&60.15&60.35\\
&Out layer&81.55&81.35&81.89&95.99&95.75&96.11&62.64&61.38&61.90\\
&Adapter&80.58&78.79&79.11&96.59&96.68&96.24&62.62&61.35&62.84\\
\hline
\multirow{1}{*}{\textbf{PE-MA}} &Dual Adapter&\textbf{82.18}&\textbf{82.32}&\textbf{82.29}&\textbf{97.46}&\textbf{97.68}&\textbf{97.79}&\textbf{64.08}& \textbf{64.33}& \textbf{64.47}\\
\hline
\end{tabular}}
\label{tab:main-results}
\end{table*}

\begin{table*}[h]
\caption{The communication and training parameter efficiency of different client-side personalization methods}
\centering
\renewcommand{\arraystretch}{1.4}
\scalebox{1}{
\begin{tabular}{ccccccccccc}
\hline
\multicolumn{2}{c}{\textbf{Dataset}}&\multicolumn{2}{c}{Office-Home}&\multicolumn{2}{c}{Office-Caltech10}&\multicolumn{2}{c}{DomainNet} \\
\multicolumn{2}{c}{Frame / Method / Topology}&Comm 
Params&Train Params&Comm Params&Train Params &Comm Params&Train Params\\
\hline
\multirow{1}{*}{Indep} &ResNet18 &0M&11.21M&0M&11.18M&0M&11.35M\\
\hline
\multirow{3}{*}{DSGD\_SIM}&Input layer&6.48M&11.21M&6.45M&11.18M&6.62M&11.35M\\
&Out layer&11.17M&11.21M&11.17M&11.18M&11.17M&11.35M\\
&Adapter & 11.21M & 12.61M& 11.17M & 12.58M& 11.16M & 12.75M\\
\hline
\multirow{3}{*}{DSGD\_ALT}&Input layer&6.48M&11.21M&6.45M&11.18M&6.62M&11.35M\\
&Out layer&11.17M&11.21M&11.17M&11.18M&11.17M&11.35M\\
&Adapter & 11.21M & 12.61M& 11.17M & 12.58M& 11.16M & 12.75M\\
\hline
\multirow{1}{*}{\textbf{PE-MA}} &Dual Adapter&\textbf{1.44M}&\textbf{2.88M}&\textbf{1.41M}&\textbf{2.82M}&\textbf{1.59M}&\textbf{3.18M}\\
\hline
\end{tabular}}
\label{tab:para}
\end{table*}

\section{Convergence analysis}
First, we denote $\bar{\mathbf{w}}(t):=\frac{1}{N}\sum_{i=1}^N \mathbf{w}_i(t)$ as the consensus global representation, and for the gradient of the global loss function $L$ with respect to each variable $\mathbf{w}$ and $\textbf{v}$, we have: 
\begin{align}\label{eq:partial}
&\nabla_{\mathbf{w}}L(\bar{\mathbf{w}}(t),\{\textbf{v}_i(t)\}_{i=1}^N):=\frac{1}{N}\sum_{i=1}^N\nabla_{\mathbf{w}}L_i(\bar{\mathbf{w}}(t),\!\textbf{v}_i(t)),\nonumber\allowdisplaybreaks\\
&\nabla_{\textbf{v}}L(\bar{\mathbf{w}}(t),\{\textbf{v}_i(t)\}_{i=1}^N)\!:=\!\frac{1}{N}\!\sum_{i=1}^N \nabla_{\textbf{v}}L_i(\bar{\mathbf{w}}(t),\textbf{v}_i(t)), \nonumber
\end{align}
Then we make some basic assumptions that need to be used to prove convergence.

\begin{assumption}[$L$-Lipschitz Continuous Gradient]\label{assumption-lipschitz}
There exists a constant $L>0$, such that 
$\|\nabla_{\mathbf{w}} L_i(\mathbf{w}, \mathbf{v})-\nabla_{\mathbf{w}} L_i(\mathbf{w}^\prime, \mathbf{v})\| \leq L\|\mathbf{w}-\mathbf{w}^\prime\|$ and
$\|\nabla_{\textbf{v}} L_i(\mathbf{w},\textbf{v})-\nabla_{\textbf{v}} L_i(\mathbf{w},\textbf{v}^\prime)\| \leq L\|\textbf{v}-\textbf{v}^\prime\|, \forall \mathbf{w}, \mathbf{w}^\prime\in{\Phi}, \textbf{v}, \textbf{v}^\prime\in{\Theta}.$
\end{assumption}

\begin{assumption}[Unbiased Local Gradient Estimator]\label{assumption-gradient}
The local gradient estimators are unbiased, i.e., $\forall \mathbf{w}, \mathbf{w}^\prime\in{\Phi}, \textbf{v}, \textbf{v}^\prime\in{\Theta}.$ $\forall i\in[N],$ 
    $\mathbb{E}[g_{\mathbf{w}}(\mathbf{w}_i,\mathbf{v}_i)]=\nabla_{\mathbf{w}} L_i(\mathbf{w}_i,\mathbf{v}_i),~
    \mathbb{E}[g_{\mathbf{v}}(\mathbf{w}_i,\mathbf{v}_i)]=\nabla_{\mathbf{v}} L_i(\mathbf{w}_i,\mathbf{v}_i)$
with the expectation being taken over the local data samples.
\end{assumption}

\begin{assumption}[Bounded Variance]\label{assumption-variance}
{There exists a constant $\sigma>0$ such that the variance of each local gradient estimator is bounded}, i.e.,  $\forall \mathbf{w}, \mathbf{w}^\prime\in{\Phi}, \textbf{v}, \textbf{v}^\prime\in{\Theta}.$ $\forall i\in[N],$ 
    $\mathbb{E}[\|g_{\mathbf{w}}(\mathbf{w}_i,\mathbf{v}_i)-\nabla_{\mathbf{w}} L_i(\mathbf{w}_i,\mathbf{v}_i)\|^2]\leq \sigma_1^2, ~
    \mathbb{E}[\|g_{\mathbf{v}}(\mathbf{w}_i,\mathbf{v}_i)-\nabla_{\mathbf{v}} L_i(\mathbf{w}_i,\mathbf{v}_i)\|^2]\leq \sigma_2^2.$
\end{assumption}

\begin{assumption}[Bounded Global Variability]\label{assumption:global-var}
 There exists a constant  $\varsigma>0$ such that the global variability of the local partial gradients on $\mathbf{w}$ of the loss function $\forall \mathbf{v}_i\in\Theta$ is bounded, i.e., 
   $\frac{1}{N}\sum_{i=1}^N \mathbb{E}[\|\nabla_{\mathbf{w}} L_i(\mathbf{w},\mathbf{v}_i)-\nabla_{\mathbf{w}} L(\mathbf{w},\{\mathbf{v}_i\}_{i=1}^N)\|^2]\leq\varsigma^2.$
\end{assumption}

We will show that the following expression is bounded above by $\epsilon$. This demonstrates that the composite term remains bounded, indicating that our algorithm converges under certain conditions and eventually stabilizes within a constant range:
\begin{equation}\label{eq:lyapunov-function3}
\begin{aligned}
M(t)&:=\underset{\text{consensus error of global representation}~\mathbf{w}}{\underbrace{\frac{1}{N}\sum_{i=1}^N\|\mathbf{w}_i(t)-\bar{\mathbf{w}}(t)\|^2}}\nonumber\\
+&\underset{\text{partial gradient w.r.t.}~\mathbf{w}~\text{of global loss function}}{\underbrace{\left\|\nabla_{\mathbf{w}}L(\bar{\mathbf{w}}(t),\{\mathbf{v}_i(t+1)\}_{i=1}^N)\right\|^2}}\nonumber\\
+&\frac{\alpha\tau}{\beta}\!\!\!\!\!\underset{\text{partial gradient w.r.t.}~\mathbf{v}~\text{of global loss function}}{\underbrace{\left\|\nabla_{\mathbf{v}}L(\bar{\mathbf{w}}(t),\{\mathbf{v}_i(t)\}_{i=1}^N)\right\|^2}}\nonumber
\end{aligned}
\end{equation}

The first term in Eq.~\ref{eq:lyapunov-function3} measures the average error of the learned global knowledge representation from the perspective of the shared adapter of each agent $a_i$. The last two terms represent the performance of our model. We use different learning rates $\eta_w$ and $\eta_v$ for the shared adapter and the personalized adapter. We are inspired by finite-time analysis of two-timescale stochastic approximation \cite{borkar2008stochastic} to consider the global loss of weighted partial gradients \cite{xiong2024deprl}.

Our proof is based on Alg.~\ref{alg:PE-MA1}, where the local update alternates between updating the personalized adapter and the shared adapter.
Meanwhile, Alg.~\ref{alg:PE-MA2}, as a variant of Alg.~\ref{alg:PE-MA1}, simultaneously updates both the personalized adapter and the shared adapter in the same round, which has been mentioned in the Appendix. Based on this, we derive the following theoretical results:

\begin{theorem}\label{thm:loss_convergence}
Under Assumptions \ref{assumption-lipschitz}-\ref{assumption:global-var}, we choose learning rates satisfying $\alpha\leq \frac{1+36\tau^2}{\tau L}$ and $\beta\leq \min\left(1/L,N/2,\frac{1-q}{3\sqrt{2}CLN}\right)$, where $C:=\frac{2(1+p^{-N})}{1-p^{N}}$, $q:=(1-p^{N})^{1/N}$ and $p=\arg\min P_{ij}, \forall i,j, P_{ij}>0.$ 
 Denote the optimal parameters of shared adapter and personalized adapter as $\mathbf{w}^*$ and $\{\mathbf{v}_i^*\}_{i=1}^N$, respectively.  The sequence of parameters  $\{ \{\mathbf{w}_i(k)\}_{i=1}^N,  \{\mathbf{v}_i(k)\}_{i=1}^N,\forall k\}$ generated by \emph{PE-MA} satisfy
\begin{align}\label{eq:thm_loss}
&\frac{1}{K}\!\sum\limits_{k=0}^{K-1}\mathbb{E}[M(t)]\nonumber\displaybreak[1]\\
&\leq\frac{4L(\bar{\mathbf{w}}(0),\!\{\mathbf{v}_i(0)\}_{i=1}^N)\!-\!4 L({\mathbf{w}^*},\!\{\mathbf{v}_i^*\}_{i=1}^N)}{K\eta_\mathbf{w}}\nonumber\displaybreak[1]\\
&+\frac{2\eta_\mathbf{w} L}{N}\sigma_1^2+\frac{12\eta_\mathbf{v}^3L^2\tau}{\eta_\mathbf{w}}(\tau-1)(6\tau-1)\sigma_1^2+\frac{2\eta_\mathbf{v}^2\tau L}{\eta_\mathbf{w}}\sigma_2^2\nonumber\displaybreak[1]\\
&+\frac{2\eta_\mathbf{w}}{3 N}\!\left(\!1\!+\!\frac{1}{L^2}\!\right)\!\sigma_1^2\!+\!\frac{2\eta_\mathbf{w}}{N}\!\left(\!1\!+\!\frac{1}{L^2}\!\right)\!\varsigma^2.
\end{align}
\end{theorem}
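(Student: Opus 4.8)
The natural approach is a descent-lemma argument on the averaged iterate, split across the two timescales. First I would use the doubly stochastic property of $\mathbf{P}$ to write the consensus recursion for $\mathbf{w}$: since $\mathbf{P}\mathbf{1}=\mathbf{1}$, averaging the aggregation step gives $\bar{\mathbf{w}}(t+1)=\bar{\mathbf{w}}(t)-\eta_\mathbf{w}\frac{1}{N}\sum_i g_\mathbf{w}(\mathbf{w}_i(t),\mathbf{v}_i(t,\tau))$, so $\bar{\mathbf{w}}$ follows an (almost) exact SGD recursion on the global loss. Applying the $L$-smoothness assumption (Assumption~\ref{assumption-lipschitz}) to $L$ as a function of $\mathbf{w}$ at the points $\bar{\mathbf{w}}(t),\bar{\mathbf{w}}(t+1)$, taking expectations, and using unbiasedness (Assumption~\ref{assumption-gradient}) yields a one-step inequality of the form
\begin{align}
\mathbb{E}[L(\bar{\mathbf{w}}(t+1),\{\mathbf{v}_i(t+1)\})]\leq \mathbb{E}[L(\bar{\mathbf{w}}(t),\{\mathbf{v}_i(t+1)\})]-\tfrac{\eta_\mathbf{w}}{2}\mathbb{E}\|\nabla_\mathbf{w}L\|^2+\text{(error terms)}.\nonumber
\end{align}
The error terms here involve the gradient-variance $\sigma_1^2/N$ (from the mini-batch noise, averaged over $N$ agents), the consensus error $\frac{1}{N}\sum_i\|\mathbf{w}_i(t)-\bar{\mathbf{w}}(t)\|^2$ (because each agent evaluates its gradient at $\mathbf{w}_i$, not $\bar{\mathbf{w}}$, and we invoke $L$-smoothness to bridge the gap), and a drift term coming from the $\tau$ inner steps on $\mathbf{v}$ that changes $\{\mathbf{v}_i(t)\}\to\{\mathbf{v}_i(t+1)\}$ between consecutive rounds.

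Second, I would bound each of these error terms separately. For the consensus error, the standard decentralized-SGD lemma applies: iterating the mixing with spectral gap controlled by $p=\min_{P_{ij}>0}P_{ij}$ gives a geometric contraction with rate governed by $q=(1-p^N)^{1/N}$ and constant $C=\frac{2(1+p^{-N})}{1-p^N}$, leading to a bound of order $\eta_\mathbf{w}^2 C^2$ times (gradient-norm-squared $+$ variance $+$ global-variability $\varsigma^2$, via Assumption~\ref{assumption:global-var}); the stated constraint $\beta\leq\frac{1-q}{3\sqrt{2}CLN}$ is exactly what makes this term absorbable into the $-\frac{\eta_\mathbf{w}}{2}\|\nabla_\mathbf{w}L\|^2$ descent term. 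For the $\mathbf{v}$-drift and the personalized-adapter contribution, I would run a parallel descent analysis on $L$ as a function of $\{\mathbf{v}_i\}$ over the $\tau$ inner SGD steps: $L$-smoothness plus bounded variance $\sigma_2^2$ gives the $\frac{\eta_\mathbf{v}^2\tau L}{\eta_\mathbf{w}}\sigma_2^2$ term and, after bounding how far $\mathbf{v}_i(t,q)$ drifts from $\mathbf{v}_i(t,0)$ within a round, the $\frac{\eta_\mathbf{v}^3 L^2\tau}{\eta_\mathbf{w}}(\tau-1)(6\tau-1)\sigma_1^2$ cross term (the $\sigma_1^2$ there presumably tracks how the shared-adapter noise at the start of the round propagates through the coupled update). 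Collecting the $\mathbf{v}$-gradient-norm terms is where the $\frac{\alpha\tau}{\beta}\|\nabla_\mathbf{v}L\|^2$ piece of $M(t)$ enters the Lyapunov function, and the condition $\alpha\leq\frac{1+36\tau^2}{\tau L}$ ensures the inner-loop descent is valid.

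Third, I would assemble the telescoping sum. Summing the combined one-step inequality over $t=0,\dots,K-1$, the $L(\bar{\mathbf{w}}(t),\cdot)$ terms telescope, leaving $\frac{4(L(\bar{\mathbf{w}}(0),\cdot)-L(\mathbf{w}^*,\cdot))}{K\eta_\mathbf{w}}$ after dividing by $K$ and rescaling by $4/\eta_\mathbf{w}$ (the factor $4$ accounting for the $1/2$ from the descent step and the $1/2$ lost in absorbing consensus/drift terms). The residual non-telescoping pieces give exactly the $\sigma_1^2,\sigma_2^2,\varsigma^2$ terms on the right-hand side of \eqref{eq:thm_loss}, with the $\frac{1}{N}$ factors reflecting agent-averaged noise and the $(1+1/L^2)$ factors arising from how the consensus-error bound is re-expressed after using the step-size constraint. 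The asymptotically optimal $\mathcal{O}(1/\sqrt{NK})$ rate then follows by the usual tuning $\eta_\mathbf{w}\sim\sqrt{N/K}$, $\eta_\mathbf{v}\sim 1/\sqrt{K}$.

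\textbf{Main obstacle.} The hard part will be the two-timescale coupling: the $\mathbf{v}$-iterates move during the inner loop while the $\mathbf{w}$-gradient is evaluated at $\mathbf{v}_i(t,\tau)$, so the descent lemma for $\mathbf{w}$ must be stated carefully with respect to which $\{\mathbf{v}_i\}$ snapshot is used (hence the $\{\mathbf{v}_i(t+1)\}$ inside the first gradient term of $M(t)$ versus $\{\mathbf{v}_i(t)\}$ in the third), and the cross terms relating $\nabla_\mathbf{w}L$ at $(\bar{\mathbf{w}},\{\mathbf{v}_i(t)\})$ versus $(\bar{\mathbf{w}},\{\mathbf{v}_i(t+1)\})$ must be controlled via $L$-smoothness in $\mathbf{v}$ and the inner-loop drift bound. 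Getting the bookkeeping of these mismatched evaluation points to close — so that everything telescopes cleanly and the step-size conditions on $\alpha,\beta$ suffice to keep all error terms non-dominant — is the delicate core of the argument; the individual consensus and SGD-noise estimates are otherwise standard.
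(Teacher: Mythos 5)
Your plan matches the paper's proof essentially step for step: a two-variable descent lemma applied to the averaged iterate $\bar{\mathbf{w}}(t)$ (using $\mathbf{P}\mathbf{1}=\mathbf{1}$ so that $\bar{\mathbf{w}}(t+1)-\bar{\mathbf{w}}(t)=-\frac{\eta_\mathbf{w}}{N}\sum_i g_\mathbf{w}$), separate bounds on the stochastic-noise, consensus-error, and inner-loop $\mathbf{v}$-drift terms (the paper's $C_1$--$C_4$ and $D_2$), a geometric-mixing consensus lemma with exactly your constants $C$ and $q$ plus Assumption~\ref{assumption:global-var} for the $\varsigma^2$ term, step-size conditions to make the residual gradient-norm terms nonpositive, and a final telescoping sum divided by $\eta_\mathbf{w}K/4$; the evaluation-point bookkeeping you flag as the main obstacle is precisely what the paper's Lemma~\ref{lemma:2D-Lipschitz} and the $\{\mathbf{v}_i(t+1)\}$ versus $\{\mathbf{v}_i(t)\}$ snapshots in $C_1$ and $C_3$ resolve. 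So the proposal is correct and follows the same route as the paper.
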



\begin{remark}
    The term $\frac{4L(\bar{\mathbf{w}}(0),\!\{\mathbf{v}_i(0)\}_{i=1}^N)\!-\!4 L({\mathbf{w}^*},\!\{\mathbf{v}_i^*\}_{i=1}^N)}{K\eta_w}$ vanishes at a rate proportional to $1/K$, reflecting the initial performance gap between the algorithm and the optimal solution, which gradually diminishes with training. This convergence behavior aligns with that observed in traditional decentralized learning frameworks \cite{assran2019stochastic, lian2017can}. In addition, there exists a persistent error term that does not decrease as $K$ increases, representing the long-term “steady-state error” of the algorithm. These errors mainly arise from two sources: (i) variance due to stochastic gradients and model heterogeneity, represented by $\sigma_1, \sigma_2$ and $\varsigma$; and (ii) the discrepancy between each agent’s shared adapter and the average shared adapter, measured by $\|\mathbf{w}_i(k)-\bar{\mathbf{w}}(k)\|^2$. Notably, the term $\frac{12\eta_v^3L^2\tau}{\eta_w}(\tau-1)(6\tau-1)\sigma_1^2+\frac{2\eta_v^2\tau L}{\eta_w}\sigma_2^2$ originates from the accumulation of local update errors during the first step. This indicates that each round of local updates progressively introduces errors, and that these errors grow at a higher-order rate with respect to the number of local updates $\tau$.
\end{remark}

At the same time, we can also easily get a concise result by taking the values of $\eta_w$ and $\eta_v$:
\begin{corollary}\label{cor:1}
Let $\eta_v=\frac{1}{\tau\sqrt{K}}$ and $\eta_w=\sqrt{{N}/{K}}$. The convergence rate of \emph{PE-MA} is 
$\mathcal{O}\left(\frac{1}{\sqrt{NK}}+\frac{1}{K\sqrt{N}}+\frac{1}{\tau\sqrt{NK}}\right),$ 
when the total number of communication rounds $K$ satisfies 
    $K\geq 
    \max\left(\frac{18C^2L^2N^3}{(1-q)^2},\frac{(2L^2+2)^2}{NL^4},NL^2\right).$
\end{corollary}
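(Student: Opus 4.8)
The plan is to derive Corollary~\ref{cor:1} as a direct specialization of Theorem~\ref{thm:loss_convergence}: substitute $\eta_v=\tfrac{1}{\tau\sqrt{K}}$ and $\eta_w=\sqrt{N/K}$ into the bound~\eqref{eq:thm_loss} and estimate its six terms one at a time. The only preliminary step is to check that these step sizes are admissible, i.e.\ that they satisfy the hypotheses $\alpha\le\tfrac{1+36\tau^2}{\tau L}$ and $\beta\le\min\!\big(1/L,\,N/2,\,\tfrac{1-q}{3\sqrt{2}CLN}\big)$ of the theorem, with $\alpha=\eta_v$ and $\beta=\eta_w$. The constraint on $\eta_v$ rewrites as $K\ge\tfrac{L^2}{(1+36\tau^2)^2}$, which is subsumed by $K\ge NL^2$. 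For $\eta_w=\sqrt{N/K}$, the three parts of the minimum rewrite, after squaring, as $K\ge NL^2$, $K\ge 4/N$, and $K\ge\tfrac{18C^2L^2N^3}{(1-q)^2}$; the first and third are exactly two of the three lower bounds on $K$ assumed in the corollary, and $K\ge 4/N$ is dominated by them. Hence, under the stated condition on $K$, Theorem~\ref{thm:loss_convergence} applies with the chosen step sizes.

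Next I would substitute and bound each of the six terms of~\eqref{eq:thm_loss}. Using $K\eta_w=\sqrt{NK}$, the initialization term equals $\tfrac{4\Delta_0}{\sqrt{NK}}$ with $\Delta_0:=L(\bar{\mathbf{w}}(0),\{\mathbf{v}_i(0)\})-L(\mathbf{w}^*,\{\mathbf{v}_i^*\})$, hence $\mathcal{O}(1/\sqrt{NK})$. The three terms $\tfrac{2\eta_wL}{N}\sigma_1^2$, $\tfrac{2\eta_w}{3N}(1+L^{-2})\sigma_1^2$, and $\tfrac{2\eta_w}{N}(1+L^{-2})\varsigma^2$ are each a fixed constant times $\tfrac{\eta_w}{N}=\tfrac{1}{\sqrt{NK}}$, hence also $\mathcal{O}(1/\sqrt{NK})$. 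For $\tfrac{2\eta_v^2\tau L}{\eta_w}\sigma_2^2$ one gets $\tfrac{2L\sigma_2^2}{\tau^2K}\cdot\tau\cdot\sqrt{K/N}=\tfrac{2L\sigma_2^2}{\tau\sqrt{NK}}=\mathcal{O}\!\big(\tfrac{1}{\tau\sqrt{NK}}\big)$. Finally, in $\tfrac{12\eta_v^3L^2\tau}{\eta_w}(\tau-1)(6\tau-1)\sigma_1^2$ I would use $\eta_v^3=\tfrac{1}{\tau^3K^{3/2}}$ together with $(\tau-1)(6\tau-1)\le 6\tau^2$ to obtain at most $\tfrac{72L^2\sigma_1^2}{K\sqrt{N}}=\mathcal{O}\!\big(\tfrac{1}{K\sqrt{N}}\big)$. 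Summing the six bounds and keeping the three distinct orders gives $\tfrac{1}{K}\sum_{k=0}^{K-1}\mathbb{E}[M(t)]=\mathcal{O}\!\big(\tfrac{1}{\sqrt{NK}}+\tfrac{1}{K\sqrt{N}}+\tfrac{1}{\tau\sqrt{NK}}\big)$, as claimed.

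The remaining assumption $K\ge\tfrac{(2L^2+2)^2}{NL^4}$, which is equivalent to $\sqrt{NK}\ge 2(1+L^{-2})$, is what lets me fold the prefactors $(1+L^{-2})$, $\sigma_1^2$, $\sigma_2^2$, $\varsigma^2$, $L$, $\Delta_0$ into the $\mathcal{O}(\cdot)$: it keeps those prefactors $O(1)$ relative to the leading $\Theta(1/\sqrt{NK})$ contribution, so the big-$\mathcal{O}$ holds uniformly rather than degrading when $L$ is small. There is no substantive obstacle here — the argument is pure bookkeeping — and the only points requiring care are (i) matching each of the three lower bounds on $K$ to the admissibility constraint (or prefactor normalization) it actually enforces, and (ii) confirming that the $\tau^2$ produced by $\eta_v^3\tau/\eta_w$ genuinely cancels the $(\tau-1)(6\tau-1)=\Theta(\tau^2)$ in the numerator, so that the third term is $\mathcal{O}(1/(K\sqrt{N}))$ and not $\mathcal{O}(\tau^2/(K\sqrt{N}))$.
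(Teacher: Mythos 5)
Your proposal is correct and follows essentially the same route as the paper, which gives no separate argument for the corollary beyond substituting $\eta_v=\tfrac{1}{\tau\sqrt{K}}$, $\eta_w=\sqrt{N/K}$ into the bound of Theorem~\ref{thm:loss_convergence} and reading off the orders term by term, exactly as you do (including the $(\tau-1)(6\tau-1)\le 6\tau^2$ cancellation against $\eta_v^3\tau/\eta_w$). The only minor divergence is your reading of the condition $K\ge\tfrac{(2L^2+2)^2}{NL^4}$: in the appendix this arises as the admissibility constraint $\eta_w\le\tfrac{NL^2}{2L^2+2}$ (which replaces the $N/2$ appearing in the theorem statement), rather than as a prefactor-normalization condition, but this does not affect the validity of your derivation.
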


\section{Experiment}
\subsection{Dataset}
We systematically evaluate the proposed multi-agent collaborative framework on three widely used cross-domain image classification benchmark datasets, namely Office-Home\cite{venkateswara2017deep}, Office-Caltech\cite{gong2012geodesic}, DomainNet\cite{peng2019moment}.
\begin{itemize}
    \item \textbf{Office-Home} is a medium-sized cross-domain dataset covering four subdomains with different styles: Art, Clipart, Product, and Real-World, with more than $15,500$ images and $65$ categories.
    \item  \textbf{Office-Caltech} is a classic small-scale cross-domain dataset consisting of four subdomains: Amazon, Webcam, DSLR, and Caltech-256, whose categories are the overlapped parts of Office31 and Caltech-256 (a total of $10$ categories).
    \item \textbf{DomainNet} is one of the largest cross-domain image classification datasets, including six subdomains (Clipart, Infograph, Painting, Quickdraw, Real, and Sketch) and $345$ categories, totaling about $600,000$ images.
\end{itemize}
 
\subsection{Experimental Setup}

\textbf{Evaluation Metrics.}
We adopt \emph{classification accuracy} as the primary evaluation metric. To further assess the impact of our method on individual agent performance, we also report the classification accuracy of each agent on its local domain. This per-agent evaluation is particularly important in personalized multi-agent learning settings, as it reflects the model’s adaptability and fairness across heterogeneous agents.

In addition, we compare the number of trainable parameters and the amount of communicated parameters under different model architectures and update strategies. These metrics act as proxies for assessing communication bandwidth demands and local memory usage, both of which are critical considerations in real-world multi-agent distributed deployment scenarios.

\textbf{Settings and Implementation Details.} Our method is general and can be used in different collaborative learning frameworks. We adopt ResNet-18 as the default backbone network and embed our designed dual-adapter structure.
One adapter is used for local personalized modeling. The other is used for global knowledge sharing. This design improves both personalization and collaboration across agents. We also test the method under different network topologies to show its flexibility and robustness. To handle sudden agent dropouts, we design an aggregation method with dual randomness. This helps the method adapt better in low-resource and unstable environments.

To verify the effectiveness and communication efficiency of the proposed method in multi-agent collaborative image classification tasks, we conduct experiments under a setting that closely reflects real-world scenarios. By default, each agent only has access to a small portion of labeled data within its own domain (with a default ratio of $80\%$), and the number of samples for the same category varies significantly across different domains. 

For experimental comparison, we consider several representative methods, including standard full-model training and personalized collaborative learning strategies based on FedSim \cite{Liang2020ThinkLA} and FedAlt \cite{collins2021exploiting}, applied to three different levels: input layer, outer layer, and adapter layer. During training, the backbone network parameters remain frozen, and only the adapter parameters, encoding head, or fully connected layer parameters are updated depending on the specific method used. The initial learning rate is set to $10^{-2}$, and it becomes $10^{-1}$ of the previous one every three epochs, and the batch size is 64. We implement all algorithms in PyTorch \cite{paszke2017automatic} on Python 3 with ten NVIDIA RTX 3090 GPUs.

\subsection{Result}
In Tab.~\ref{tab:main-results} and~\ref{tab:para}, we compare the different effects of different methods on image classification tasks. Based on these results, we conducted an analysis.

\textbf{Accuracy} In Tab.~\ref{tab:main-results}, We systematically evaluate the performance of the proposed \emph{PE-MA} method on three cross-domain image classification benchmark datasets: Office-Home, Office-Caltech10, and DomainNet, under three representative communication topologies: FC, ER, and Ring. 

From the dataset perspective: 

\begin{itemize}
    \item On the Office-Home dataset, \emph{PE-MA} achieves the best performance under all three communication topologies, reaching the highest accuracy of {82.32\%} under the ER topology. This significantly outperforms the standalone full model (80.53\%) and all DSGD-SIM/ALT variants, verifying the robustness and collaborative efficiency of the dual-adapter structure on medium-scale heterogeneous tasks.
    \item On the Office-Caltech10 dataset, \emph{PE-MA} also maintains superior performance under all communication topologies, with the highest accuracy reaching {97.79\%}. This demonstrates that in small-scale and less heterogeneous scenarios, the personalized adapter in \emph{PE-MA} can effectively leverage local data advantages while preserving strong collaborative capabilities.
    \item On the large-scale and highly heterogeneous DomainNet dataset, \emph{PE-MA} shows outstanding cross-domain generalization ability, outperforming existing methods under all topologies, with the highest accuracy of {64.47\%} (Ring topology). Compared with full model training (60.13\%) and DSGD-SIM Adapter (63.57\%), \emph{PE-MA} shows clear improvements, demonstrating its capability in effective knowledge transfer across agents.
    \item Among the three benchmark datasets, \emph{PE-MA} shows the most prominent improvement on the DomainNet dataset, with an accuracy increase of approximately $4\%$, from $60.13\%$ to $64.08\%$. Given DomainNet’s large scale and high domain heterogeneity, it presents more significant challenges for cross-domain generalization. The dual-adapter design of \emph{PE-MA} effectively distinguishes between local personalization and global knowledge sharing, thereby improving cross-domain transferability.
\end{itemize}

From the topology perspective:

\begin{itemize}
    \item In the FC topology, where communication is maximized, \emph{PE-MA} achieves optimal or near-optimal performance across all datasets. This indicates that with full communication, \emph{PE-MA} effectively integrates local and global knowledge.
    \item In the ER topology, although sparser, the graph remains connected. \emph{PE-MA} often performs best under this topology, reaching the highest global accuracy of {82.32\%} on Office-Home. This suggests that even under constrained communication, \emph{PE-MA} maintains strong collaborative learning capabilities.
    \item In the Ring topology, which offers the least communication and tests the method’s resilience under resource-constrained conditions, \emph{PE-MA} still delivers remarkable performance. Notably, it achieves {64.47\%} accuracy on the DomainNet dataset, outperforming all baselines. This highlights \emph{PE-MA}'s robustness and adaptability in low-communication environments.
    \item Among the three communication topologies, \emph{PE-MA} shows particularly strong robustness and performance under the Ring topology, achieving an accuracy of $64.08\%$, which is approximately $2\%$ to $4\%$ higher than the other methods (such as DSGD-SIM and DSGD-ALT). Due to the longer communication paths and lower redundancy in the Ring structure, it requires higher fault tolerance. \emph{PE-MA}’s dual-randomness aggregation mechanism demonstrates strong adaptability in such constrained environments, effectively mitigating performance degradation caused by limited communication.
\end{itemize}

\textbf{Parameter quantity} Tab.~\ref{tab:para} summarizes the communication and training parameter sizes of different personalization strategies across three datasets. 

\begin{itemize}
    \item In terms of communication overhead, \emph{PE-MA} on all datasets is significantly lower than that of the DSGD series of methods. Taking the Office-Home dataset as an example, the communication parameter volume of \emph{PE-MA} is 1.44M, which is about $77.8\%$ less than DSGD-ALT (Input Layer update, 6.48M) and about $87.2\%$ less than DSGD-SIM (Adapter update, 11.21M). This significant reduction in communication volume is attributed to the lightweight shared adapter synchronization mechanism adopted by \emph{PE-MA}, which only needs to synchronize a small number of shared parameters, greatly alleviating the communication bottleneck in the multi-agent collaborative environment.
    \item In terms of local training load, \emph{PE-MA} also shows great advantages. Taking DomainNet as an example, the training parameter volume of \emph{PE-MA} is 3.18M, which is about $72.0\%$ less than DSGD-ALT (Input)'s 11.35M; it is about $75.1\%$ less than DSGD-SIM (Adapter)'s 12.75M. On Office-Home and Office-Caltech10, it can also be observed that the training parameter volume generally decreases by more than $70\%$. This shows that \emph{PE-MA} significantly reduces the local computing and storage burden by freezing the backbone network and training only for lightweight adapters.
    \item Specifically, the {Indep} method does not require communication; however, it cannot leverage cross-agent information sharing, resulting in an overall accuracy significantly lower than that of \emph{PE-MA}. In contrast, \emph{PE-MA} achieves substantial performance improvements with only a minimal increase in communication and training overhead---less than $25\%$ of that required by traditional methods---demonstrating an excellent balance between efficiency and effectiveness. 
    \item Overall, the dual-adapter architecture of \emph{PE-MA} successfully separates global knowledge sharing from local personalized learning, enabling a well-balanced trade-off among model performance, communication efficiency, and computational cost.
\end{itemize}

\subsection{Ablation Study}
\begin{figure}[h]
	\vskip -0.1in
	\begin{center}
		\includegraphics[width=1.0\columnwidth]{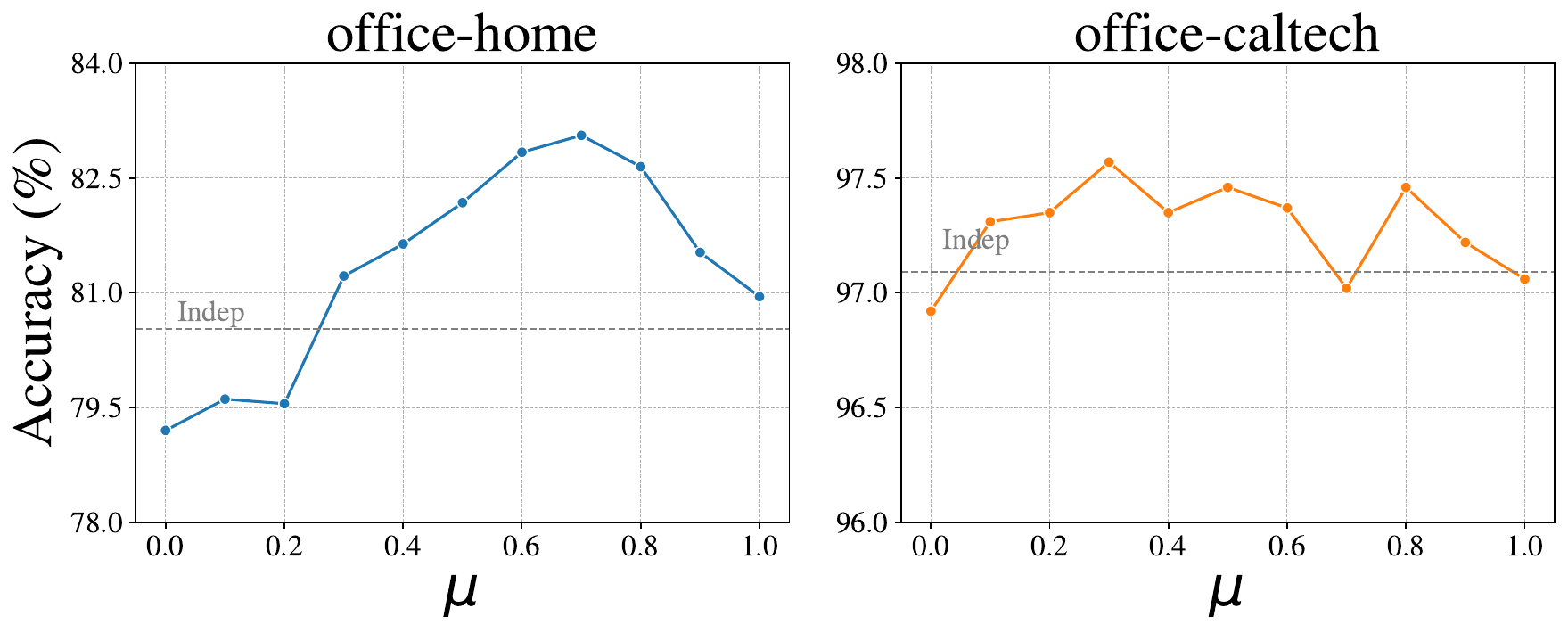}
		\caption{Peak accuracy of Office-Home10 and Office-Caltech10 at different personalization levels($\mu$)}
		\label{fig:oh10&oc10-mu}
	\end{center}
	\vskip -0.2in
\end{figure}
\textbf{Tradeoff between global and personalized adapter} 
As shown in Fig.~\ref{fig:oh10&oc10-mu}, the weight of the personalized adapter has a significant impact on the best accuracy the model can achieve on both datasets. On the Office-Home10 dataset, as the personalization weight increases, the model's accuracy first improves and then declines, indicating that moderately enhancing personalization helps extract features tailored to specific tasks. However, an excessively high weight may hinder the learning of knowledge from other agents, potentially leading to overfitting. In contrast, the accuracy on the Office-Caltech10 dataset remains relatively stable, suggesting either a lower dependence on personalization or that the model's generalization capability is already sufficient. Therefore, the trade-off between personalization and shared knowledge plays a crucial role in determining final model performance across different tasks.

\begin{figure}[h]
	\begin{center}
		\includegraphics[width=0.8\columnwidth]{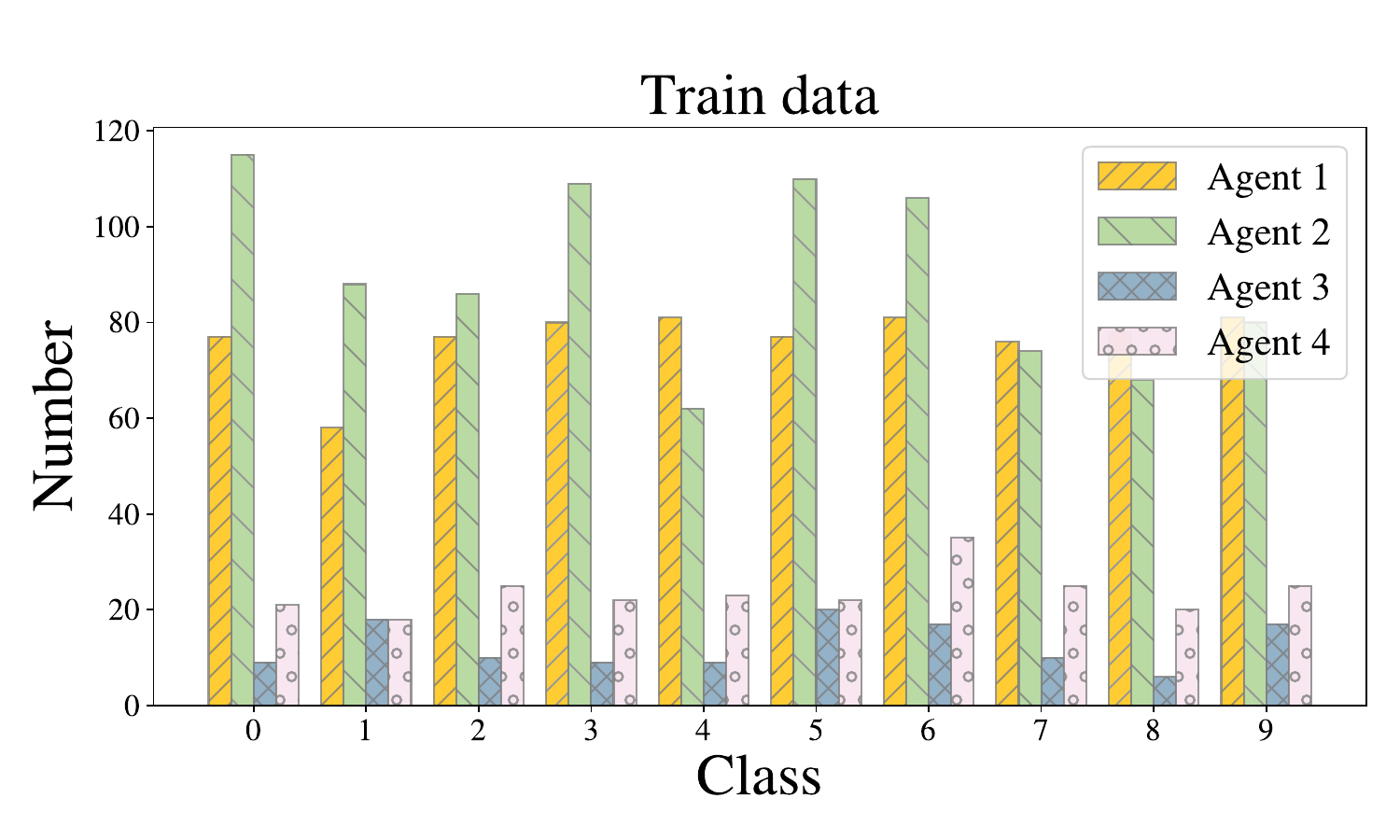}
		\caption{Different data distributions formed by dividing the Office-Caltech10 dataset into four agents}
		\label{fig:dataset-oc10-distribution}
	\end{center}
	\vskip -0.1in
\end{figure}

\begin{figure}[b]
	\vskip -0.2in
	\begin{center}
		\includegraphics[width=1.0\columnwidth]{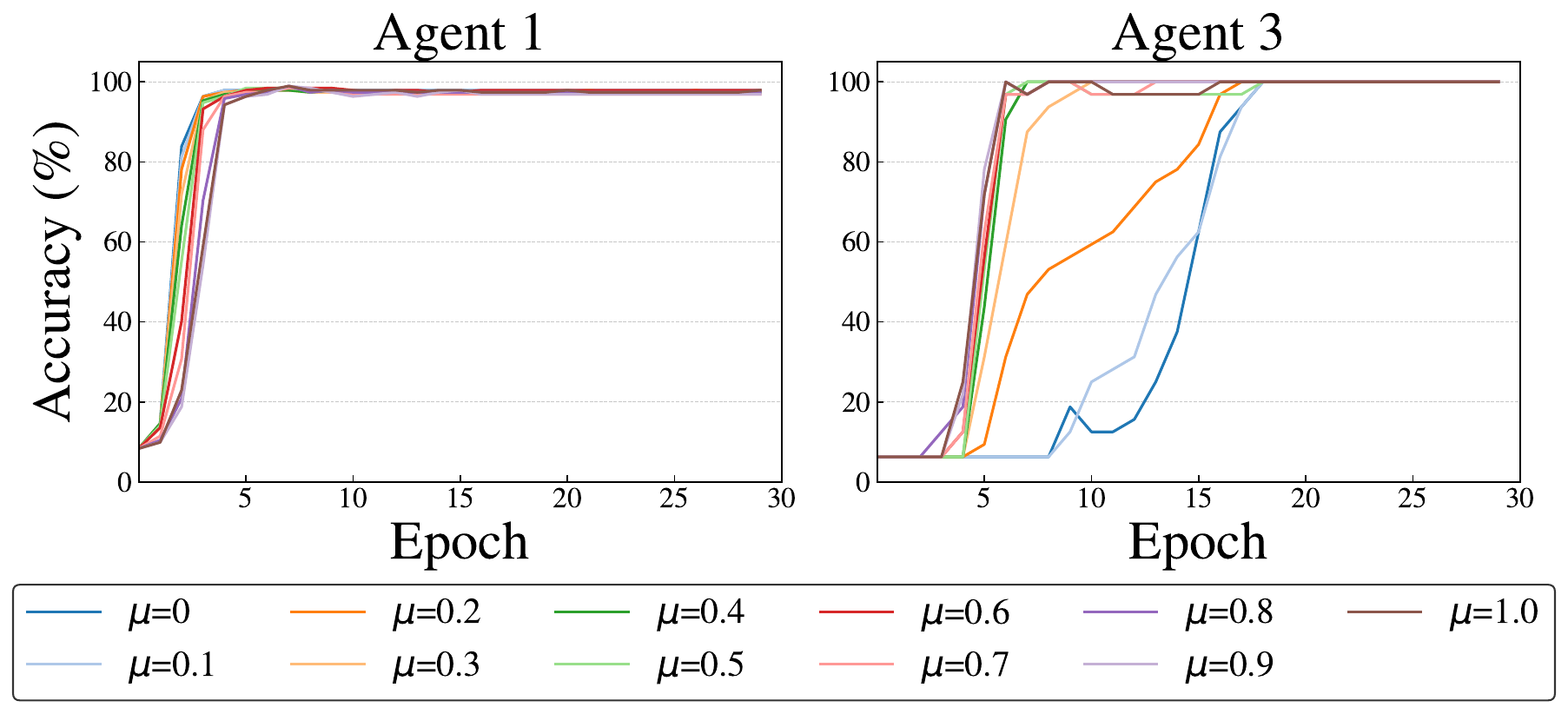}
		\caption{The accuracy of two agents with different data distributions performing tasks at different personalization levels($\mu$)}
		\label{fig:oh10-clients-mu}
	\end{center}
	\vskip -0.2in
\end{figure}

{\textbf{Data-Responsive Personalization Design} }
As illustrated in Fig.~\ref{fig:oh10-clients-mu}, under the data distribution shown in Fig.~\ref{fig:dataset-oc10-distribution}, the personalization weight $\mu$ has different impacts across clients. For Client 1, which has sufficient data, changes in $\mu$ have minimal effect on convergence or accuracy. However, for Client 3, which has limited data, reducing $\mu$—i.e., increasing reliance on the personalized adapter—leads to faster convergence and better final performance. These findings suggest that clients with ample data are less sensitive to personalization weights, while clients with sparse or heterogeneous data benefit significantly from adjusting the balance between global and personalized adapters. Proper tuning of $\mu$ helps the model better fit local data and improves overall efficiency. 

Next, we will further explore the personalized strategy for specific agents. Fig.~\ref{fig:oh10-adaptivemu} shows the change in accuracy of four agents (Office-Home is divided into four different domains) during training in a decentralized multi-agent learning scenario. The three curves represent the three strategies: training each agent separately, fixed personalization, and adaptive personalization. In general, adaptive personalization can achieve the most effective learning. This highlights the importance of dynamically adjusting the personalization based on data characteristics. Agent 1 and agent 2 have similar changes in accuracy. They have sufficient local data, so the accuracy increases and stabilizes at a similar rate under the three strategies. Agent 3 and agent 4 have less local data and uneven distribution, so the gap between the three strategies can be clearly seen. Using adaptive personalization will obviously reach a stable accuracy rate $30\%-50\%$ faster than using fixed personalization. The gap between the three strategies further emphasizes the advantage of using adaptive personalization strategies in situations where the amount of data or data distribution is uneven.
\begin{figure}[h]
	\vskip -0.1in
	\begin{center}
		\includegraphics[width=1.0\columnwidth]{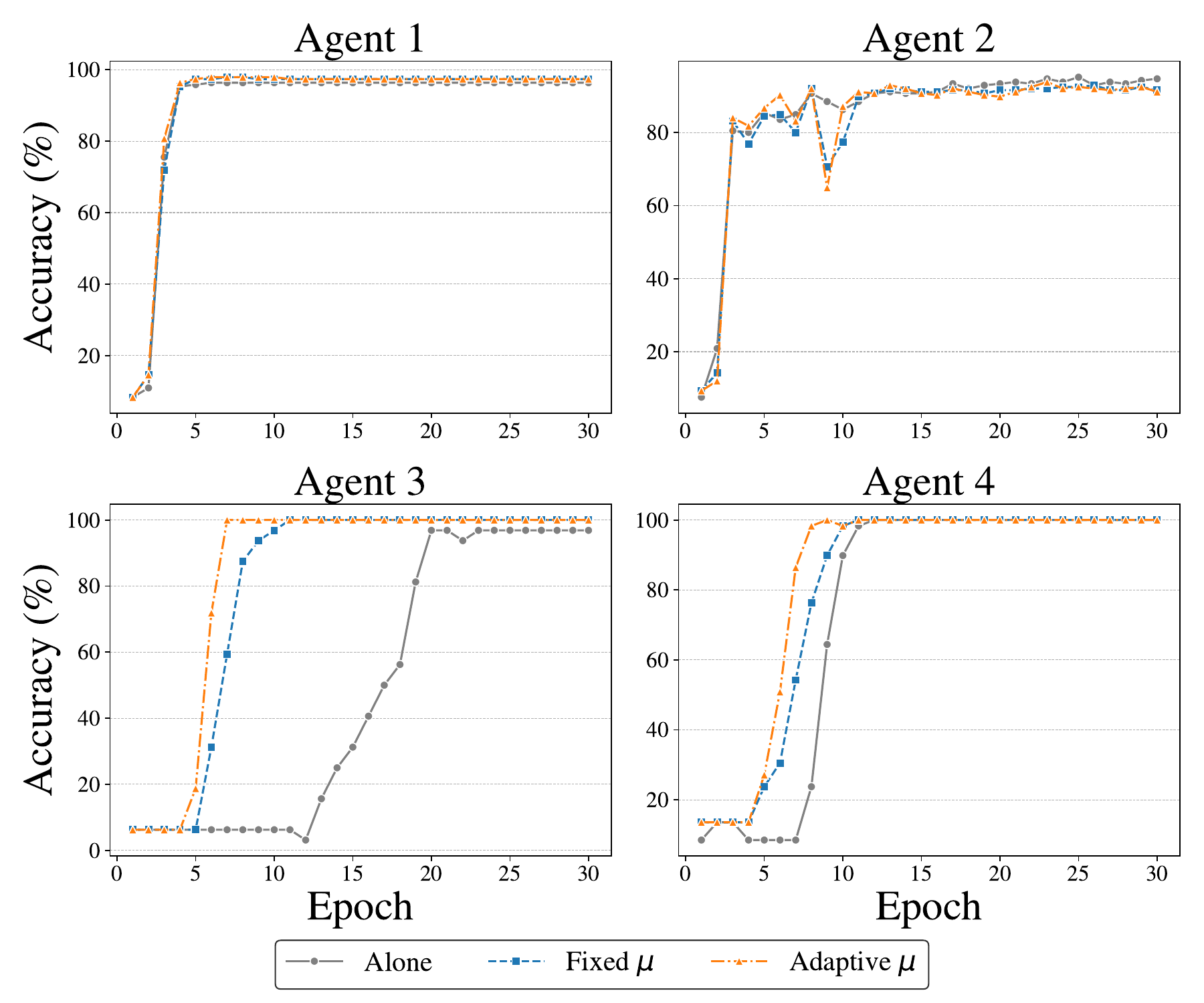}
		\caption{The accuracy of each agent using the same personalization level and different personalization levels according to the data volume and data distribution}
		\label{fig:oh10-adaptivemu}
	\end{center}
	\vskip -0.2in
\end{figure}

\textbf{Impact of Communication Stability} 
Fig.~\ref{fig:oh10-offline} illustrates the impact of communication stability on model performance under a decentralized multi-agent framework. On the Office-Home dataset (left), the model converges stably regardless of whether all nodes maintain stable communication (solid line) or experience a $10\%$ probability of disconnection (dashed line). However, the accuracy under stable communication is slightly higher, indicating that communication integrity plays a positive role in enhancing final model performance. A similar trend is observed on the Office-Caltech dataset (right), where the model also performs better under stable communication conditions. 
Compared to the baseline without any communication, both communication settings significantly improve the model’s accuracy, demonstrating that information exchange among agents is crucial for performance enhancement. Overall, although unstable communication may introduce some performance fluctuations, a well-designed communication mechanism can still ensure robust training and high accuracy in heterogeneous environments—especially in the later stages of training.

\begin{figure}[h]
	\vskip -0.1in
	\begin{center}
		\includegraphics[width=1.0\columnwidth]{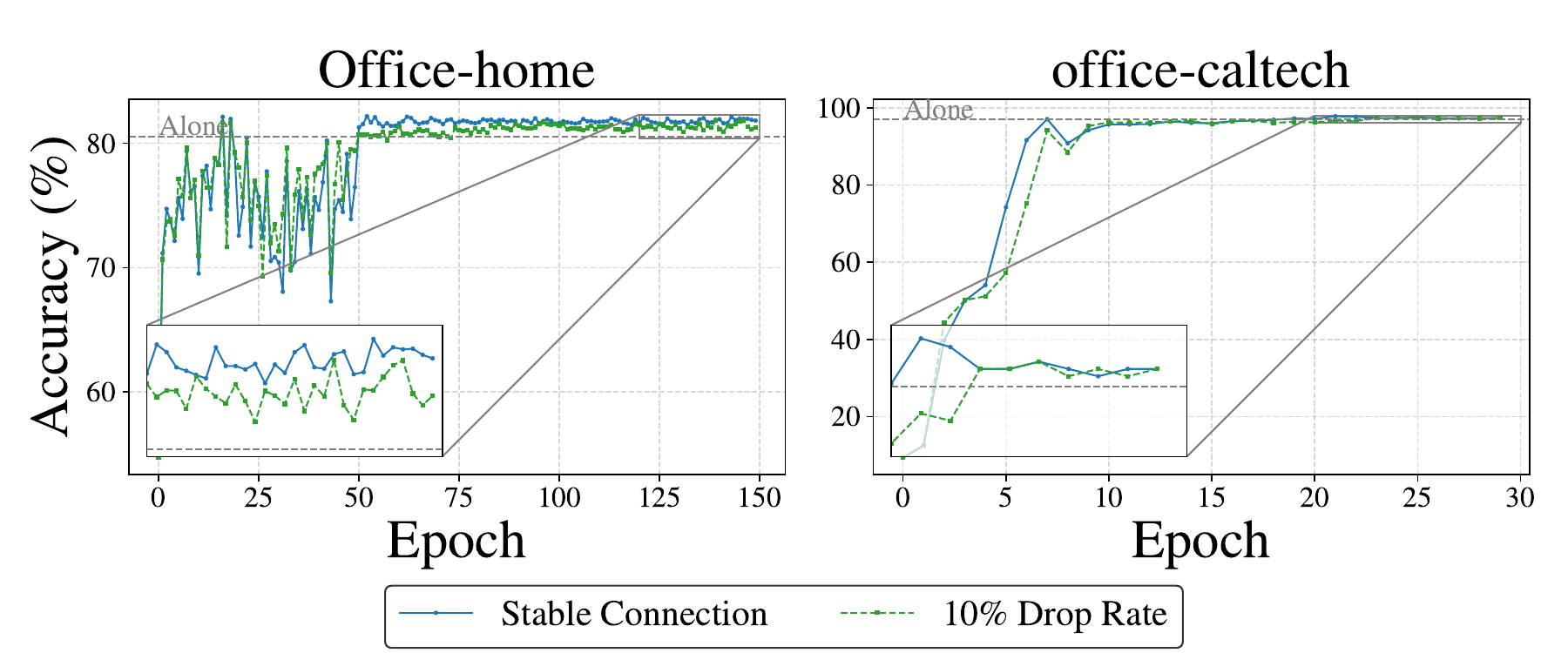}
		\caption{Accuracy comparison under agent dropout ($10\%$) vs no dropout ($0\%$).}
		\label{fig:oh10-offline}
	\end{center}
	\vskip -0.2in
\end{figure}

\section{Conclusion}
This paper proposes the \emph{PE-MA} method, which combines shared adapters with personalized adapters to achieve efficient collaboration and local personality optimization in a multi-agent environment. {We prove that \emph{PE-MA} can achieve convergence results in a decentralized framework by considering both the error of global knowledge representation and model performance, and that the convergence rate can reach $\mathcal{O}\left(\frac{1}{\sqrt{NK}}+\frac{1}{K\sqrt{N}}+\frac{1}{\tau\sqrt{NK}}\right),$ when the learning rate $\eta_w$ and $\eta_v$ reaches a specific value.} A large number of experiments show that \emph{PE-MA} can maintain excellent classification performance on a variety of cross-domain datasets while significantly reducing communication overhead and training burden. Further ablation analysis verifies the impact of adapter weight settings on convergence rate and performance, especially in scenarios with less data. Overall, \emph{PE-MA} provides an efficient, stable and scalable solution for multi-agent collaborative learning tasks. In future work, we plan to further enhance the adaptability of \emph{PE-MA} by introducing dynamic adapter selection strategies and exploring its integration with large-scale foundation models in real-world edge applications.

\bibliography{IEEEabrv, example_paper}
\bibliographystyle{IEEEtran}

\clearpage
\setcounter{page}{1}
\appendices
\section{Convergence}

We have parameters of shared adapter and personalized adapter. According to the conclusion of the Lipschitz gradient continuous function in Assumption \ref{assumption-lipschitz}, it is not difficult to deduce:
\begin{lemma}\label{lemma:2D-Lipschitz}
Under Assumption \ref{assumption-lipschitz}, we have 
\begin{align*}
    &L_i(\mathbf{w}, \mathbf{v})-L_i(\mathbf{w}^\prime, \mathbf{v}^\prime)\\
   \leq \quad & \nabla_{\mathbf{w}}L_i(\mathbf{w}^\prime,\mathbf{v})^\intercal (\mathbf{w}-\mathbf{w}^\prime)\\
   +\quad &\frac{L}{2}\|\mathbf{w}-\mathbf{w}^\prime\|^2+\nabla_{\mathbf{v}}L_i(\mathbf{w}^\prime,\mathbf{v}^\prime)^\intercal (\mathbf{v}-\mathbf{v}^\prime)+\frac{L}{2}\|\mathbf{v}-\mathbf{v}^\prime\|^2. 
\end{align*}
\end{lemma}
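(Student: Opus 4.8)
The plan is to reduce this two-block quadratic upper bound to the classical single-variable descent lemma, applied once in each coordinate block, and then telescope through a mixed intermediate point. The crucial structural observation is that Assumption~\ref{assumption-lipschitz} supplies only \emph{block-wise} Lipschitz continuity of the gradient (Lipschitz in $\mathbf{w}$ with $\mathbf{v}$ frozen, and Lipschitz in $\mathbf{v}$ with $\mathbf{w}$ frozen), so I cannot invoke a joint smoothness estimate directly. Instead I would split the total difference through the mixed point $(\mathbf{w}',\mathbf{v})$:
\begin{align*}
L_i(\mathbf{w},\mathbf{v}) - L_i(\mathbf{w}',\mathbf{v}')
= \bigl[L_i(\mathbf{w},\mathbf{v}) - L_i(\mathbf{w}',\mathbf{v})\bigr]
+ \bigl[L_i(\mathbf{w}',\mathbf{v}) - L_i(\mathbf{w}',\mathbf{v}')\bigr].
\end{align*}
Choosing $(\mathbf{w}',\mathbf{v})$ as the hinge is exactly what forces the first partial gradient on the right-hand side to be evaluated at $(\mathbf{w}',\mathbf{v})$ and the second at $(\mathbf{w}',\mathbf{v}')$, matching the statement.

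For the first bracket I would freeze $\mathbf{v}$ and treat $\phi(\mathbf{w}) := L_i(\mathbf{w},\mathbf{v})$ as a function of $\mathbf{w}$ alone; by the $\mathbf{w}$-part of Assumption~\ref{assumption-lipschitz}, $\nabla\phi$ is $L$-Lipschitz. The standard descent lemma --- obtained by writing $\phi(\mathbf{w}) - \phi(\mathbf{w}') = \int_0^1 \nabla\phi(\mathbf{w}' + t(\mathbf{w}-\mathbf{w}'))^\intercal(\mathbf{w}-\mathbf{w}')\,dt$, subtracting $\nabla\phi(\mathbf{w}')^\intercal(\mathbf{w}-\mathbf{w}')$, and bounding the residual by Cauchy--Schwarz together with $\|\nabla\phi(\mathbf{w}'+t(\mathbf{w}-\mathbf{w}')) - \nabla\phi(\mathbf{w}')\| \le Lt\|\mathbf{w}-\mathbf{w}'\|$ --- yields
\begin{align*}
L_i(\mathbf{w},\mathbf{v}) - L_i(\mathbf{w}',\mathbf{v})
\le \nabla_{\mathbf{w}} L_i(\mathbf{w}',\mathbf{v})^\intercal(\mathbf{w}-\mathbf{w}') + \frac{L}{2}\|\mathbf{w}-\mathbf{w}'\|^2.
\end{align*}

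Then I would repeat the identical argument on the second bracket with $\mathbf{w}'$ frozen, setting $\psi(\mathbf{v}) := L_i(\mathbf{w}',\mathbf{v})$ and invoking the $\mathbf{v}$-part of Assumption~\ref{assumption-lipschitz}, to obtain
\begin{align*}
L_i(\mathbf{w}',\mathbf{v}) - L_i(\mathbf{w}',\mathbf{v}')
\le \nabla_{\mathbf{v}} L_i(\mathbf{w}',\mathbf{v}')^\intercal(\mathbf{v}-\mathbf{v}') + \frac{L}{2}\|\mathbf{v}-\mathbf{v}'\|^2.
\end{align*}
Summing the two inequalities reproduces the claim exactly. Since this is a routine consequence of the one-dimensional descent lemma, there is no substantive obstacle; the only point demanding care is bookkeeping the gradient evaluation points --- in particular ensuring the hinge is the mixed pair $(\mathbf{w}',\mathbf{v})$ rather than $(\mathbf{w},\mathbf{v}')$, since the latter would instead evaluate the $\mathbf{w}$-gradient at $(\mathbf{w}',\mathbf{v}')$ and the $\mathbf{v}$-gradient at $(\mathbf{w},\mathbf{v}')$, producing a correct but differently-stated bound that would not align with the form used later in the convergence analysis.
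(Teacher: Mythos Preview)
Your proposal is correct and follows essentially the same approach as the paper: both split through the intermediate point $(\mathbf{w}',\mathbf{v})$ and apply the standard single-variable descent lemma to each bracket, then add. Your write-up is in fact more detailed than the paper's, which simply cites the Lipschitz-gradient inequalities for the final step.
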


\begin{proof}
We rewrite $L_i(\mathbf{w}, \mathbf{v})-L_i(\mathbf{w}^\prime, \mathbf{v}^\prime)$ as follows:
\begin{align*}
   &L_i(\mathbf{w}, \mathbf{v})-L_i(\mathbf{w}^\prime, \mathbf{v}^\prime)\\
    =\quad & L_i(\mathbf{w}, \mathbf{v})- L_i(\mathbf{w}^\prime, \mathbf{v})+L_i(\mathbf{w}^\prime, \mathbf{v})-L_i(\mathbf{w}^\prime,\mathbf{v}^\prime)\\
    \leq\quad& \nabla_{\mathbf{w}}L_i(\mathbf{w}^\prime,\mathbf{v})^\intercal (\mathbf{w}-\mathbf{w}^\prime)+\frac{L}{2}\|\mathbf{w}-\mathbf{w}^\prime\|^2 \\
    &+\nabla_{\mathbf{v}}L_i(\mathbf{w}^\prime,\mathbf{v}^\prime)^\intercal (\mathbf{v}-\mathbf{v}^\prime)+\frac{L}{2}\|\mathbf{v}-\mathbf{v}^\prime\|^2,
\end{align*}
where the inequality is due to the fact \cite{bottou2018optimization} that  $\|\nabla_{\mathbf{w}} F_j(\mathbf{w},\mathbf{v})-\nabla_{\mathbf{w}} F_j(\mathbf{w}^\prime,\mathbf{v})\|\leq L\|\mathbf{w}-\mathbf{w}^\prime\|$ and
$\|\nabla_{\mathbf{v}} F_j(\mathbf{w},\mathbf{v})-\nabla_{\mathbf{v}} F_j(\mathbf{w},\mathbf{v}^\prime)\|\leq L\|\mathbf{v}-\mathbf{v}^\prime\|$.
\end{proof}

To prove Theorem \ref{thm:loss_convergence}, we start with evaluating the drift of the global loss function, i.e., $L(\bar{\mathbf{w}}(t+1),\{\mathbf{v}_i(t+1)\}_{i=1}^N)-L(\bar{\mathbf{w}}(t),\{\mathbf{v}_i(t)\}_{i=1}^N)$. 
\begin{lemma}
The drift of the global loss function satisfies
\begin{align}\label{eq:global_iter2}
  & \mathbb{E}[L(\mathbf{w}(t+1),\{\mathbf{v}_i(t+1)\}_{i=1}^N)]- \mathbb{E}[L(\mathbf{w}(t),\{\mathbf{v}_i(t)\}_{i=1}^N)]\nonumber\\
   \leq \quad&\underset{C_1}{\underbrace{\frac{1}{N}\sum_{i=1}^N\mathbb{E}\Big\langle \nabla_{w}L_i(\mathbf{w}(t),\mathbf{v}_i(t+1)), \mathbf{w}(t+1)-\mathbf{w}(t)\Big\rangle}}\nonumber\\
   &+\underset{C_2}{\underbrace{\frac{1}{N}\sum_{i=1}^N\frac{L}{2}\mathbb{E}[\|\mathbf{w}(t+1)-\mathbf{w}(t)\|^2]}}\nonumber\\
   &+\underset{C_3}{\underbrace{\frac{1}{N}\sum_{i=1}^N\mathbb{E}\Big\langle \nabla_{\mathbf{v}}L_i(\mathbf{w}(t),\mathbf{v}_i(t)), \ \mathbf{v}_i(t+1) -\ \mathbf{v}_i(t) \Big\rangle}}\nonumber\\
   &+\underset{C_4}{\underbrace{\frac{1}{N}\sum_{i=1}^N\frac{L}{2}\mathbb{E}[\|{\mathbf{v}_i}(t+1)-{\mathbf{v}_i}(t)\|^2]}}.  
\end{align}
\end{lemma}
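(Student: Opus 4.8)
The plan is to derive the decomposition as a direct term-by-term consequence of the two-variable descent inequality established in Lemma~\ref{lemma:2D-Lipschitz}, applied to each local loss $L_i$ and then averaged over the agents. Throughout I identify the symbol $\mathbf{w}(t)$ in the statement with the consensus iterate $\bar{\mathbf{w}}(t)=\frac{1}{N}\sum_{j=1}^N\mathbf{w}_j(t)$, consistent with the drift expression $L(\bar{\mathbf{w}}(t+1),\{\mathbf{v}_i(t+1)\}_{i=1}^N)-L(\bar{\mathbf{w}}(t),\{\mathbf{v}_i(t)\}_{i=1}^N)$ introduced just before the lemma, and I use $L(\mathbf{w},\{\mathbf{v}_i\}_{i=1}^N)=\frac{1}{N}\sum_{i=1}^N L_i(\mathbf{w},\mathbf{v}_i)$ from the definition of the global objective.

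First I would use the average structure of $L$ to write the drift as
\[
L(\bar{\mathbf{w}}(t+1),\{\mathbf{v}_i(t+1)\}_{i=1}^N)-L(\bar{\mathbf{w}}(t),\{\mathbf{v}_i(t)\}_{i=1}^N)
=\frac{1}{N}\sum_{i=1}^N\Big[L_i(\bar{\mathbf{w}}(t+1),\mathbf{v}_i(t+1))-L_i(\bar{\mathbf{w}}(t),\mathbf{v}_i(t))\Big].
\]
For each index $i$ I then apply Lemma~\ref{lemma:2D-Lipschitz} with the substitution $\mathbf{w}\mapsto\bar{\mathbf{w}}(t+1)$, $\mathbf{w}'\mapsto\bar{\mathbf{w}}(t)$, $\mathbf{v}\mapsto\mathbf{v}_i(t+1)$, $\mathbf{v}'\mapsto\mathbf{v}_i(t)$. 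This bounds each summand by the four contributions: the inner product $\langle\nabla_{\mathbf{w}}L_i(\bar{\mathbf{w}}(t),\mathbf{v}_i(t+1)),\bar{\mathbf{w}}(t+1)-\bar{\mathbf{w}}(t)\rangle$, the quadratic term $\frac{L}{2}\|\bar{\mathbf{w}}(t+1)-\bar{\mathbf{w}}(t)\|^2$, the inner product $\langle\nabla_{\mathbf{v}}L_i(\bar{\mathbf{w}}(t),\mathbf{v}_i(t)),\mathbf{v}_i(t+1)-\mathbf{v}_i(t)\rangle$, and the quadratic term $\frac{L}{2}\|\mathbf{v}_i(t+1)-\mathbf{v}_i(t)\|^2$.

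The one point that must be checked with care — and the only genuine subtlety here — is that the gradient evaluation points match Eq.~\ref{eq:global_iter2} exactly: the $\mathbf{w}$-inner-product term carries the \emph{updated} personalized variable $\mathbf{v}_i(t+1)$, whereas the $\mathbf{v}$-inner-product term carries the \emph{old} variable $\mathbf{v}_i(t)$. This asymmetry is precisely what Lemma~\ref{lemma:2D-Lipschitz} delivers, since its proof splits the increment through the intermediate point $(\bar{\mathbf{w}}(t),\mathbf{v}_i(t+1))$: the change along $\mathbf{w}$ is measured while $\mathbf{v}$ is held at its new value, and the subsequent change along $\mathbf{v}$ is measured while $\mathbf{w}$ is held at its old value $\bar{\mathbf{w}}(t)$. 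I would spell out the substitution so that the arguments of $C_1$ and $C_3$ are unambiguous.

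Finally I would sum the $N$ resulting pointwise inequalities with the $\frac{1}{N}$ weight already present in the drift expansion and take the expectation over the stochastic gradients. Because Lemma~\ref{lemma:2D-Lipschitz} holds deterministically for every realization of the iterates, the inequality survives under $\mathbb{E}[\cdot]$ by monotonicity, and linearity of expectation distributes across the four groups of terms to produce precisely $C_1+C_2+C_3+C_4$. No conditioning or independence argument is needed at this step; the actual estimation of $C_1$ through $C_4$ via the update rules and Assumptions~\ref{assumption-lipschitz}--\ref{assumption:global-var} is left to the subsequent lemmas.
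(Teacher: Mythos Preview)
Your proposal is correct and follows essentially the same route as the paper: expand the global drift via $L=\frac{1}{N}\sum_i L_i$, apply Lemma~\ref{lemma:2D-Lipschitz} to each $L_i$ with the substitution $(\mathbf{w},\mathbf{w}',\mathbf{v},\mathbf{v}')=(\bar{\mathbf{w}}(t+1),\bar{\mathbf{w}}(t),\mathbf{v}_i(t+1),\mathbf{v}_i(t))$, average, and take expectations. Your explicit remark about why the $\mathbf{w}$-gradient is evaluated at $\mathbf{v}_i(t+1)$ while the $\mathbf{v}$-gradient is evaluated at $\mathbf{v}_i(t)$ is a nice clarification that the paper leaves implicit in its appeal to Lemma~\ref{lemma:2D-Lipschitz}.
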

\begin{proof}
According to the definition of the global loss function $f$, we have 
\begin{align}\label{eq:global_iteration}\nonumber
    &\mathbb{E}[L(\mathbf{w}(t+1),\{\mathbf{v}_i(t+1)\}_{i=1}^N)]\\
    =\quad&\frac{1}{N}\sum_{i=1}^N \mathbb{E}[L_i(\mathbf{w}(t+1),\mathbf{v}_i(t+1))]\nonumber \allowdisplaybreaks\\
    \leq \quad& \frac{1}{N}\sum_{i=1}^N \mathbb{E}[\nabla_{w}L_i(\mathbf{w}(t),\mathbf{v}_i(t+1))^\intercal (\mathbf{w}(t+1)-\mathbf{w}(t))]\nonumber\\
    &+\frac{1}{N}\sum_{i=1}^N\frac{L}{2}\mathbb{E}[\|\mathbf{w}(t+1)-\mathbf{w}(t)\|^2]\nonumber\allowdisplaybreaks\nonumber\\
    &+\frac{1}{N}\sum_{i=1}^N \mathbb{E}[\nabla_{\mathbf{v}}L_i(\mathbf{w}(t),\mathbf{v}_i(t))^\intercal ({\mathbf{v}}_i(t+1)-{\mathbf{v}_i}(t))]\nonumber\\
    &+\frac{1}{N}\sum_{i=1}^N\frac{L}{2}\mathbb{E}[\|{\mathbf{v}}_i(t+1)-{\mathbf{v}}_i(t)\|^2]\nonumber\allowdisplaybreaks\\
    &+\frac{1}{N}\sum_{i=1}^N \mathbb{E}[L_i(\mathbf{w}(t),\mathbf{v}_i(t))],
\end{align}
where the inequality follows from Lemma \ref{lemma:2D-Lipschitz}. The desired result holds by moving $\frac{1}{N}\sum_{i=1}^N \mathbb{E}[L_i(\mathbf{w}(t),\mathbf{v}_i(t))]$ into the left-hand side of the inequality and replacing it by $\mathbb{E}[l(\mathbf{w}(t),\{\mathbf{v}_i(t)\}_{i=1}^N)]$. 
\end{proof}
In the following, we bound  $C_1, C_2, C_3,$ and $C_4$, respectively. 

\begin{lemma}\label{lem:C1}
$C_1$ can be bounded as follows:
\begin{align}
    C_1\leq& {\frac{\eta_w L^2}{2N}\sum_{i=1}^N\mathbb{E}\left[\left\|\bar{\mathbf{w}}(t)-\mathbf{w}_i(t)\right\|^2\right]}\nonumber\\
    &-\frac{\eta_w}{2}\left\|\nabla_{\mathbf{w}}L(\bar{\mathbf{w}}(t),\{\mathbf{v}_i(t+1)\}_{i=1}^N)\right\|^2\nonumber\allowdisplaybreaks\\
    &-\frac{\eta_w}{2N^2}\mathbb{E}\left[\left\|\sum_{i=1}^N g_{\mathbf{w}}(\mathbf{w}_i(t),\mathbf{v}_i(t+1))\right\|^2\right].
\end{align}
\end{lemma}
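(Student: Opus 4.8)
\textbf{Proof proposal for Lemma~\ref{lem:C1}.}

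The plan is to start from the definition of the aggregation step and exploit the structure of the update for $\mathbf{w}$. Recall that $\bar{\mathbf{w}}(t+1)-\bar{\mathbf{w}}(t) = -\eta_w \cdot \frac{1}{N}\sum_{i=1}^N g_{\mathbf{w}}(\mathbf{w}_i(t),\mathbf{v}_i(t+1))$, because the mixing matrix $\mathbf{P}$ is doubly stochastic and therefore preserves the average: $\frac{1}{N}\sum_i \mathbf{w}_i(t+1) = \frac{1}{N}\sum_i \mathbf{w}_i(t+\frac12)$. Substituting this into the inner product $C_1 = \frac{1}{N}\sum_{i=1}^N \mathbb{E}\langle \nabla_{\mathbf{w}}L_i(\mathbf{w}(t),\mathbf{v}_i(t+1)), \mathbf{w}(t+1)-\mathbf{w}(t)\rangle$ (where $\mathbf{w}(t)$ denotes $\bar{\mathbf{w}}(t)$) turns $C_1$ into $-\eta_w \mathbb{E}\langle \frac{1}{N}\sum_i \nabla_{\mathbf{w}}L_i(\bar{\mathbf{w}}(t),\mathbf{v}_i(t+1)), \frac{1}{N}\sum_j g_{\mathbf{w}}(\mathbf{w}_j(t),\mathbf{v}_j(t+1))\rangle$.

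Next I would take the conditional expectation over the sampling noise, using Assumption~\ref{assumption-gradient} (unbiasedness) so that $\mathbb{E}[g_{\mathbf{w}}(\mathbf{w}_j(t),\mathbf{v}_j(t+1))] = \nabla_{\mathbf{w}}L_j(\mathbf{w}_j(t),\mathbf{v}_j(t+1))$. Then I would apply the polarization identity $\langle a,b\rangle = \frac{1}{2}\|a\|^2 + \frac{1}{2}\|b\|^2 - \frac{1}{2}\|a-b\|^2$ with $a = \frac{1}{N}\sum_i \nabla_{\mathbf{w}}L_i(\bar{\mathbf{w}}(t),\mathbf{v}_i(t+1)) = \nabla_{\mathbf{w}}L(\bar{\mathbf{w}}(t),\{\mathbf{v}_i(t+1)\}_{i=1}^N)$ and $b = \frac{1}{N}\sum_j \nabla_{\mathbf{w}}L_j(\mathbf{w}_j(t),\mathbf{v}_j(t+1))$. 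This produces the term $-\frac{\eta_w}{2}\|\nabla_{\mathbf{w}}L(\bar{\mathbf{w}}(t),\{\mathbf{v}_i(t+1)\}_{i=1}^N)\|^2$ directly, and a $-\frac{\eta_w}{2}\|b\|^2$ term which, together with the variance decomposition $\mathbb{E}\|\frac{1}{N}\sum_j g_{\mathbf{w}}(\cdot)\|^2 = \|\frac1N\sum_j \nabla_{\mathbf{w}}L_j(\cdot)\|^2 + \mathbb{E}\|\frac1N\sum_j(g_{\mathbf{w}}(\cdot)-\nabla_{\mathbf{w}}L_j(\cdot))\|^2$, rearranges into the stated $-\frac{\eta_w}{2N^2}\mathbb{E}\|\sum_j g_{\mathbf{w}}(\mathbf{w}_j(t),\mathbf{v}_j(t+1))\|^2$ term plus a leftover that I would absorb. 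The residual $-\frac{\eta_w}{2}\|a-b\|^2$ term is the positive-contribution piece: I would bound $\|a-b\|^2 = \|\frac1N\sum_i(\nabla_{\mathbf{w}}L_i(\bar{\mathbf{w}}(t),\mathbf{v}_i(t+1)) - \nabla_{\mathbf{w}}L_i(\mathbf{w}_i(t),\mathbf{v}_i(t+1)))\|^2 \le \frac1N\sum_i \|\nabla_{\mathbf{w}}L_i(\bar{\mathbf{w}}(t),\mathbf{v}_i(t+1)) - \nabla_{\mathbf{w}}L_i(\mathbf{w}_i(t),\mathbf{v}_i(t+1))\|^2$ by Jensen/convexity of $\|\cdot\|^2$, then apply the $L$-Lipschitz gradient Assumption~\ref{assumption-lipschitz} to get $\le \frac{L^2}{N}\sum_i \|\bar{\mathbf{w}}(t)-\mathbf{w}_i(t)\|^2$, which yields exactly the consensus-error term $\frac{\eta_w L^2}{2N}\sum_i \mathbb{E}\|\bar{\mathbf{w}}(t)-\mathbf{w}_i(t)\|^2$.

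The main obstacle I anticipate is bookkeeping rather than conceptual: keeping the $-\frac{\eta_w}{2}\|b\|^2$ term and the variance term from the decomposition of $\mathbb{E}\|\frac1N\sum_j g_{\mathbf{w}}\|^2$ consistent so that they collapse to the single clean term $-\frac{\eta_w}{2N^2}\mathbb{E}\|\sum_j g_{\mathbf{w}}(\mathbf{w}_j(t),\mathbf{v}_j(t+1))\|^2$ without a spurious residual. The key is to not split $\mathbb{E}\|b\|^2$ at all on the negative side — instead, note that $\|b\|^2 \le \mathbb{E}\|\frac1N\sum_j g_{\mathbf{w}}(\cdot)\|^2$ is false in general (Jensen goes the other way), so one must be careful: the correct route is to write the inner product's second argument directly as $-\frac{\eta_w}{N}\sum_j g_{\mathbf{w}}(\cdot)$, take expectation so only the mean survives in the cross term, and then re-introduce the full second moment $\mathbb{E}\|\cdot\|^2$ using that for the mean vector $\|\mathbb{E}[X]\|^2 = \mathbb{E}\|X\|^2 - \mathrm{Var}$; the negative variance is what we are discarding (it only helps), leaving the stated bound. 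A secondary subtlety is the ordering of conditioning — $\mathbf{v}_i(t+1)$ depends on the same local data used to form $g_{\mathbf{w}}$, so I would condition on the filtration up to the start of round $t$'s local phase and treat $\mathbf{v}_i(t+1)$ and $\mathbf{w}_i(t)$ as measurable with respect to it (or, if the paper's sampling model draws fresh batches for the $\mathbf{w}$-update, invoke that independence explicitly) to legitimately pull $\nabla_{\mathbf{w}}L_j(\mathbf{w}_j(t),\mathbf{v}_j(t+1))$ out of the expectation.
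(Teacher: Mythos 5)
Your route is the same as the paper's: use double stochasticity of $\mathbf{P}$ to get $\bar{\mathbf{w}}(t+1)-\bar{\mathbf{w}}(t)=-\frac{\eta_w}{N}\sum_{i=1}^N g_{\mathbf{w}}(\mathbf{w}_i(t),\mathbf{v}_i(t+1))$, invoke unbiasedness (Assumption~\ref{assumption-gradient}), apply the polarization identity $\langle a,b\rangle=\frac12(\|a\|^2+\|b\|^2-\|a-b\|^2)$, and control the difference term via $\|\sum_{i}a_i\|^2\le N\sum_i\|a_i\|^2$ together with Assumption~\ref{assumption-lipschitz} to produce the consensus-error term $\frac{\eta_w L^2}{2N}\sum_i\mathbb{E}\|\bar{\mathbf{w}}(t)-\mathbf{w}_i(t)\|^2$. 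This mirrors the paper's steps $(a_1)$--$(a_4)$, and your remark about conditioning on the round-$t$ filtration so that $\mathbf{w}_i(t)$ and $\mathbf{v}_i(t+1)$ are measurable before pulling out the mean of $g_{\mathbf{w}}$ is a legitimate point that the paper leaves implicit.

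The step that does not go through as you wrote it is the conversion of the squared-mean term into the second-moment term appearing in the lemma. After conditioning, polarization gives you $-\frac{\eta_w}{2}\bigl\|\tfrac1N\sum_i\nabla_{\mathbf{w}}L_i(\mathbf{w}_i(t),\mathbf{v}_i(t+1))\bigr\|^2=-\frac{\eta_w}{2}\|\mathbb{E}[X]\|^2$ with $X=\tfrac1N\sum_i g_{\mathbf{w}}(\mathbf{w}_i(t),\mathbf{v}_i(t+1))$. Since $\|\mathbb{E}[X]\|^2=\mathbb{E}[\|X\|^2]-\mathrm{Var}(X)$, replacing $-\frac{\eta_w}{2}\|\mathbb{E}[X]\|^2$ by $-\frac{\eta_w}{2}\mathbb{E}[\|X\|^2]$ discards the \emph{positive} quantity $+\frac{\eta_w}{2}\mathrm{Var}(X)$, not a negative one, so it is not a valid upper bound; relatedly, the inequality you call ``false in general'' ($\|\mathbb{E}[X]\|^2\le\mathbb{E}\|X\|^2$) is in fact true by Jensen, just in the unhelpful direction. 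What your argument legitimately yields is the bound with $\bigl\|\mathbb{E}\sum_i g_{\mathbf{w}}(\cdot)\bigr\|^2$ in place of $\mathbb{E}\bigl\|\sum_i g_{\mathbf{w}}(\cdot)\bigr\|^2$; that weaker form is exactly what the paper itself uses when it later combines Lemma~\ref{lem:C1} with Lemma~\ref{lem:C2} (there it relaxes $-\mathbb{E}\|\cdot\|^2\le-\|\mathbb{E}[\cdot]\|^2$ and cancels against the $C_2$ term), and if you insist on the lemma's stated form you can instead pay the gap using Assumption~\ref{assumption-variance}, which costs at most an extra $\frac{\eta_w\sigma_1^2}{2N}$. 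To be fair, the paper's own step $(a_2)$ is written as an equality while mixing the stochastic second moment $\frac{1}{N^2}\mathbb{E}\|\sum_i g_{\mathbf{w}}\|^2$ with the deterministic difference of true gradients, silently dropping the same conditional variance; so your proposal reproduces the paper's argument including its loose point, but your stated justification for that point has the sign backwards and should be repaired in one of the two ways above.
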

\begin{proof}
$C_1$ can be bounded as follows
\begin{align}\nonumber
C_1=\quad&\mathbb{E}\Bigg\langle \frac{1}{N}\sum_{i=1}^N\nabla_{\mathbf{w}}L_i(\bar{\mathbf{w}}(t),\mathbf{v}_i(t+1)), \bar{\mathbf{w}}(t+1)-\bar{\mathbf{w}}(t)\Bigg\rangle\nonumber\allowdisplaybreaks\\
    \overset{(a_1)}=\quad&-\eta_w \mathbb{E}\Big\langle \nabla_{\mathbf{w}}L(\bar{\mathbf{w}}(t),\{\mathbf{v}_i(t+1)\}_{i=1}^N),\nonumber\\
    &\qquad\quad\quad\frac{1}{N}\sum_{i=1}^N g_{\mathbf{w}}(\mathbf{w}_i(t),\mathbf{v}_i(t+1))\Big\rangle\nonumber\allowdisplaybreaks\\
    \overset{(a_2)}=\quad&\frac{-\eta_w}{2}\Bigg(\left\|\nabla_{\mathbf{w}}L(\bar{\mathbf{w}}(t),\{\mathbf{v}_i(t+1)\}_{i=1}^N)\right\|^2\\
    &\quad+\frac{1}{N^2}\mathbb{E}\left[\left\|\sum_{i=1}^N g_{\mathbf{w}}(\mathbf{w}_i(t),\mathbf{v}_i(t+1))\right\|^2\right]\nonumber\allowdisplaybreaks\\
    &\quad-\frac{1}{N^2}\mathbb{E}\Bigg[\Bigg\|\sum_{i=1}^N(\nabla_{\mathbf{w}}L_i(\bar{\mathbf{w}}(t),\mathbf{v}_i(t+1))\\
    &\quad-\nabla_{\mathbf{w}}L_i({\mathbf{w}_i}(t),\mathbf{v}_i(t+1)))\Bigg\|^2\Bigg]\Bigg)\nonumber\allowdisplaybreaks \\
    =\quad&\frac{\eta_w}{2N^2}\mathbb{E}\Bigg[\Bigg\|\sum_{i=1}^N(\nabla_{\mathbf{w}}L_i(\bar{\mathbf{w}}(t),\mathbf{v}_i(t+1))\nonumber\\
    &-\nabla_{\mathbf{w}}L_i({\mathbf{w}_i}(t),\mathbf{v}_i(t+1)))\Bigg\|^2\Bigg]\nonumber\allowdisplaybreaks\\
    &-\frac{\eta_w}{2}\left\|\nabla_{\mathbf{w}}L(\bar{\mathbf{w}}(t),\{\mathbf{v}_i(t+1)\}_{i=1}^N)\right\|^2\nonumber\\
    &-\frac{\eta_w}{2N^2}\mathbb{E}\left[\left\|\sum_{i=1}^N g_{\mathbf{w}}(\mathbf{w}_i(t),\mathbf{v}_i(t+1))\right\|^2\right]\nonumber\allowdisplaybreaks\\
    \overset{(a_3)}{\leq}\quad&\underset{D_1}{\underbrace{\frac{\eta_w}{2N}\sum_{i=1}^N\mathbb{E}\Bigg[\Bigg\|(\nabla_{\mathbf{w}}L_i(\bar{\mathbf{w}}(t),\mathbf{v}_i(t+1))}}\nonumber\\
    &\underset{D_1}{\underbrace{-\nabla_{\mathbf{w}}L_i({\mathbf{w}_i}(t),\mathbf{v}_i(t+1)))\Bigg\|^2\Bigg]}}\nonumber\allowdisplaybreaks\\
    &-\frac{\eta_w}{2}\left\|\nabla_{\mathbf{w}}L(\bar{\mathbf{w}}(t),\{\mathbf{v}_i(t+1)\}_{i=1}^N)\right\|^2\nonumber\\
    &-\frac{\eta_w}{2N^2}\mathbb{E}\left[\left\|\sum_{i=1}^N g_{\mathbf{w}}(\mathbf{w}_i(t),\mathbf{v}_i(t+1))\right\|^2\right]\nonumber\allowdisplaybreaks\\
    \overset{(a_4)}{\leq}\quad&{\frac{\eta_w L^2}{2N}\sum_{i=1}^N\mathbb{E}\left[\left\|\bar{\mathbf{w}}(t)-\mathbf{w}_i(t)\right\|^2\right]}\nonumber\\&-\frac{\eta_w}{2}\left\|\nabla_{\mathbf{w}}L(\bar{\mathbf{w}}(t),\{\mathbf{v}_i(t+1)\}_{i=1}^N)\right\|^2\nonumber\allowdisplaybreaks\\
    &-\frac{\eta_w}{2N^2}\mathbb{E}\left[\left\|\sum_{i=1}^N g_{\mathbf{w}}(\mathbf{w}_i(t),\mathbf{v}_i(t+1))\right\|^2\right],
\end{align}
where $(a_1)$ holds due to $\bar{\mathbf{w}}(t+1)-\bar{\mathbf{w}}(t)= \frac{-\eta_w}{N}\sum_{i=1}^N g_{\mathbf{w}}(\mathbf{w}_i(t),\mathbf{v}_i(t+1))$; $(a_2)$ holds according to the equation $\langle a,b\rangle = \frac{1}{2}[\|a\|^2+\|b\|^2-\|a-b\|^2]$ and the fact that $\mathbb{E}\left[g_{\mathbf{w}}(\mathbf{w}_i(t),\mathbf{v}_i(t+1))\right]=\nabla_{\mathbf{w}}L_i(\mathbf{w}_i(t),\mathbf{v}_i(t+1)), \forall i$ from Assumption \ref{assumption-gradient};  $(a_3)$ follows the Cauchy-Schwartz inequality $\|\sum_{i=1}^N a_i\|^2\leq N\sum_{i=1}^N\|a_i\|^2$; and $(a_4)$ directly comes from Assumption \ref{assumption-lipschitz} i.e., $D_1\leq{\frac{\eta_w L^2}{2N}\sum_{i=1}^N\mathbb{E}\left\|\bar{\mathbf{w}}(t)-\mathbf{w}_i(t)\right\|^2}$.
\end{proof}

\begin{lemma}\label{lem:C2}
 $C_2$ is bounded as follows
 \begin{align}
     C_2\leq \frac{\eta_w^2L}{2N}\sigma_1^2+\frac{\eta_w^2L}{2N^2} \left\|\mathbb{E}\sum_{i=1}^N g_{\mathbf{w}}(\mathbf{w}_i(t), \mathbf{v}_i(t+1))\right\|^2.
 \end{align}
\end{lemma}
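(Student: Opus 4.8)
\textbf{Proof proposal for Lemma \ref{lem:C2}.}
The plan is to reduce $C_2$ to a second–moment bound on the network–averaged stochastic gradient and then split that into a bias term and a variance term. First, since the summand defining $C_2$ does not depend on $i$, we have $C_2=\frac{L}{2}\mathbb{E}[\|\bar{\mathbf{w}}(t+1)-\bar{\mathbf{w}}(t)\|^2]$. The key structural fact I would invoke is that the communication step of Alg.~\ref{alg:PE-MA1} leaves the network average invariant: because $\mathbf{P}$ is doubly stochastic, $\sum_i P_{ij}=1$ for every $j$ (and $P_{ij}=0$ when $j\notin\mathcal{N}_i$), so $\bar{\mathbf{w}}(t+1)=\frac{1}{N}\sum_i\sum_{j\in\mathcal{N}_i}P_{ij}\mathbf{w}_j(t+\tfrac12)=\frac{1}{N}\sum_j\mathbf{w}_j(t+\tfrac12)$. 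Combining this with the local gradient step $\mathbf{w}_j(t+\tfrac12)=\mathbf{w}_j(t)-\eta_w g_{\mathbf{w}}(\mathbf{w}_j(t),\mathbf{v}_j(t+1))$ gives $\bar{\mathbf{w}}(t+1)-\bar{\mathbf{w}}(t)=-\frac{\eta_w}{N}\sum_{i=1}^N g_{\mathbf{w}}(\mathbf{w}_i(t),\mathbf{v}_i(t+1))$, hence
\[
C_2=\frac{\eta_w^2 L}{2N^2}\,\mathbb{E}\Big[\big\|\textstyle\sum_{i=1}^N g_{\mathbf{w}}(\mathbf{w}_i(t),\mathbf{v}_i(t+1))\big\|^2\Big].
\]

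Second, I would apply the bias–variance decomposition $\mathbb{E}[\|X\|^2]=\|\mathbb{E}[X]\|^2+\mathbb{E}[\|X-\mathbb{E}[X]\|^2]$ to $X=\sum_{i=1}^N g_{\mathbf{w}}(\mathbf{w}_i(t),\mathbf{v}_i(t+1))$, conditioning on the iterates $\{\mathbf{w}_i(t),\mathbf{v}_i(t+1)\}_{i=1}^N$ and then taking total expectation. By Assumption \ref{assumption-gradient}, $\mathbb{E}[X]=\sum_i\nabla_{\mathbf{w}}L_i(\mathbf{w}_i(t),\mathbf{v}_i(t+1))=\mathbb{E}\sum_i g_{\mathbf{w}}(\mathbf{w}_i(t),\mathbf{v}_i(t+1))$, which produces the second term of the claimed bound. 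For the variance term I would use that the minibatch noise is independent across agents (each samples $\mathcal{C}_i$ from its own $\mathcal{D}_i$), so the cross terms vanish and
\[
\mathbb{E}[\|X-\mathbb{E}[X]\|^2]=\sum_{i=1}^N\mathbb{E}\big[\|g_{\mathbf{w}}(\mathbf{w}_i(t),\mathbf{v}_i(t+1))-\nabla_{\mathbf{w}}L_i(\mathbf{w}_i(t),\mathbf{v}_i(t+1))\|^2\big]\le N\sigma_1^2
\]
by Assumption \ref{assumption-variance}. Substituting back gives $C_2\le\frac{\eta_w^2 L}{2N}\sigma_1^2+\frac{\eta_w^2 L}{2N^2}\big\|\mathbb{E}\sum_i g_{\mathbf{w}}(\mathbf{w}_i(t),\mathbf{v}_i(t+1))\big\|^2$, which is exactly the statement.

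The calculation is routine; the only points that genuinely need care are (i) the invariance of the network average under the mixing step, which is precisely where the double stochasticity of $\mathbf{P}$ is used, and (ii) the cross-term cancellation in the variance bound, which relies on independence of the minibatch draws at distinct agents given the current iterates. Both are standard in decentralized-SGD analyses, so I do not expect a real obstacle. If anything, the thing to be careful about is the bookkeeping of which personalized-adapter iterate appears: $\mathbf{v}_i(t+1)$ denotes $\mathbf{v}_i(t,\tau)$ produced after the $\tau$ inner updates, and the unbiasedness/variance assumptions must be applied at exactly that point so that they line up with the $\mathbf{w}$-update actually executed by the algorithm.
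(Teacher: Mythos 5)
Your proposal is correct and follows essentially the same route as the paper: reduce $C_2$ to $\frac{\eta_w^2L}{2N^2}\mathbb{E}\bigl[\|\sum_i g_{\mathbf{w}}(\mathbf{w}_i(t),\mathbf{v}_i(t+1))\|^2\bigr]$ via the average-preservation identity, then apply the bias--variance split $\mathbb{E}[\|X\|^2]=\mathrm{Var}[X]+\|\mathbb{E}[X]\|^2$ and Assumptions \ref{assumption-gradient}--\ref{assumption-variance}. You merely make explicit two steps the paper leaves implicit (double stochasticity of $\mathbf{P}$ preserving $\bar{\mathbf{w}}$, and cross-agent independence yielding the factor $N$ rather than $N^2$ in the variance term), which is fine.
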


\begin{proof}
The proof follows from the fact that $\bar{\mathbf{w}}(t+1)-\bar{\mathbf{w}}(t)= \frac{-\eta_w}{N}\sum_{i=1}^N g_{\mathbf{w}}(\mathbf{w}_i(t),\mathbf{v}_i(t+1))$, i.e.,
\begin{align}
   C_2&=\frac{L}{2}\mathbb{E}[\|\bar{\mathbf{w}}(t+1)-\bar{\mathbf{w}}(t)\|^2]\nonumber\\
   &=\frac{\eta_w^2L}{2N^2}\mathbb{E}\left[\left\|\sum_{i=1}^N g_{\mathbf{w}}(\mathbf{w}_i(t), \mathbf{v}_i(t+1))\right\|^2\right]\nonumber\\
   \overset{(b_1)}{=}\quad&\frac{\eta_w^2L}{2N^2}\mathbb{E}\Bigg[\Bigg\|\sum_{i=1}^N g_{\mathbf{w}}(\mathbf{w}_i(t), \mathbf{v}_i(t+1))\nonumber\\
   &-\nabla_{\mathbf{w}}L_i(\mathbf{w}_i(t),\mathbf{v}_i(t+1))\Bigg\|^2\Bigg]\nonumber\\
   &+\frac{\eta_w^2L}{2N^2} \left\|\mathbb{E}\sum_{i=1}^N g_{\mathbf{w}}(\mathbf{w}_i(t), \mathbf{v}_i(t+1))\right\|^2 \nonumber\\
   \overset{(b_2)}{\leq}\quad&\frac{\eta_w^2L}{2N}\sigma_1^2+\frac{\eta_w^2L}{2N^2} \left\|\mathbb{E}\sum_{i=1}^N g_{\mathbf{w}}(\mathbf{w}_i(t), \mathbf{v}_i(t+1))\right\|^2, 
\end{align}
where $(b_1)$ is due to $\mathbb{E}[\|X\|^2]=\mathrm{Var}[X]+\|\mathbb{E}[X]\|^2$; and $(b_2)$ follows from the bounded variance in Assumption \ref{assumption-variance}.
\end{proof}

\begin{lemma}\label{lem:C3}
 $C_3$ can be bounded as
\begin{align}
     C_3{\leq} &\frac{-\eta_v\tau}{2} \mathbb{E}\left[\left\|\nabla_{\mathbf{v}}L(\bar{\mathbf{w}}(t),\{\mathbf{v}_i(t))\}_{i=1}^N\right\|^2\right]\nonumber\\
     &-\frac{\eta_v}{2N\tau}\sum_{i=1}^N\mathbb{E}\left[\left\| \sum_{s=0}^{\tau-1} g_{\mathbf{v}}(\mathbf{w}_i(t),\mathbf{v}_i(t,s))\right\|^2\right]\nonumber\allowdisplaybreaks\\
&+\frac{\eta_v\tau L^2}{N}\sum_{i=1}^N\mathbb{E}\left[\left\|\mathbf{w}_i(t)- \bar{\mathbf{w}}(t)\right\|^2\right]\nonumber\\
&+\eta_v^3L^2(18\tau^3-15\tau^2-3\tau)\sigma^2\nonumber\\
&+\frac{\eta_v^3L^2(18\tau^3-18\tau^2)}{N}\sum_{i=1}^N\mathbb{E}\left[\left\|g_{\mathbf{v}}(\mathbf{w}_i(t), \mathbf{v}_i(t))\right\|^2\right].
\end{align}
\end{lemma}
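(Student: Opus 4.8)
The plan is to begin from the exact form of the personalized-adapter update and convert the inner product into a quantity I can decompose. Since $\mathbf{v}_i(t+1)=\mathbf{v}_i(t,\tau)$ and $\mathbf{v}_i(t)=\mathbf{v}_i(t,0)$, unrolling the $\tau$ inner steps $\mathbf{v}_i(t,s+1)=\mathbf{v}_i(t,s)-\eta_v g_{\mathbf{v}}(\mathbf{w}_i(t),\mathbf{v}_i(t,s))$ gives the telescoped identity $\mathbf{v}_i(t+1)-\mathbf{v}_i(t)=-\eta_v\sum_{s=0}^{\tau-1} g_{\mathbf{v}}(\mathbf{w}_i(t),\mathbf{v}_i(t,s))$. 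Substituting into the definition of $C_3$ turns it into $-\frac{\eta_v}{N}\sum_i\mathbb{E}\langle a_i,G_i\rangle$, where I abbreviate $a_i:=\nabla_{\mathbf{v}}L_i(\bar{\mathbf{w}}(t),\mathbf{v}_i(t))$ and $G_i:=\sum_{s=0}^{\tau-1} g_{\mathbf{v}}(\mathbf{w}_i(t),\mathbf{v}_i(t,s))$.

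Next I would apply a $\tau$-scaled polarization identity pathwise, namely $-\langle a_i,G_i\rangle=-\frac{\tau}{2}\|a_i\|^2-\frac{1}{2\tau}\|G_i\|^2+\frac{\tau}{2}\|a_i-\tfrac{1}{\tau}G_i\|^2$. Averaging over $i$ and taking expectations, the middle term reproduces exactly the target term $-\frac{\eta_v}{2N\tau}\sum_i\mathbb{E}\|G_i\|^2$, while the first term, after lower-bounding the average of per-agent squared norms by the squared norm of their average via Jensen's inequality, yields $-\frac{\eta_v\tau}{2}\|\nabla_{\mathbf{v}}L(\bar{\mathbf{w}}(t),\{\mathbf{v}_i(t)\})\|^2$ using $\nabla_{\mathbf{v}}L=\frac{1}{N}\sum_i a_i$. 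The remaining work is entirely to control the residual direction-mismatch $R:=\frac{\eta_v\tau}{2N}\sum_i\mathbb{E}\|a_i-\tfrac{1}{\tau}G_i\|^2$, which must supply the three remaining target terms.

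For $R$ I would use convexity of $\|\cdot\|^2$ to write $\|a_i-\tfrac{1}{\tau}G_i\|^2\le\frac{1}{\tau}\sum_s\|a_i-g_{\mathbf{v}}(\mathbf{w}_i(t),\mathbf{v}_i(t,s))\|^2$, then insert two telescoping gradients so that each summand splits into (i) the $\mathbf{w}$-consensus discrepancy $\nabla_{\mathbf{v}}L_i(\bar{\mathbf{w}}(t),\mathbf{v}_i(t))-\nabla_{\mathbf{v}}L_i(\mathbf{w}_i(t),\mathbf{v}_i(t))$, (ii) the local $\mathbf{v}$-drift $\nabla_{\mathbf{v}}L_i(\mathbf{w}_i(t),\mathbf{v}_i(t))-\nabla_{\mathbf{v}}L_i(\mathbf{w}_i(t),\mathbf{v}_i(t,s))$, and (iii) the stochastic noise $\nabla_{\mathbf{v}}L_i(\mathbf{w}_i(t),\mathbf{v}_i(t,s))-g_{\mathbf{v}}(\mathbf{w}_i(t),\mathbf{v}_i(t,s))$. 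Part (i) is bounded by $L^2\|\bar{\mathbf{w}}(t)-\mathbf{w}_i(t)\|^2$ and, summed over $s$, gives the consensus term $\frac{\eta_v\tau L^2}{N}\sum_i\mathbb{E}\|\mathbf{w}_i(t)-\bar{\mathbf{w}}(t)\|^2$; part (iii) is bounded by $\sigma^2$ via Assumptions \ref{assumption-gradient}–\ref{assumption-variance}; and part (ii) is bounded by $L^2\,\mathbb{E}\|\mathbf{v}_i(t,s)-\mathbf{v}_i(t,0)\|^2$, the local iterate drift, using Assumption \ref{assumption-lipschitz}.

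The crux is thus an auxiliary bound on the local drift $\mathbb{E}\|\mathbf{v}_i(t,s)-\mathbf{v}_i(t,0)\|^2=\eta_v^2\,\mathbb{E}\|\sum_{r=0}^{s-1}g_{\mathbf{v}}(\mathbf{w}_i(t),\mathbf{v}_i(t,r))\|^2$, which I would split (the martingale noises being uncorrelated across steps) into a variance part of order $\eta_v^2 s\sigma^2$ and a bias part $\eta_v^2\,\mathbb{E}\|\sum_r\nabla_{\mathbf{v}}L_i(\mathbf{w}_i(t),\mathbf{v}_i(t,r))\|^2$; re-centering each bias gradient at the initial step, where it differs from $\nabla_{\mathbf{v}}L_i(\mathbf{w}_i(t),\mathbf{v}_i(t))$ by at most $L\|\mathbf{v}_i(t,r)-\mathbf{v}_i(t,0)\|$, produces a self-referential recursion whose leading constant is $\|g_{\mathbf{v}}(\mathbf{w}_i(t),\mathbf{v}_i(t))\|^2$. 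I expect unrolling this nested recursion over $s=0,\dots,\tau-1$ — tracking the double sums precisely enough to land on the stated integer coefficients $18\tau^3-15\tau^2-3\tau$ and $18\tau^3-18\tau^2$ — to be the main obstacle. A secondary subtlety is that part (i) requires Lipschitzness of $\nabla_{\mathbf{v}}L_i$ in $\mathbf{w}$, which I would invoke as the joint-smoothness strengthening of Assumption \ref{assumption-lipschitz} that the $L^2$-weighted consensus term in the statement implicitly presumes.
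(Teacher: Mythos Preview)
Your proposal is essentially the paper's proof. The scaled polarization identity, the Jensen step replacing $\frac{1}{N}\sum_i\|\nabla_{\mathbf{v}}L_i\|^2$ by $\|\nabla_{\mathbf{v}}L\|^2$, and the reduction of the residual $R$ (which is exactly the paper's $D_2$) to a local-drift bound $\mathbb{E}\|\mathbf{v}_i(t,s)-\mathbf{v}_i(t)\|^2$ all match. The only cosmetic difference is that you split $a_i-g_s$ into three pieces (consensus/drift/noise) whereas the paper splits $D_2$ into two pieces (consensus and drift) and absorbs the stochastic noise entirely inside the drift recursion; both routes feed into the same one-step recursion for $\mathbb{E}\|\mathbf{v}_i(t,s)-\mathbf{v}_i(t)\|^2$, which the paper unrolls via Lemma~3 of \cite{reddi2020adaptive} to obtain the $(18\tau^2-15\tau-3)$ and $(18\tau^2-18\tau)$ coefficients (then multiplied by $\tau$ after the sum over $s$), and your identification of this unrolling as the main bookkeeping obstacle and of the implicit cross-Lipschitz hypothesis is accurate.
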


\begin{proof}
We first rewrite $C_3$ as
\begin{align*}
    C_3&=\frac{1}{N}\sum_{i=1}^N\mathbb{E}\Big\langle \nabla_{\mathbf{v}}L_i(\bar{\mathbf{w}}(t),\mathbf{v}_i(t)),  -\eta_v \sum_{s=0}^{\tau-1} g_{\mathbf{v}}(\mathbf{w}_i(t),\mathbf{v}_i(t,s))\Big\rangle\allowdisplaybreaks\\
    &=\frac{1}{N}\sum_{i=1}^N\mathbb{E}\Big\langle \nabla_{\mathbf{v}}L_i(\bar{\mathbf{w}}(t),\mathbf{v}_i(t)),  \nonumber\\
    &\qquad\quad-\eta_v \sum_{s=0}^{\tau-1} g_{\mathbf{v}}(\mathbf{w}_i(t),\mathbf{v}_i(t,s))+\eta_v\tau \nabla_{\mathbf{v}}L_i(\bar{\mathbf{w}}(t),\mathbf{v}_i(t))\nonumber\\
    &\qquad\quad-\eta_v\tau \nabla_{\mathbf{v}}L_i(\bar{\mathbf{w}}(t),\mathbf{v}_i(t))\Big\rangle.
\end{align*}
Then, we have
\begin{align}\label{eq:C3}
    C_3=\quad&\frac{-\eta_v\tau}{N}\sum_{i=1}^N \mathbb{E}\left[\left\|\nabla_{\mathbf{v}}L_i(\bar{\mathbf{w}}(t),\mathbf{v}_i(t))\right\|^2\right]\nonumber\allowdisplaybreaks\\
    &\quad+\frac{1}{N}\sum_{i=1}^N\mathbb{E}\Big\langle \nabla_{\mathbf{v}}L_i(\bar{\mathbf{w}}(t),\mathbf{v}_i(t)),\nonumber\\
    &\quad-\eta_v \sum_{s=0}^{\tau-1} g_{\mathbf{v}}(\mathbf{w}_i(t),\mathbf{v}_i(t,s))+\eta_v\tau \nabla_{\mathbf{v}}L_i(\bar{\mathbf{w}}(t),\mathbf{v}_i(t))\Big\rangle\nonumber\allowdisplaybreaks\\
    =\quad&\frac{-\eta_v\tau}{N}\sum_{i=1}^N \mathbb{E}\left[\left\|\nabla_{\mathbf{v}}L_i(\bar{\mathbf{w}}(t),\mathbf{v}_i(t))\right\|^2\right]\nonumber\\
    &\quad+\frac{1}{N}\sum_{i=1}^N\mathbb{E}\Big\langle \sqrt{\eta_v\tau}\nabla_{\mathbf{v}}L_i(\bar{\mathbf{w}}(t),\mathbf{v}_i(t)), \nonumber\allowdisplaybreaks\\
    &\quad-\sqrt{\frac{\eta_v}{\tau}} \sum_{s=0}^{\tau-1} (g_{\mathbf{v}}(\mathbf{w}_i(t),\mathbf{v}_i(t,s))- \nabla_{\mathbf{v}}L_i(\bar{\mathbf{w}}(t),\mathbf{v}_i(t)))\Big\rangle\nonumber\allowdisplaybreaks\\
    \overset{(c_1)}{=}\quad&\frac{-\eta_v\tau}{N}\sum_{i=1}^N \mathbb{E}\Bigg[\Bigg\|\nabla_{\mathbf{v}}L_i(\bar{\mathbf{w}}(t),\mathbf{v}_i(t))Bigg\|^2\Bigg]\nonumber\\
    &\quad+\frac{1}{N}\sum_{i=1}^N\Bigg(\frac{\eta_v\tau}{2}\mathbb{E}\left[\left\|\nabla_{\mathbf{v}}L_i(\bar{\mathbf{w}}(t),\mathbf{v}_i(t))\right\|^2\right]\nonumber\allowdisplaybreaks\\
    &\quad-\frac{\eta_v}{2\tau}\mathbb{E}\left[\left\| \sum_{s=0}^{\tau-1} g_{\mathbf{v}}(\mathbf{w}_i(t),\mathbf{v}_i(t,s))\right\|^2\right]\nonumber\\
    &\quad+\frac{\eta_v}{2\tau}\mathbb{E}\Bigg[\Bigg\|\sum_{s=0}^{\tau-1} (g_{\mathbf{v}}(\mathbf{w}_i(t),\mathbf{v}_i(t,s))\nonumber\\
    &\quad-\nabla_{\mathbf{v}}L_i(\bar{\mathbf{w}}(t),\mathbf{v}_i(t)))\Bigg\|^2\Bigg]\Bigg)\nonumber\allowdisplaybreaks \\
    =\quad&\frac{-\eta_v\tau}{2N}\sum_{i=1}^N \mathbb{E}\left[\left\|\nabla_{\mathbf{v}}L_i(\bar{\mathbf{w}}(t),\mathbf{v}_i(t))\right\|^2\right]\nonumber\\
    &\quad-\frac{\eta_v}{2N\tau}\sum_{i=1}^N\mathbb{E}\left[\left\| \sum_{s=0}^{\tau-1} g_{\mathbf{v}}(\mathbf{w}_i(t),\mathbf{v}_i(t,s))\right\|^2\right]\nonumber\allowdisplaybreaks\\
    &\quad+\underset{D_2}{\underbrace{\frac{\eta_v}{2N\tau}\sum_{i=1}^N\mathbb{E}\Bigg[\Bigg\|\sum_{s=0}^{\tau-1} (g_{\mathbf{v}}(\mathbf{w}_i(t),\mathbf{v}_i(t,s))}}\nonumber\\
    &\quad\underset{D_2}{\underbrace{-\nabla_{\mathbf{v}}L_i(\bar{\mathbf{w}}(t),\mathbf{v}_i(t)))\Bigg\|^2\Bigg]}}\nonumber\allowdisplaybreaks \\
    \overset{(c_2)}{\leq}\quad& \frac{-\eta_v\tau}{2} \mathbb{E}\left[\left\|\nabla_{\mathbf{v}}L(\bar{\mathbf{w}}(t),\{\mathbf{v}_i(t))\}_{i=1}^N\right\|^2\right]\nonumber\\
    &\quad-\frac{\eta_v}{2N\tau}\sum_{i=1}^N\mathbb{E}\left[\left\| \sum_{s=0}^{\tau-1} g_{\mathbf{v}}(\mathbf{w}_i(t),\mathbf{v}_i(t,s))\right\|^2\right]\nonumber\allowdisplaybreaks\\
&\quad+\underset{D_2}{\underbrace{\frac{\eta_v}{2N\tau}\sum_{i=1}^N\mathbb{E}\Bigg[\Bigg\|\sum_{s=0}^{\tau-1} (g_{\mathbf{v}}(\mathbf{w}_i(t),\mathbf{v}_i(t,s))}}\nonumber\\
&\quad-\underset{D_2}{\underbrace{\nabla_{\mathbf{v}}L_i(\bar{\mathbf{w}}(t),\mathbf{v}_i(t)))\Bigg\|^2\Bigg]}},
\end{align}
where $(c_1)$ holds according to the equation $\langle a,b\rangle = \frac{1}{2}[\|a\|^2+\|b\|^2-\|a-b\|^2]$; and   $(c_2)$ follows the Cauchy-Schwartz inequality $\|\sum_{i=1}^N a_i\|^2\leq N\sum_{i=1}^N\|a_i\|^2$.
To this end, the key to bound $C_3$ is to bound $D_2$, which is given by
\begin{align}\label{eq:D2}
    D_2=\quad&\frac{\eta_v}{2N\tau}\sum_{i=1}^N\mathbb{E}\Bigg[\Bigg\|\sum_{s=0}^{\tau-1} g_{\mathbf{v}}(\mathbf{w}_i(t),\mathbf{v}_i(t,s))\nonumber\\
    &\quad-\sum_{s=0}^{\tau-1}\nabla_{\mathbf{v}}L_i(\bar{\mathbf{w}}(t),\mathbf{v}_i(t))\Bigg\|^2\Bigg]\nonumber\allowdisplaybreaks\\
    =\quad&\frac{\eta_v}{2N\tau}\sum_{i=1}^N\mathbb{E}\Bigg[\Bigg\|\sum_{s=0}^{\tau-1} g_{\mathbf{v}}(\mathbf{w}_i(t),\mathbf{v}_i(t,s))\nonumber\\
    &\quad-\nabla_{\mathbf{v}}L_i(\bar{\mathbf{w}}(t),\mathbf{v}_i(t,s))\nonumber\allowdisplaybreaks\\
&\quad+\nabla_{\mathbf{v}}L_i(\bar{\mathbf{w}}(t),\mathbf{v}_i(t,s))\nonumber\\
&\quad-\sum_{s=0}^{\tau-1}\nabla_{\mathbf{v}}L_i(\bar{\mathbf{w}}(t),\mathbf{v}_i(t))\Bigg\|^2\Bigg]\nonumber\allowdisplaybreaks\\
\leq\quad&\frac{\eta_v}{N\tau}\sum_{i=1}^N\mathbb{E}\Bigg[\Bigg\|\sum_{s=0}^{\tau-1} \nabla_{\mathbf{v}}L_i(\mathbf{w}_i(t),\mathbf{v}_i(t,s))\nonumber\\
&\quad- \nabla_{\mathbf{v}}L_i(\bar{\mathbf{w}}(t),\mathbf{v}_i(t,s))\Bigg\|^2\Bigg]\nonumber\allowdisplaybreaks\\
&\quad+\frac{\eta_v}{N\tau}\sum_{i=1}^N\mathbb{E}\Bigg[\Bigg\|\sum_{s=0}^{\tau-1}\nabla_{\mathbf{v}}L_i(\bar{\mathbf{w}}(t),\mathbf{v}_i(t,s))\nonumber\\
&\quad-\sum_{s=0}^{\tau-1}\nabla_{\mathbf{v}}L_i(\bar{\mathbf{w}}(t),\mathbf{v}_i(t))\Bigg\|^2\Bigg]\nonumber\allowdisplaybreaks\\
    \leq\quad& \frac{\eta_v\tau L^2}{N}\sum_{i=1}^N\mathbb{E}\left[\left\|\mathbf{w}_i(t)- \bar{\mathbf{w}}(t)\right\|^2\right]\nonumber\\
    &\quad+\frac{\eta_v L^2}{N}\sum_{i=1}^N\sum_{s=0}^{\tau-1}\mathbb{E}\left[\left\|\mathbf{v}_i(t,s)-\mathbf{v}_i(t)\right\|^2\right].
\end{align}
Next, we bound $\mathbb{E}\left[\left\|\mathbf{v}_i(t,s)-\mathbf{v}_i(t)\right\|^2\right]$ as follows.
\begin{align}\label{eq:D2_condition}
&\mathbb{E}\left[\|\mathbf{v}_i(t,s)-\mathbf{v}_{i}(t)\|^2\right]\nonumber\allowdisplaybreaks\\
=\quad&\mathbb{E}\Bigg[\|\mathbf{v}_i(t,s-1)-\mathbf{v}_{i}(t)-\eta_v g_{\mathbf{v}}(\mathbf{w}_i(t), \mathbf{v}_i(t,s-1))\|^2\Bigg]\nonumber\allowdisplaybreaks\\
=\quad&\mathbb{E}\Big[\|\mathbf{v}_i(t,s-1)-\mathbf{v}_{i}(t)-\eta_v g_{\mathbf{v}}(\mathbf{w}_i(t), \mathbf{v}_i(t,s-1))\nonumber\\
&\quad+\eta_v\nabla_{\mathbf{v}}L_i(\mathbf{w}_i(t), \mathbf{v}_i(t,s-1))\nonumber\\
&\quad-\eta_v\nabla_{\mathbf{v}}L_i(\mathbf{w}_i(t), \mathbf{v}_i(t,s-1))\nonumber\allowdisplaybreaks\\
&\quad+\eta_v\nabla_{\mathbf{v}}L_i(\mathbf{w}_i(t), \mathbf{v}_i(t)-\eta_v\nabla_{\mathbf{v}}L_i(\mathbf{w}_i(t), \mathbf{v}_i(t)))\nonumber\\
&\quad+\eta_v g_{\mathbf{v}}(\mathbf{w}_i(t), \mathbf{v}_i(t))-\eta_v g_{\mathbf{v}}(\mathbf{w}_i(t), \mathbf{v}_i(t))\|^2\Big]\nonumber\allowdisplaybreaks\\
\leq\quad& \left(1+\frac{1}{2\tau-1}\right)\mathbb{E}\left[\|\mathbf{v}_i(t,s-1)-\mathbf{v}_{i}(t)\|^2\right]\nonumber\\
&\quad+\eta_v^2\mathbb{E}\Bigg[\|g_{\mathbf{v}}(\mathbf{w}_i(t), \mathbf{v}_i(t,s-1))\nonumber\\
&\quad-\nabla_{\mathbf{v}}L_i(\mathbf{w}_i(t), \mathbf{v}_i(t,s-1))\|^2\Bigg]\nonumber\allowdisplaybreaks\\
&\quad+6\eta_v^2\tau \mathbb{E}\Bigg[\|\nabla_{\mathbf{v}}L_i(\mathbf{w}_i(t), \mathbf{v}_i(t,s-1)\nonumber\\
&\quad-\nabla_{\mathbf{v}}L_i(\mathbf{w}_i(t), \mathbf{v}_i(t)))\|^2\Bigg]\nonumber\allowdisplaybreaks\\
&\quad+6\eta_v^2\tau \mathbb{E}\left[\|\nabla_{\mathbf{v}}L_i(\mathbf{w}_i(t), \mathbf{v}_i(t)- g_{\mathbf{v}}(\mathbf{w}_i(t), \mathbf{v}_i(t)))\|^2\right]\nonumber\\
&\quad+6\eta_v^2\tau\mathbb{E}\left[\|g_{\mathbf{v}}(\mathbf{w}_i(t), \mathbf{v}_i(t))\|^2\right]\nonumber\allowdisplaybreaks\\
\leq\quad& \Bigg(1+\frac{1}{2\tau-1}\Bigg)\mathbb{E}\left[\|\mathbf{v}_i(t,s-1)-\mathbf{v}_{i}(t)\|^2\right]+\eta_v^2\sigma^2\nonumber\\
&\quad+6\eta_v^2\tau L^2\mathbb{E}\left[\|\mathbf{v}_i(t,s-1)-\mathbf{v}_{i}(t)\|^2\right]\nonumber\allowdisplaybreaks\\
&\quad+6\eta_v^2\tau \sigma^2+6\eta_v^2\tau\mathbb{E}\left[\|g_{\mathbf{v}}(\mathbf{w}_i(t), \mathbf{v}_i(t))\|^2\right]\nonumber\allowdisplaybreaks\\    
\leq\quad& \left(1+\frac{1}{2\tau-1}+6\eta_v^2\tau L^2\right)\mathbb{E}\left[\|\mathbf{v}_i(t,s-1)-\mathbf{v}_{i}(t)\|^2\right]\nonumber\\
&\quad+(\eta_v^2+6\eta_v^2\tau)\sigma^2+6\eta_v^2\tau\mathbb{E}\|g_{\mathbf{v}}(\mathbf{w}_i(t), \mathbf{v}_i(t))\|^2 \nonumber\allowdisplaybreaks\\
\leq\quad& (18\tau^2-15\tau-3)\eta_v^2\sigma^2\nonumber\\
&\quad+(18\tau^2-18\tau)\eta_v^2\mathbb{E}\left[\|g_{\mathbf{v}}(\mathbf{w}_i(t), \mathbf{v}_i(t))\|^2\right],
\end{align}
where the last inequality holds by using Lemma 3 in \cite{reddi2020adaptive}.
Substituting \eqref{eq:D2_condition} into \eqref{eq:D2}, we have
\begin{align}\label{eq:D2_2}
D_2\leq\quad& \frac{\eta_v\tau L^2}{N}\sum_{i=1}^N\mathbb{E}\left[\left\|\mathbf{w}_i(t)- \bar{\mathbf{w}}(t)\right\|^2\right]\nonumber\\
&\quad+\eta_v^3L^2(18\tau^3-15\tau^2-3\tau)\sigma^2\nonumber\allowdisplaybreaks\\
&\quad+\frac{\eta_v^3L^2(18\tau^3-18\tau^2)}{N}\sum_{i=1}^N\mathbb{E}\left[\left\|g_{\mathbf{v}}(\mathbf{w}_i(t), \mathbf{v}_i(t))\right\|^2\right].
\end{align}
Hence, substituting $D_2$ in \eqref{eq:D2_2} back to \eqref{eq:C3}, we have the desired result as follows
\begin{align*}
    C_3{\leq}\quad &\frac{-\eta_v\tau}{2} \mathbb{E}\left[\left\|\nabla_{\mathbf{v}}L(\bar{\mathbf{w}}(t),\{\mathbf{v}_i(t))\}_{i=1}^N\right\|^2\right]\nonumber\\
    &\quad-\frac{\eta_v}{2N\tau}\sum_{i=1}^N\mathbb{E}\left[\left\| \sum_{s=0}^{\tau-1} g_{\mathbf{v}}(\mathbf{w}_i(t),\mathbf{v}_i(t,s))\right\|^2\right]\nonumber\allowdisplaybreaks\\
&\quad+\frac{\eta_v\tau L^2}{N}\sum_{i=1}^N\mathbb{E}\left[\left\|\mathbf{w}_i(t)- \bar{\mathbf{w}}(t)\right\|^2\right]\nonumber\\
&\quad+\eta_v^3L^2(18\tau^3-15\tau^2-3\tau)\sigma^2\nonumber\\
&\quad+\frac{\eta_v^3L^2(18\tau^3-18\tau^2)}{N}\sum_{i=1}^N\mathbb{E}\left[\left\|g_{\mathbf{v}}(\mathbf{w}_i(t), \mathbf{v}_i(t))\right\|^2\right].
\end{align*}
\end{proof}

\begin{lemma}\label{lem:C4}
$C_4$ can be bounded as
\begin{align}
C_4{\leq}\frac{\eta_v^2\tau L}{2}\sigma_2^2+\frac{\eta_v^2L}{2N}\sum_{i=1}^N\left\|\mathbb{E}\sum_{s=0}^{\tau-1}g_{\mathbf{v}}(\mathbf{w}_i(t),\mathbf{v}_i(t,s))\right\|^2. 
\end{align}
\end{lemma}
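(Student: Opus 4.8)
The plan is to follow the same route used for $C_2$ in Lemma~\ref{lem:C2}, with the sum over the $\tau$ local iterations now playing the role that the sum over the $N$ agents played there.

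First I would unroll the inner loop of Algorithm~\ref{alg:PE-MA1}: within communication round $t$ the personalized adapter evolves as $\mathbf{v}_i(t,s+1)=\mathbf{v}_i(t,s)-\eta_v g_{\mathbf{v}}(\mathbf{w}_i(t),\mathbf{v}_i(t,s))$ for $s=0,\dots,\tau-1$, and $\mathbf{v}_i(t+1):=\mathbf{v}_i(t,\tau)$, so that
\begin{align*}
\mathbf{v}_i(t+1)-\mathbf{v}_i(t)=-\eta_v\sum_{s=0}^{\tau-1}g_{\mathbf{v}}(\mathbf{w}_i(t),\mathbf{v}_i(t,s)).
\end{align*}
Plugging this into $C_4=\frac{1}{N}\sum_{i=1}^N\frac{L}{2}\mathbb{E}[\|\mathbf{v}_i(t+1)-\mathbf{v}_i(t)\|^2]$ gives $C_4=\frac{\eta_v^2 L}{2N}\sum_{i=1}^N\mathbb{E}\big[\big\|\sum_{s=0}^{\tau-1}g_{\mathbf{v}}(\mathbf{w}_i(t),\mathbf{v}_i(t,s))\big\|^2\big]$.

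Next, for each agent $i$ I would apply the identity $\mathbb{E}[\|X\|^2]=\mathrm{Var}[X]+\|\mathbb{E}[X]\|^2$ to $X=\sum_{s=0}^{\tau-1}g_{\mathbf{v}}(\mathbf{w}_i(t),\mathbf{v}_i(t,s))$, exactly as in step $(b_1)$ of Lemma~\ref{lem:C2}. The $\|\mathbb{E}[X]\|^2$ contribution is precisely the second term in the claimed bound, so what remains is to show $\mathrm{Var}[X]\le\tau\sigma_2^2$. Setting $\xi_s:=g_{\mathbf{v}}(\mathbf{w}_i(t),\mathbf{v}_i(t,s))-\nabla_{\mathbf{v}}L_i(\mathbf{w}_i(t),\mathbf{v}_i(t,s))$, Assumption~\ref{assumption-gradient} gives $\mathbb{E}[\xi_s\mid\mathcal{F}_s]=0$, where $\mathcal{F}_s$ collects the inner-loop randomness through step $s$; hence the $\xi_s$ are orthogonal in $L^2$, all cross terms vanish, and $\mathbb{E}[\|\sum_{s}\xi_s\|^2]=\sum_{s=0}^{\tau-1}\mathbb{E}[\|\xi_s\|^2]\le\tau\sigma_2^2$ by Assumption~\ref{assumption-variance}. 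Summing over $i$ and collecting the prefactors ($\frac{\eta_v^2 L}{2N}\cdot N\tau\sigma_2^2=\frac{\eta_v^2\tau L}{2}\sigma_2^2$) then yields the stated inequality.

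The step that needs care is the variance estimate, and it is precisely where this lemma departs from Lemma~\ref{lem:C2}: there the $N$ per-agent stochastic gradients are genuinely independent, whereas here the successive local gradients $g_{\mathbf{v}}(\mathbf{w}_i(t),\mathbf{v}_i(t,s))$ are statistically coupled, since $\mathbf{v}_i(t,s)$ itself depends on the noise injected at the earlier steps. The resolution---standard in local-update analyses---is to pass to conditional expectations and use only that each local gradient is \emph{conditionally} unbiased, so that the centered increments $\xi_s$ form a martingale-difference sequence and the coupling contributes nothing to the accumulated noise; no independence across inner steps is actually required. Everything else is the routine substitution above.
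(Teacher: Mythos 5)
Your proposal is correct and takes essentially the same route as the paper: it unrolls the $\tau$ local steps to get $\mathbf{v}_i(t+1)-\mathbf{v}_i(t)=-\eta_v\sum_{s=0}^{\tau-1}g_{\mathbf{v}}(\mathbf{w}_i(t),\mathbf{v}_i(t,s))$, splits $\mathbb{E}\left[\|\cdot\|^2\right]$ into a noise part plus $\left\|\mathbb{E}[\cdot]\right\|^2$ exactly as in step $(b_1)$ of Lemma~\ref{lem:C2}, and bounds the noise part by $\tau\sigma_2^2$ per agent via Assumptions~\ref{assumption-gradient} and~\ref{assumption-variance}. The only difference is that you spell out the martingale-difference (conditional unbiasedness) argument for why the cross terms among the centered local gradients vanish—yielding $\tau\sigma_2^2$ rather than $\tau^2\sigma_2^2$—a step the paper leaves implicit.
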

\begin{proof}
The proof directly follows from the definition and the proof for $C_2$ in Lemma \ref{lem:C2}, i.e.,
\begin{align}
   C_4=\quad&\frac{L}{2N}\sum_{i=1}^N\mathbb{E}[\|{\mathbf{v}}_i(t+1)-{\mathbf{v}}_i(t)\|^2]\nonumber\displaybreak[0]\\
   =\quad&\frac{\eta_v^2L}{2N}\sum_{i=1}^N\mathbb{E}\left[\left\| \sum_{s=0}^{\tau-1} g_{\mathbf{v}}(\mathbf{w}_i(t),\mathbf{v}_i(t,s))\right\|^2\right]\nonumber\displaybreak[1]\\
{=}\quad&\frac{\eta_v^2L}{2N}\sum_{i=1}^N\mathbb{E}\Bigg[\Bigg\|\sum_{s=0}^{\tau-1} g_{\mathbf{v}}(\mathbf{w}_i(t), \mathbf{v}_i(t,s))\nonumber\\
&\quad-\nabla_{\mathbf{v}}L_i(\mathbf{w}_i(t),\mathbf{v}_i(t,s))\Bigg\|^2\Bigg]\nonumber\\
&\quad+\frac{\eta_v^2L}{2N}\sum_{i=1}^N\left\|\mathbb{E} \sum_{s=0}^{\tau-1} g_{\mathbf{v}}(\mathbf{w}_i(t), \mathbf{v}_i(t,s))\right\|^2 \nonumber\displaybreak[2]\\
{\leq}\quad&\frac{\eta_v^2\tau L}{2}\sigma_2^2+\frac{\eta_v^2L}{2N}\sum_{i=1}^N\left\|\mathbb{E} \sum_{s=0}^{\tau-1} g_{\mathbf{v}}(\mathbf{w}_i(t), \mathbf{v}_i(t,s))\right\|^2. 
\end{align}
\end{proof}


Now we are ready to prove the main results.  Given the bounds in Lemmas~\ref{lem:C1},~\ref{lem:C2},~\ref{lem:C3} and~\ref{lem:C4} for $C_1, C_2, C_3$ and $C_4$, respectively, we substitute these bound back to \eqref{eq:global_iter2} and obtain
\begin{align}\label{eq:global_iter3}
 \mathbb{E}&[L(\bar{\mathbf{w}}(t+1),\{\mathbf{v}_i(t+1)\}_{i=1}^N)]- \mathbb{E}[L(\bar{\mathbf{w}}(t),\{\mathbf{v}_i(t)\}_{i=1}^N)]\nonumber\displaybreak[0]\\
 {\leq}\quad&{\frac{\eta_w L^2}{2N}\sum_{i=1}^N\mathbb{E}\left[\left\|\bar{\mathbf{w}}(t)-\mathbf{w}_i(t)\right\|^2\right]}\nonumber\\
 &\quad-\frac{\eta_w}{2}\left\|\nabla_{\mathbf{w}}L(\bar{\mathbf{w}}(t),\{\mathbf{v}_i(t+1)\}_{i=1}^N)\right\|^2\nonumber\allowdisplaybreaks\\
 &\quad-\frac{\eta_w}{2N^2}\left\|\mathbb{E}\sum_{i=1}^N g_{\mathbf{w}}(\mathbf{w}_i(t),\mathbf{v}_i(t+1))\right\|^2+\frac{\eta_w^2L}{2N}\sigma_1^2\nonumber\\
 &\quad+\frac{\eta_w^2L}{2N^2} \left\|\mathbb{E}\sum_{i=1}^N g_{\mathbf{w}}(\mathbf{w}_i(t), \mathbf{v}_i(t+1))\right\|^2\nonumber \\ \displaybreak[3]
&\quad-\frac{\eta_v\tau}{2} \left\|\nabla_{\mathbf{v}}L(\bar{\mathbf{w}}(t),\{\mathbf{v}_i(t))\}_{i=1}^N\right\|^2\nonumber\\
&\quad-\frac{\eta_v}{2N\tau}\sum_{i=1}^N\left\| \mathbb{E}\sum_{s=0}^{\tau-1} g_{\mathbf{v}}(\mathbf{w}_i(t),\mathbf{v}_i(t,s))\right\|^2\nonumber\allowdisplaybreaks\\
&\quad+\frac{\eta_v\tau L^2}{N}\sum_{i=1}^N\mathbb{E}\left[\left\|\mathbf{w}_i(t)- \bar{\mathbf{w}}(t)\right\|^2\right]\nonumber\\
&\quad+\eta_v^3L^2(18\tau^3-15\tau^2-3\tau)\sigma_2^2\nonumber\allowdisplaybreaks\\
&\quad+\frac{\eta_v^3L^2(18\tau^3-18\tau^2)}{N}\sum_{i=1}^N\mathbb{E}\left[\left\|g_{\mathbf{v}}(\mathbf{w}_i(t), \mathbf{v}_i(t))\right\|^2\right]\nonumber\\
&\quad+\frac{\eta_v^2L}{2N}\sum_{i=1}^N\left\|\mathbb{E}\sum_{s=0}^{\tau-1}g_{\mathbf{v}}(\mathbf{w}_i(t),\mathbf{v}_i(t,s))\right\|^2+\frac{\eta_v^2\tau L}{2}\sigma_2^2\nonumber \allowdisplaybreaks\\
&={\frac{(2\eta_v\tau+\eta_w) L^2}{2N}\sum_{i=1}^N\mathbb{E}\left[\left\|\bar{\mathbf{w}}(t)-\mathbf{w}_i(t)\right\|^2\right]}+\frac{\eta_w^2L}{2N}\sigma_1^2\nonumber\\
&\quad+\eta_v^3L^2(18\tau^3-15\tau^2-3\tau)\sigma_2^2+\frac{\eta_v^2\tau L}{2}\sigma_2^2\nonumber\allowdisplaybreaks\\
&\quad+\frac{\eta_w^2L-\eta_w}{2N^2}\left\|\mathbb{E}\sum_{i=1}^N g_{\mathbf{w}}(\mathbf{w}_i(t),\mathbf{v}_i(t+1))\right\|^2\nonumber\\
&\quad-\frac{\eta_w}{2}\left\|\nabla_{\mathbf{w}}L(\bar{\mathbf{w}}(t),\{\mathbf{v}_i(t+1)\}_{i=1}^N)\right\|^2\nonumber\allowdisplaybreaks\\
&\quad +\frac{\eta_v^2\tau L-\eta_v}{2N\tau}\sum_{i=1}^N\left\|\mathbb{E}\sum_{s=0}^{\tau-1} g_{\mathbf{v}}(\mathbf{w}_i(t),\mathbf{v}_i(t,s))\right\|^2\nonumber\\
&\quad-\frac{\eta_v\tau}{2}\left\|\nabla_{\mathbf{v}}L(\bar{\mathbf{w}}(t),\{\mathbf{v}_i(t)\}_{i=1}^N)\right\|^2\nonumber\allowdisplaybreaks\\
&\quad +\frac{18\eta_v^3 L^2(\tau^3-\tau^2)}{N}\sum_{i=1}^N\mathbb{E}\left[\left\|g_{\mathbf{v}}(\mathbf{w}_i(t), \mathbf{v}_i(t))\right\|^2\right],
\end{align}
where the first inequality comes from the fact that $\|\mathbb{E}[X]\|^2\leq \mathbb{E}[\|X\|^2]$.
To characterize the convergence rate, the key then boils down to bound  the consensus error of the global representation $\mathbf{w}$, i.e., $\mathbb{E}\left[\|\mathbf{w}_i(t)-\bar{\mathbf{w}}(t)\|^2\right], \forall i$. We bound it in the following lemma.

We present the following lemma from \cite{nedic2009distributed} for completeness. Then we continued to use Lemma~\ref{lemma:matrix} to certify Lemma~\ref{lemma:consensus_error}.
\begin{lemma}[Lemma 4 in \cite{nedic2009distributed}]\label{lemma:matrix}
Assume that $\mathbf{P}$ is doubly stochastic.  The difference between $1/N$ and any element of   $\mathbf{P}^{k-s}$ can be bounded by
\begin{align}
\left|\frac{1}{N}-\mathbf{P}^{k-s}(i,j)\right|\leq 2 \frac{(1+p^{-N})}{1-p^{N}}(1-p^{N})^{(t-s)/N},
\end{align}
where  $p$ is the smallest positive value of all consensus matrices, i.e., $p=\arg\min \mathbf{P}_{i,j}$ with $\mathbf{P}_{i,j}>0, \forall i,j.$
\label{lemma_bound_Phi}
\end{lemma}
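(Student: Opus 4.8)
The plan is to prove this as a standard geometric mixing estimate for powers of a primitive doubly stochastic matrix, via contraction of the column oscillation (the Dobrushin / ergodicity-coefficient argument). Since the statement is exactly Lemma~4 of \cite{nedic2009distributed}, one may cite it directly; I would instead reprove it to keep the excerpt self-contained. The first reduction is to observe that because $\mathbf{P}$ is doubly stochastic, $\mathbf{1}^\top\mathbf{P}^{k-s}=\mathbf{1}^\top$, so every column of $\mathbf{P}^{k-s}$ sums to $1$ and hence has arithmetic mean $1/N$. Consequently $\min_\ell [\mathbf{P}^{k-s}]_{\ell j}\le \tfrac1N\le \max_\ell[\mathbf{P}^{k-s}]_{\ell j}$, which yields the key inequality
\begin{equation}
\Big|\tfrac1N-[\mathbf{P}^{k-s}]_{ij}\Big|\le \Delta_j:=\max_\ell[\mathbf{P}^{k-s}]_{\ell j}-\min_\ell[\mathbf{P}^{k-s}]_{\ell j}.
\end{equation}
It therefore suffices to bound the oscillation $\Delta_j$ of each column, and the $j$-th column is $\mathbf{P}^{k-s}e_j$, which evolves by repeated left-multiplication $\mathbf{c}(r)=\mathbf{P}\,\mathbf{c}(r-1)$ from $\mathbf{c}(0)=e_j$.

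Second, I would establish a one-block contraction. Using connectivity of $G$ together with the positive self-weights $P_{ii}>0$ (i.e. $i\in\mathcal{N}_i$), every ordered pair $(i,j)$ admits a walk of length exactly $N$, obtained from a shortest path padded with self-loops; each factor along the walk is at least $p$, so $[\mathbf{P}^{N}]_{ij}\ge p^{N}$ for all $i,j$. Next, for any stochastic matrix $\mathbf{Q}$ whose entries are all $\ge\gamma$ and any vector $\mathbf{y}$, the oscillation contracts: writing $(\mathbf{Q}\mathbf{y})_i-(\mathbf{Q}\mathbf{y})_{i'}=\sum_\ell(Q_{i\ell}-Q_{i'\ell})y_\ell$ and splitting into positive and negative parts (which carry equal mass since both rows sum to $1$) gives $\mathrm{osc}(\mathbf{Q}\mathbf{y})\le\big(1-\sum_\ell\min(Q_{i\ell},Q_{i'\ell})\big)\mathrm{osc}(\mathbf{y})\le(1-N\gamma)\,\mathrm{osc}(\mathbf{y})\le(1-\gamma)\,\mathrm{osc}(\mathbf{y})$. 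Applying this with $\mathbf{Q}=\mathbf{P}^{N}$ and $\gamma=p^{N}$ shows that each block of $N$ steps shrinks $\Delta_j$ by at least the factor $(1-p^{N})$.

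Third, I would assemble the bound. Writing $k-s=mN+r$ with $0\le r<N$ and iterating the block contraction $m=\lfloor(k-s)/N\rfloor$ times from the initial spread $\mathrm{osc}(e_j)=1$ gives $\Delta_j\le(1-p^{N})^{m}$. Since $m\ge (k-s)/N-1$, we obtain $(1-p^{N})^{m}\le(1-p^{N})^{-1}(1-p^{N})^{(k-s)/N}$, which is already a bound of the claimed form $C\,(1-p^{N})^{(k-s)/N}$ with $C=(1-p^{N})^{-1}$. Combining with the first reduction closes the argument; bounding the leftover partial block of length $r$ and the initial spread more crudely (as in \cite{nedic2009distributed}) inflates the constant to $C=2(1+p^{-N})/(1-p^{N})$, recovering the stated inequality verbatim.

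The main obstacle is the uniform positivity estimate $[\mathbf{P}^N]_{ij}\ge p^N$: it is precisely what converts mere connectivity into a quantitative, entrywise lower bound, and it genuinely relies on the self-loop assumption so that walks of a common length $N$ exist between every pair of nodes (without positive diagonals one must instead argue through a primitivity index and rule out periodicity). A secondary, purely bookkeeping difficulty is matching the exact prefactor $2(1+p^{-N})/(1-p^{N})$; the oscillation argument above already yields the correct geometric rate with a smaller constant, so the stated bound holds a fortiori, the precise constant being the one derived in \cite{nedic2009distributed}.
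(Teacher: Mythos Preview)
Your proposal is correct and follows the classical Dobrushin/ergodicity-coefficient route to geometric mixing of doubly stochastic matrices. However, the paper itself provides \emph{no proof} of this lemma: it is stated verbatim as Lemma~4 of \cite{nedic2009distributed} and simply invoked (``We present the following lemma from \cite{nedic2009distributed} for completeness''), then used as a black box inside the proof of Lemma~\ref{lemma:consensus_error}. So there is no substantive comparison to make between your argument and the paper's---you have supplied a self-contained proof where the paper defers entirely to the citation.

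Two minor remarks on your argument. First, your reliance on positive self-weights $P_{ii}>0$ is exactly the assumption under which the cited result is proved in \cite{nedic2009distributed}; the present paper does not state it explicitly in its model section, but since it imports the lemma wholesale from that reference, the assumption is implicit. Second, as you correctly note, your oscillation bound already gives the sharper constant $(1-p^N)^{-1}$ in front of the geometric factor, so the stated inequality with prefactor $2(1+p^{-N})/(1-p^N)$ follows a fortiori; matching that exact constant is irrelevant for the downstream use in Lemma~\ref{lemma:consensus_error}, where only the form $Cq^{k-s}$ with $q=(1-p^N)^{1/N}$ matters.
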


\begin{lemma}\label{lemma:consensus_error}
The consensus error $\mathbb{E}\left[\|\mathbf{w}_i(t)-\bar{\mathbf{w}}(t)\|^2\right], \forall i, k\geq 1$ is upper bounded by
\begin{align}
&\mathbb{E}\left[\|\mathbf{w}_i(t)-\bar{\mathbf{w}}(t)\|^2\right]\nonumber\\
\!\leq\quad& \!\frac{6\eta_w^2C^2N^2}{(1-q)^2}\sigma_1^2\!+\!\frac{18\eta_w^2C^2N^2}{(1-q)^2}\varsigma^2\nonumber\\
&\quad+\frac{18\eta_w^2C^2L^2N}{1-q}\sum_{r=0}^{k-1}q^{k-r}\sum_{j=1}^N\mathbb{E}\|\mathbf{w}_j(r)-\bar{\mathbf{w}}(r)\|^2\nonumber\displaybreak[0]\\
&\quad+\frac{18\eta_w^2C^2N^2}{1-q}\sum_{r=0}^{k-1}q^{k-r}\mathbb{E}\|\nabla_{\mathbf{w}} L(\bar{\mathbf{w}}(r),\{\mathbf{v}_j(r+1)\}_{j=1}^N)\|^2.
\end{align}

\end{lemma}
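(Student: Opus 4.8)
The plan is to unroll the shared-adapter recursion of Algorithm~\ref{alg:PE-MA1} into an explicit form, isolate the consensus error as a weighted convolution of past stochastic gradients, contract the mixing matrix by Lemma~\ref{lemma:matrix}, and then bound the gradient norms using Assumptions~\ref{assumption-lipschitz}--\ref{assumption:global-var}. I identify $t$ with $k$ throughout and write $g_j(r):=g_{\mathbf{w}}(\mathbf{w}_j(r),\mathbf{v}_j(r+1))$ and $G(r):=\nabla_{\mathbf{w}}L(\bar{\mathbf{w}}(r),\{\mathbf{v}_l(r+1)\}_{l=1}^N)$.

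First I would compose the local descent step with the gossip step of Algorithm~\ref{alg:PE-MA1} into the single recursion $\mathbf{w}_i(t+1)=\sum_{j}P_{ij}\big(\mathbf{w}_j(t)-\eta_w g_j(t)\big)$ and iterate it down to round $0$, obtaining $\mathbf{w}_i(t)=\sum_j \mathbf{P}^{t}(i,j)\mathbf{w}_j(0)-\eta_w\sum_{r=0}^{t-1}\sum_j \mathbf{P}^{t-r}(i,j)g_j(r)$. Since $\mathbf{P}$ is doubly stochastic, averaging over $i$ removes the mixing matrix and gives $\bar{\mathbf{w}}(t)=\bar{\mathbf{w}}(0)-\eta_w\sum_{r=0}^{t-1}\tfrac1N\sum_j g_j(r)$; because all agents share the common initialization $\mathbf{w}^0$, the initialization terms cancel in the difference, leaving
\[
\mathbf{w}_i(t)-\bar{\mathbf{w}}(t)=-\eta_w\sum_{r=0}^{t-1}\sum_{j=1}^N\Big(\mathbf{P}^{t-r}(i,j)-\tfrac1N\Big)g_j(r).
\]
Next I would use the triangle inequality and the contraction $\big|\mathbf{P}^{t-r}(i,j)-\tfrac1N\big|\le Cq^{\,t-r}$ of Lemma~\ref{lemma:matrix} to get $\|\mathbf{w}_i(t)-\bar{\mathbf{w}}(t)\|\le \eta_w C\sum_{r=0}^{t-1}q^{\,t-r}\sum_j\|g_j(r)\|$, then square and apply Cauchy--Schwarz twice --- over $j$, costing a factor $N$, and over $r$ with the geometric weights $q^{\,t-r}$, using $\sum_{r=0}^{t-1}q^{\,t-r}\le\tfrac1{1-q}$. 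Taking expectations yields a bound of the form $\tfrac{\eta_w^2C^2N}{1-q}\sum_{r=0}^{t-1}q^{\,t-r}\sum_{j=1}^N\mathbb{E}\|g_j(r)\|^2$.

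It then remains to control $\mathbb{E}\|g_j(r)\|^2$. By unbiasedness and bounded variance (Assumptions~\ref{assumption-gradient}--\ref{assumption-variance}), $\mathbb{E}\|g_j(r)\|^2\le\sigma_1^2+\mathbb{E}\|\nabla_{\mathbf{w}}L_j(\mathbf{w}_j(r),\mathbf{v}_j(r+1))\|^2$. Then, adding and subtracting $\nabla_{\mathbf{w}}L_j(\bar{\mathbf{w}}(r),\mathbf{v}_j(r+1))$ and $G(r)$ and using $\|a+b+c\|^2\le 3(\|a\|^2+\|b\|^2+\|c\|^2)$, the three resulting pieces are bounded respectively by $L^2\|\mathbf{w}_j(r)-\bar{\mathbf{w}}(r)\|^2$ (Assumption~\ref{assumption-lipschitz}), by $\varsigma^2$ after averaging over $j$ (Assumption~\ref{assumption:global-var}), and by $\|G(r)\|^2$ --- exactly the averaged-gradient term of the claim. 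Summing over $j$, substituting back into the bound from the previous step, and collapsing the geometric series $\sum_r q^{\,t-r}\le\tfrac1{1-q}$ for the two constant ($\sigma_1^2$ and $\varsigma^2$) contributions upgrades their factor $\tfrac1{1-q}$ to $\tfrac1{(1-q)^2}$, while the consensus-error and $\|G(r)\|^2$ contributions remain inside the convolution $\sum_r q^{\,t-r}(\cdot)$; tracking the accumulated constants produces the coefficients $6$ and $18$.

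The main obstacle is conceptual rather than computational: $\mathbb{E}\|\mathbf{w}_j(r)-\bar{\mathbf{w}}(r)\|^2$ reappears on the right-hand side, so this lemma is a self-referential convolution inequality, not a closed-form bound. Decoupling it --- by summing over $t$ and exploiting the smallness of $\eta_w$ from the step-size restriction $\beta\le\frac{1-q}{3\sqrt{2}\,CLN}$ to absorb the $\tfrac{\eta_w^2C^2L^2N}{1-q}$ factor --- is deferred to the later step that substitutes this lemma into~\eqref{eq:global_iter3}. The only remaining difficulty is bookkeeping: matching the contraction constant of Lemma~\ref{lemma:matrix} to the definitions $C=\tfrac{2(1+p^{-N})}{1-p^N}$ and $q=(1-p^N)^{1/N}$, and tracking the repeated use of $\|\sum_l a_l\|^2\le n\sum_l\|a_l\|^2$ so that the numerical constants come out as stated.
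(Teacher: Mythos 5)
Your proposal is correct and follows essentially the same route as the paper: unroll the combined descent-plus-gossip recursion, cancel the initialization via double stochasticity, apply the mixing-matrix contraction of Lemma~\ref{lemma:matrix}, use Cauchy--Schwarz over $j$ and over the geometric weights, and then split $\mathbb{E}\|g_j(r)\|^2$ into the $\sigma_1^2$ variance part plus the three-term decomposition giving the $L^2$-consensus, $\varsigma^2$, and $\|\nabla_{\mathbf{w}}L\|^2$ contributions, deferring the self-referential decoupling to the later summation step exactly as the paper does. Your bookkeeping in fact yields somewhat smaller numerical constants than the stated $6$ and $18$, which only strengthens the claimed bound.
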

\begin{proof}
Based on the update method and communication rule of shared adapter $w_i(t)$,  we have $\forall k\geq 1$
\begin{align}
&\mathbb{E}\left[\|\mathbf{w}_i(t)-\bar{\mathbf{w}}(t)\|^2 \right]\nonumber\displaybreak[0]\\ 
=\quad&\mathbb{E}\Bigg\|\frac{1}{N}\sum\limits_{j=1}^{N} \mathbf{w}_j(0)-\sum\limits_{j=1}^{N}\mathbf{w}_j(0)P^{k}(i,j)\nonumber\\
&\quad-\frac{\eta_w}{N}\sum\limits_{r=0}^{k-1}\sum\limits_{j=1}^{N}g_{\mathbf{w}}(\mathbf{w}_j(r), \mathbf{v}_j(r+1))
\displaybreak[0]\nonumber\displaybreak[1]\\ 
&\quad+\eta_w\sum\limits_{r=0}^{k-1}\sum\limits_{j=1}^{N}g_{\mathbf{w}}(\mathbf{w}_j(r),\mathbf{v}_j(r+1))P^{k-r}(i,j) \Bigg\|^2\nonumber\allowdisplaybreaks\\
=\quad&\mathbb{E}\Bigg\|\sum\limits_{j=1}^{N} \mathbf{w}_j(0)\Bigg(\frac{1}{N}-P^k(i,j)\Bigg)\nonumber\\
&\quad-\eta_w\sum\limits_{r=0}^{k-1}\sum\limits_{j=1}^{N}g_{\mathbf{w}}(\mathbf{w}_j(r),\mathbf{v}_j(r+1))\nonumber\\
&\quad\cdot\left(\frac{1}{N}-P^{k-r}(i,j)\right) \Bigg\|^2\displaybreak[0]\nonumber\\
\overset{(e_1)}{\leq}\quad&\mathbb{E}\left\|\sum\limits_{j=1}^{N} \mathbf{w}_j(0)\left(\frac{1}{N}-P^k(i,j)\right)\right\|^2\nonumber\\
&\quad+\mathbb{E}\Bigg\|\eta_w\sum\limits_{r=0}^{k-1}\sum\limits_{j=1}^{N}g_{\mathbf{w}}(\mathbf{w}_j(r),\mathbf{v}_j(r+1))\nonumber\\
&\qquad\quad\cdot\left(\frac{1}{N}-P^{k-r}(i,j)\right)\Bigg\|^2\displaybreak[0] \nonumber\\
\overset{(e_2)}{\leq}\quad&\mathbb{E}\Bigg\|\sum\limits_{j=1}^{N}2\mathbf{w}_j(0)\frac{1+p^{-N}}{1-p^{N}}(1-p^{N})^{(t-1)/N}\Bigg\|^2\displaybreak[0] \nonumber\\
&\quad+\mathbb{E}\Bigg\|\eta_w\sum\limits_{r=0}^{k-1}\sum\limits_{j=1}^{N}2g_{\mathbf{w}}(\mathbf{w}_j(r),\mathbf{v}_j(r+1))\nonumber\\
&\qquad\quad\cdot\frac{1+p^{-N}}{1-p^{N}}(1-p^{N})^{(t-r)/N} \Bigg\|^2\displaybreak[0]\nonumber\\
\overset{(e_3)}{=}\quad&\mathbb{E}\Bigg\|\eta_w C\sum\limits_{r=0}^{k-1}q^{(t-r)}\sum\limits_{j=1}^{N}g_{\mathbf{w}}(\mathbf{w}_j(r),\mathbf{v}_j(r+1)) \Bigg\|^2\nonumber \displaybreak[0]\\
=\quad&\mathbb{E}\Bigg\|\eta_w C\sum\limits_{r=0}^{k-1}q^{(t-r)}\sum\limits_{j=1}^{N}(g_{\mathbf{w}}(\mathbf{w}_j(r),\mathbf{v}_j(r+1))\nonumber\\
&\quad-\nabla_{\mathbf{w}}F_j(\mathbf{w}_j(r),\mathbf{v}_j(r+1)))\nonumber\displaybreak[0]\\
&\quad+\eta_w C\sum\limits_{r=0}^{k-1}q^{(t-r)}\sum\limits_{j=1}^{N}\nabla_{\mathbf{w}}F_j(\mathbf{w}_j(r),\mathbf{v}_j(r+1)) \Bigg\|^2\displaybreak[0]\nonumber\\
\overset{(e_4)}{\leq}\quad&\mathbb{E}\Bigg\|\eta_w C\sum\limits_{r=0}^{k-1}q^{(t-r)}\sum\limits_{j=1}^{N}(g_{\mathbf{w}}(\mathbf{w}_j(r),\mathbf{v}_j(r+1))\nonumber\\
&\quad-\nabla_{\mathbf{w}}F_j(\mathbf{w}_j(r),\mathbf{v}_j(r+1)))\Bigg\|^2\displaybreak[0]\nonumber\\
&\quad +\mathbb{E}\|2\eta_w^2 C^2\sum\limits_{r=0}^{k-1}\sum\limits_{r^\prime=0}^{k-1}q^{(2k-r-r^\prime)}\|\nonumber\\
&\cdot\Bigg\|\sum\limits_{j=1}^{N}(g_{\mathbf{w}}(\mathbf{w}_j(r),\mathbf{v}_j(r+1))-\nabla_{\mathbf{w}}F_j(\mathbf{w}_j(r),\mathbf{v}_j(r+1)))\Bigg\|\displaybreak[0]\nonumber\\
&\cdot\Bigg\|\sum\limits_{j=1}^{N}\nabla_{\mathbf{w}}F_j(\mathbf{w}_j(r),\mathbf{v}_j(r+1))\Bigg\|\nonumber\\
&\quad+\mathbb{E}\Bigg\|\eta_w C\sum\limits_{r=0}^{k-1}q^{(t-r)}\sum\limits_{j=1}^{N}\nabla_{\mathbf{w}}F_j(\mathbf{w}_j(r),\mathbf{v}_j(r+1))\Bigg\|^2\displaybreak[0]\nonumber\\
\overset{(e_5)}{\leq} \quad&\mathbb{E}\frac{6\eta_w^2C^2}{1-q}\sum_{r=0}^{k-1} q^{k-r}\nonumber\\
&\cdot\Bigg\|\sum\limits_{j=1}^{N}(g_{\mathbf{w}}(\mathbf{w}_j(r),\mathbf{v}_j(r+1))-\nabla_{\mathbf{w}}F_j(\mathbf{w}_j(r),\mathbf{v}_j(r+1)))\Bigg\|^2\nonumber\allowdisplaybreaks\\
&\quad+ \mathbb{E}\frac{6\eta_w^2C^2}{1-q}\sum_{r=0}^{k-1} q^{k-r}\Bigg\|\sum\limits_{j=1}^{N}\nabla_{\mathbf{w}}F_j(\mathbf{w}_j(r),\mathbf{v}_j(r+1))\Bigg\|^2\displaybreak[0]\nonumber\\
\overset{(e_6)}{\leq} \quad&\frac{6\eta_w^2C^2N^2}{(1-q)^2}\sigma_1^2+\frac{18\eta_w^2C^2N^2}{(1-q)^2}\varsigma^2\nonumber\\
&\quad+\frac{18\eta_w^2C^2L^2N}{1-q}\sum_{r=0}^{k-1}q^{k-r}\sum_{j=1}^N\mathbb{E}\|\mathbf{w}_j(r)-\bar{\mathbf{w}}(r)\|^2\nonumber\displaybreak[0]\\
&\quad+\frac{18\eta_w^2C^2N^2}{1-q}\sum_{r=0}^{k-1}q^{k-r}\mathbb{E}\|\nabla_{\mathbf{w}} L(\bar{\mathbf{w}}(r),\{\mathbf{v}_j(r+1)\}_{j=1}^N)\|^2,
\end{align}
where $(e_1)$  is due to the inequality $\|a-b\|^2\leq 2\|a\|^2+2\|b\|^2$; $(e_2)$ holds according to Lemma \ref{lemma_bound_Phi}. W.l.o.g., we assume that the initial term $\mathbf{w}_i(0), \forall i$ is small enough and can be neglected;   
$(e_3)$ follows $C:=2\sqrt{2}\cdot\frac{1+p^{-N}}{1-p^{N}}$ and $q:=(1-p^{N})^{1/N}$;
$(e_4)$ is due to $\|a+b\|^2\leq \|a\|^2+\|b\|^2+2ab$; and $(e_5)$ is the standard mathematical manipulation by leveraging the following inequality, i.e., for any $q\in(0,1)$ and non-negative sequence $\{\chi(r)\}_{r=0}^{k-1}$, it holds \citep{assran2019stochastic}
    $\sum_{k=1}^{K-1}\sum_{r=0}^{k-1} q^{k-r}\chi(r)\leq \frac{1}{1-q}\sum_{r=0}^{K-1}\chi(r)$;
and $(e_6)$ holds due to Assumption \ref{assumption-gradient}, the inequality
\begin{align*}
    \sum_{k=1}^{K-1} q^k\sum_{r=0}^{k-1} q^{k-r}\chi(r)\leq \sum_{k=1}^{K-1}\sum_{r=0}^{k-1} q^{2(t-r)}\chi(r)\leq \frac{1}{1-q^2}\sum_{r=0}^{K-1}\chi(r),
\end{align*}
and the fact that
\begin{align*}
    &\frac{1}{N}\sum_{i=1}^N\mathbb{E}\|\nabla_{\mathbf{w}} L_i(\mathbf{w}_i(t-1), \mathbf{v}_i(t))\|^2\allowdisplaybreaks\\
    \leq\quad& \frac{1}{N}\sum_{i=1}^N\mathbb{E}\|\nabla_{\mathbf{w}} L_i(\mathbf{w}_i(t-1), \mathbf{v}_i(t))\nonumber\\
    &\quad-\nabla_{\mathbf{w}} L_i(\bar{\mathbf{w}}(t-1), \mathbf{v}_i(t))+\nabla_{\mathbf{w}} L_i(\bar{\mathbf{w}}(t-1), \mathbf{v}_i(t))\allowdisplaybreaks\\
    &\quad-\nabla_{\mathbf{w}} L(\bar{\mathbf{w}}(t-1), \{\mathbf{v}_i(t)\}_{i=1}^N)+\nabla_{\mathbf{w}} L(\bar{\mathbf{w}}(t-1), \{\mathbf{v}_i(t)\}_{i=1}^N)\|^2\allowdisplaybreaks\\
    &\leq \underset{\text{Lipschitz continuous gradient in Assumption \ref{assumption-lipschitz}}}{\underbrace{\frac{3}{N}\sum_{i=1}^N\mathbb{E}\|\nabla_{\mathbf{w}} L_i(\mathbf{w}_i(t-1), \mathbf{v}_i(t))-\nabla_{\mathbf{w}} L_i(\bar{\mathbf{w}}(t-1)\|^2}}\allowdisplaybreaks\\
   &\quad+\underset{\text{Bounded global variability in Assumption \ref{assumption:global-var}}}{\underbrace{\frac{3}{N}\sum_{i=1}^N\mathbb{E}\|\nabla_{\mathbf{w}} L_i(\bar{\mathbf{w}}(t-1), \mathbf{v}_i(t))}}\nonumber\nonumber\\
   &\underset{\text{Bounded global variability in Assumption \ref{assumption:global-var}}}{\underbrace{\quad-\nabla_{\mathbf{w}} L(\bar{\mathbf{w}}(t-1), \{\mathbf{v}_i(t)\}_{i=1}^N)\|^2}}\allowdisplaybreaks\\
   &\quad+\frac{3}{N}\sum_{i=1}^N\mathbb{E}\|\nabla_{\mathbf{w}} L(\bar{\mathbf{w}}(t-1), \{\mathbf{v}_i(t)\}_{i=1}^N)\|^2\allowdisplaybreaks\\
    \leq\quad& \frac{3L^2}{N}\sum_{i=1}^N\mathbb{E}\|\mathbf{w}_i(t)-\bar{\mathbf{w}}(t)\|^2+3\varsigma^2\nonumber\\
    &\quad+3\mathbb{E}\|\nabla_{\mathbf{w}} L(\bar{\mathbf{w}}(t-1), \{\mathbf{v}_i(t)\}_{i=1}^N)\|^2.
\end{align*}
This completes the proof.
\end{proof}

Rearrange the order of each term in \eqref{eq:global_iter3} and let $\max(L\eta_w, \eta_v\tau L(1+36\tau^2))\leq 1$, we have
\begin{align} \label{eq:gradient_2}
&\frac{\eta_v\tau}{2}\left\|\nabla_{\mathbf{v}}L(\bar{\mathbf{w}}(t),\{\mathbf{v}_i(t)\}_{i=1}^N)\right\|^2\nonumber\\
&\quad+\frac{\eta_w}{2}\left\|\nabla_{\mathbf{w}}L(\bar{\mathbf{w}}(t),\{\mathbf{v}_i(t+1)\}_{i=1}^N)\right\|^2\nonumber\\ \displaybreak[0]
\leq \quad&\mathbb{E}[L(\bar{\mathbf{w}}(t),\{\mathbf{v}_i(t)\}_{i=1}^N)]-\mathbb{E} [L(\bar{\mathbf{w}}(t+1),\{\mathbf{v}_i(t+1)\}_{i=1}^N)] \nonumber\\ \displaybreak[0]
&\quad+ {\frac{(2\eta_v\tau+\eta_w) L^2}{2N}\sum_{i=1}^N\mathbb{E}\left\|\bar{\mathbf{w}}(t)-\mathbf{w}_i(t)\right\|^2}+\frac{\eta_w^2L}{2N}\sigma_1^2\nonumber\\
&\quad+\eta_v^3L^2(18\tau^3-15\tau^2-3\tau)\sigma_2^2+\frac{\eta_v^2\tau L}{2}\sigma_2^2\nonumber\\
&\quad+\underset{(1):(1)+(2)\leq 0 }{\underbrace{\frac{\eta_v^2\tau L-\eta_v}{2N\tau}\sum_{i=1}^N\mathbb{E}\left\|\sum_{s=0}^{\tau-1} g_{\mathbf{v}}(\mathbf{w}_i(t),\mathbf{v}_i(t,s))\right\|^2}}\nonumber\\
&\underset{(2):(1)+(2)\leq 0 }{\underbrace{\quad+\frac{18\eta_v^3L^2(\tau^3-\tau^2)}{N}\sum_{i=1}^N\mathbb{E}\left\|g_{\mathbf{v}}(\mathbf{w}_i(t), \mathbf{v}_i(t))\right\|^2}}\nonumber\allowdisplaybreaks\\
&\quad +\underset{\leq 0 }{\underbrace{\frac{\eta_w^2L-\eta_w}{2N^2}\mathbb{E}\left\|\sum_{i=1}^N g_{\mathbf{w}}(\mathbf{w}_i(t),\mathbf{v}_i(t+1))\right\|^2}}\nonumber\allowdisplaybreaks\\
\leq\quad& \mathbb{E}[L(\bar{\mathbf{w}}(t),\{\mathbf{v}_i(t)\}_{i=1}^N)]-\mathbb{E} [L(\bar{\mathbf{w}}(t+1),\{\mathbf{v}_i(t+1)\}_{i=1}^N)] \nonumber\\ \displaybreak[0]
&\quad+ {\frac{(2\eta_v\tau+\eta_w) L^2}{2N}\sum_{i=1}^N\mathbb{E}\left\|\bar{\mathbf{w}}(t)-\mathbf{w}_i(t)\right\|^2}+\frac{\eta_w^2L}{2N}\sigma_1^2\nonumber\\
&\quad+\eta_v^3L^2(18\tau^3-15\tau^2-3\tau)\sigma_2^2+\frac{\eta_v^2\tau L}{2}\sigma_2^2.
\end{align}

According to Lemma \ref{lemma:consensus_error}, we have the following inequality
\begin{align}
&\sum_{k=0}^{K-1}\sum\limits_{i=1}^{N}\mathbb{E}\|\mathbf{w}_i(t)-\bar{\mathbf{w}}(t)\|^2  \nonumber\allowdisplaybreaks\\
\leq\quad &\sum\limits_{i=1}^{N}\mathbb{E}\|\mathbf{w}_i(0)-\bar{\mathbf{w}}(0)\|^2\nonumber\\
&\quad+\sum_{k=1}^{K-1}\Bigg(\frac{18N^2\eta_w^2C^2L^2}{1-q}\sum_{r=0}^{k-1}q^{k-r}\sum_{i=1}^N\mathbb{E}\|\mathbf{w}_i(r)-\bar{\mathbf{w}}(r)\|^2\nonumber\allowdisplaybreaks\\
&\quad+\frac{18\eta_w^2C^2N^3}{1-q}\sum_{r=0}^{k-1}q^{k-r}\mathbb{E}\|\nabla_{\mathbf{w}} L(\bar{\mathbf{w}}(r),\{\mathbf{v}_i(r+1)\}_{i=1}^N)\|^2\nonumber\\
&\quad+\frac{6\eta_w^2C^2N^3}{(1-q)^2}\sigma_1^2+\frac{18\eta_w^2C^2N^3}{(1-q)^2}\varsigma^2\Bigg)\displaybreak[0]\nonumber\allowdisplaybreaks\\
\leq\quad& \Bigg(\frac{18N^2\eta_w^2C^2L^2}{(1-q)^2}\sum_{k=0}^{K-1}\sum_{i=1}^N\mathbb{E}\|\mathbf{w}_i(t)-\bar{\mathbf{w}}(t)\|^2 \nonumber\allowdisplaybreaks\\
&\quad+\frac{18\eta_w^2C^2N^3}{(1-q)^2}\sum_{k=0}^{K-1}\mathbb{E}\|\nabla_{\mathbf{w}} L(\bar{\mathbf{w}}(t),\{\mathbf{v}_i(t+1)\}_{i=1}^N)\|^2\nonumber\\
&\quad+\frac{6K\eta_w^2C^2N^3}{(1-q)^2}\sigma_1^2+\frac{18K\eta_w^2C^2N^3}{(1-q)^2}\varsigma^2\Bigg)\displaybreak[0]\nonumber\allowdisplaybreaks\\
\leq\quad&\frac{18\eta_w^2C^2N^3}{(1-q)^2-18N^2\eta_w^2C^2L^2}\sum_{k=0}^{K-1}\mathbb{E}\|\nabla_{\mathbf{w}} L(\bar{\mathbf{w}}(t),\{\mathbf{v}_i(t+1)\}_{i=1}^N)\|^2\nonumber\\
&\quad+\frac{6K\eta_w^2C^2N^3}{(1-q)^2-18N^2\eta_w^2C^2L^2}\sigma_1^2\nonumber\allowdisplaybreaks\\
&\quad+\frac{18K\eta_w^2C^2N^3}{(1-q)^2-18N^2\eta_w^2C^2L^2}\varsigma^2,
\end{align}
where the second inequality holds due to the initialization such that $\sum_{i=1}^N\mathbb{E}\|\mathbf{w}_i(0)-\bar{\mathbf{w}}(t)\|^2=0.$ 
Summing the recursion in \eqref{eq:gradient_2} from round $0$ to round $K-1$  yields
\begin{align}
\sum\limits_{k=0}^{K-1}& \frac{\eta_v\tau}{2}\left\|\nabla_{\mathbf{v}}L(\bar{\mathbf{w}}(t),\{\mathbf{v}_i(t)\}_{i=1}^N)\right\|^2\nonumber\\
&\quad+\frac{\eta_w}{2}\left\|\nabla_{\mathbf{w}}L(\bar{\mathbf{w}}(t),\{\mathbf{v}_i(t+1)\}_{i=1}^N)\right\|^2\nonumber\\
\leq\quad &\mathbb{E}[L(\bar{\mathbf{w}}(0),\{\mathbf{v}_i(0)\}_{i=1}^N)]-\mathbb{E} [L(\bar{\mathbf{w}}(t),\{\mathbf{v}_i(t)\}_{i=1}^N)]\nonumber\\
&\quad + {\frac{(2\eta_v\tau+\eta_w) L^2}{2N}\sum_{k=0}^{K-1}\sum_{i=1}^N\mathbb{E}\left\|\bar{\mathbf{w}}(t)-\mathbf{w}_i(t)\right\|^2}+\frac{K\eta_w^2L}{2N}\sigma_1^2\nonumber\allowdisplaybreaks\\
&\quad+K\eta_v^3L^2(18\tau^3-15\tau^2-3\tau)\sigma_2^2+\frac{K\eta_v^2\tau L}{2}\sigma_2^2\nonumber\\
\leq\quad &\mathbb{E}[L(\bar{\mathbf{w}}(0),\{\mathbf{v}_i(0)\}_{i=1}^N)]-\mathbb{E} [L(\bar{\mathbf{w}}(t),\{\mathbf{v}_i(t)\}_{i=1}^N)]\nonumber\allowdisplaybreaks\\
&\quad+\frac{K\eta_w^2L}{2N}\sigma_1^2+K\eta_v^3L^2(18\tau^3-15\tau^2-3\tau)\sigma_2^2+\frac{K\eta_v^2\tau L}{2}\sigma_2^2\nonumber\displaybreak[0]\\
&\quad+\frac{9\eta_w^2C^2N^2L^2(2\eta_v\tau+\eta_w)}{(1-q)^2-18N^2\eta_w^2C^2L^2}\nonumber\\
&\quad\cdot\sum_{k=0}^{K-1}\mathbb{E}\|\nabla_{\mathbf{w}} L(\bar{\mathbf{w}}(t),\{\mathbf{v}_i(t)\}_{i=1}^N)\|^2\nonumber\displaybreak[0]\\
&\quad+\frac{3K\eta_w^2C^2N^2L^2(2\eta_v\tau+\eta_w)}{(1-q)^2-18N^2\eta_w^2C^2L^2}\sigma_1^2\nonumber\\
&\quad+\frac{9K\eta_w^2C^2N^2L^2(2\eta_v\tau+\eta_w)}{(1-q)^2-18N^2\eta_w^2C^2L^2}\varsigma^2.
\end{align}


Let $\eta_w\leq \min\left(1/L,\frac{NL^2}{2L^2+2},\frac{1-q}{3\sqrt{2}CLN}\right)$, we obtain $(1-q)^2-18N^2\eta_w^2C^2L^2\geq 18 C^2N^3L^2(2\eta_v\tau+\eta_w).$

Hence we have
\begin{align}
&\sum\limits_{t=0}^{K-1} \Bigg(\frac{\eta_v\tau}{4}\left\|\nabla_{\mathbf{v}}L(\bar{\mathbf{w}}(t),\{\mathbf{v}_i(t)\}_{i=1}^N)\right\|^2\nonumber\\
&\quad+\frac{\eta_w}{4}\left\|\nabla_{\mathbf{w}}L(\bar{\mathbf{w}}(t),\{\mathbf{v}_i(t+1)\}_{i=1}^N)\right\|^2\Bigg)\nonumber\\
&\quad+\frac{\eta_w}{4N}\sum_{t=0}^{K-1}\sum_{i=1}^N\mathbb{E}\|\bar{\mathbf{w}}(t)-\mathbf{w}_i(t)\|^2 \nonumber\\
\leq\quad&\sum\limits_{t=0}^{K-1} \Bigg(\frac{\eta_v\tau}{2}\left\|\nabla_{\mathbf{v}}L(\bar{\mathbf{w}}(t),\{\mathbf{v}_i(t)\}_{i=1}^N)\right\|^2\nonumber\\
&\quad+\frac{\eta_w}{4}\left\|\nabla_{\mathbf{w}}L(\bar{\mathbf{w}}(t),\{\mathbf{v}_i(t+1)\}_{i=1}^N)\right\|^2\Bigg)\nonumber\\
&\quad+\frac{\eta_w}{2N}\sum_{k=0}^{K-1}\sum_{i=1}^N\mathbb{E}\|\bar{\mathbf{w}}(t)-\mathbf{w}_i(t)\|^2 \nonumber\\
\leq\quad&\mathbb{E}[L(\bar{\mathbf{w}}(0),\{\mathbf{v}_i(0)\}_{i=1}^N)]-\mathbb{E} [L(\bar{\mathbf{w}}(t),\{\mathbf{v}_i(t)\}_{i=1}^N)]\nonumber\\
&\quad+\frac{K\eta_w^2L}{2N}\sigma_1^2+K\eta_v^3L^2(18\tau^3-15\tau^2-3\tau)\sigma_2^2\nonumber\displaybreak[0]\\
&\quad+\frac{K\eta_v^2\tau L}{2}\sigma_2^2+\frac{K\eta_w^2}{6N}\left(1+\frac{1}{L^2}\right)\sigma_2^2\nonumber\\
&\quad+\frac{K\eta_w^2}{2N}\left(1+\frac{1}{L^2}\right)\varsigma^2.
\end{align}
Dividing both sides by $\eta_w K/4$, we obtain
\begin{align}
&\frac{1}{K}\sum_{k=0}^{K-1}\mathbb{E}[M(t)]\nonumber\\
=\quad&\frac{1}{K}\sum\limits_{k=0}^{K-1} \frac{\eta_v\tau}{\eta_w}\mathbb{E}\left\|\nabla_{\mathbf{v}}L(\bar{\mathbf{w}}(t),\{\mathbf{v}_i(t)\}_{i=1}^N)\right\|^2\nonumber\\
&\quad+\mathbb{E}\left\|\nabla_{\mathbf{w}}L(\bar{\mathbf{w}}(t),\{\mathbf{v}_i(t+1)\}_{i=1}^N)\right\|^2\nonumber\allowdisplaybreaks\\
&\quad+\frac{1}{N}\sum_{i=1}^N\mathbb{E}\|\bar{\mathbf{w}}(t)-\mathbf{w}_i(t)\|^2\nonumber\\
\leq\quad& \frac{4\mathbb{E}[L(\bar{\mathbf{w}}(0),\{\mathbf{v}_i(0)\}_{i=1}^N)]-4\mathbb{E} [L(\bar{\mathbf{w}}(t),\{\mathbf{v}_i(t)\}_{i=1}^N)]}{K\eta_w}\nonumber\displaybreak[0]\\
&\quad+\frac{2\eta_w L}{N}\sigma_1^2+\frac{12\eta_v^3L^2}{\eta_w}(\tau-1)(6\tau^2-\tau)\sigma_1^2\nonumber\\
&\quad+\frac{2\eta_v^2\tau L}{\eta_w}\sigma_2^2+\frac{2\eta_w}{3 N}\left(1+\frac{1}{L^2}\right)\sigma_2^2\nonumber\\
&\quad+\frac{2\eta_w}{ N}\left(1+\frac{1}{L^2}\right)\varsigma^2\nonumber\allowdisplaybreaks\\
\leq\quad&\frac{4f\left(\bar{\mathbf{w}}(0),\{\mathbf{v}_i(0)\}_{i=1}^N\right)-4f\left({\mathbf{w}}^*,\{\mathbf{v}_i^*\}_{i=1}^N\right)}{K\eta_w}\nonumber\displaybreak[0]\\
&\quad+\frac{2\eta_w L}{N}\sigma_1^2+\frac{12\eta_v^3L^2}{\eta_w}(\tau-1)(6\tau^2-\tau)\sigma_1^2\nonumber\\
&\quad+\frac{2\eta_v^2\tau L}{\eta_w}\sigma_2^2+\frac{2\eta_w}{3 N}\left(1+\frac{1}{L^2}\right)\sigma_1^2\nonumber\\
&\quad+\frac{2\eta_w}{ N}\left(1+\frac{1}{L^2}\right)\varsigma^2.
\end{align}
This completes the proof of Theorem \ref{thm:loss_convergence}.

\section{Algorithm supplement}
This is a variation of Alg.~\ref{alg:PE-MA1}. 
\begin{algorithm}[h] 
\caption{PE-MA: Parallel update version} 
\label{alg:PE-MA2}
\begin{algorithmic}[1]
\STATE \textbf{Initialize:} $N$ agents; frozen pretrained model parameters $\pmb{\theta}_{\text{freeze}}$; initial shared adapter $w^0$; specific adapters $\{v_i^0\}_{i=1}^N$; local training epochs $\tau$; local datasets $\{\mathbb{D}_i\}_{i=1}^N$; mixing coefficient $\mu$; learning rates $\eta_w$, $\eta_v$
\STATE \textbf{Output:} $\{v_i^T\}_{i=1}^{N}, \{w_i^T\}_{i=1}^{N}$
\FOR{each round $t = 0, 1, \dots, K-1$}
    \STATE \textbf{// Local Training Phase}
    \FOR{each local epoch $q = 0, 1, \dots, \tau-1$}
        \FOR{each agent $i = 1, 2, \dots, N$}
            \FOR{each sample $(x, y) \in \mathbb{D}_i$}
                \STATE $\hat{y}_1 \leftarrow f_i(x; \pmb{\theta}_{\text{freeze}}, w_i(t,q))$ \hfill \textit{// via shared-adapter}
                \STATE $\hat{y}_2 \leftarrow f_i(x; \pmb{\theta}_{\text{freeze}}, v_i(t,q))$ \hfill \textit{// via personalized-adapter}
                \STATE $L_i(x,y) \leftarrow \ell_i\left(\mu \cdot \hat{y}_2 + (1-\mu) \cdot \hat{y}_1, y\right)$
            \ENDFOR
        \ENDFOR
        \STATE Update personalized adapter: $\mathbf{v}_{i}(t,q+1) \leftarrow \mathbf{v}_{i}(t, q) - \eta_v g_{\mathbf{v}}(\mathbf{w}_{i}(t,q), \mathbf{v}_{i}(t,q))$
        \STATE Update shared adapter: $\mathbf{w}_{i}(t,q+1) \leftarrow \mathbf{w}_{i}(t, q) - \eta_w g_{\mathbf{w}}(\mathbf{w}_{i}(t,q), \mathbf{v}_{i}(t,q))$
    \ENDFOR
    \STATE Set $\mathbf{v}_i(t+1,0)=\mathbf{v}_i(t,\tau)$
    \STATE \textbf{// Communication Phase}
    \FOR{each agent $i = 1, 2, \dots, N$}
        \STATE Broadcast $\mathbf{w}_{i}(t,\tau)$ to neighbors $\mathcal{N}_i$
        \STATE Aggregate shared adapter: $\mathbf{w}_{i}(t+1,0) \leftarrow  \sum_{j \in \mathcal{N}_i} \mathbf{w}_{j}(t,\tau)P_{ij}$
    \ENDFOR
\ENDFOR
\end{algorithmic}
\end{algorithm}

\end{document}